\newtheorem{res}{Result}[section]
\newtheorem{lemma}{Lemma}
\newtheorem{corollary}{Corollary}
\newtheorem{proposition}{Proposition}
\theoremstyle{definition}
\newtheorem{example}{Example}
\begin{document}


\renewcommand{\baselinestretch}{2}

\markright{ \hbox{\footnotesize\rm 
}\hfill\\[-13pt]
\hbox{\footnotesize\rm
}\hfill }

\markboth{\hfill{\footnotesize\rm FIRSTNAME1 LASTNAME1 AND FIRSTNAME2 LASTNAME2} \hfill}
{\hfill {\footnotesize\rm Estimation of MNAR Data with IV} \hfill}

\renewcommand{\thefootnote}{}
$\ $\par


\fontsize{12}{14pt plus.8pt minus .6pt}\selectfont \vspace{0.8pc}

\centerline{\large\bf Semiparametric Estimation with Data Missing}
\vspace{2pt} \centerline{\large\bf Not at Random Using an Instrumental Variable}
\vspace{.4cm} \centerline{ BaoLuo Sun{\normalsize $^1$}, Lan Liu{\normalsize $^1$}, Wang Miao{\normalsize $^{1,4}$}, Kathleen Wirth{\normalsize $^{2,3}$},} 
\centerline{James Robins{\normalsize $^{1,2}$} and  Eric J. Tchetgen Tchetgen{\normalsize $^{1,2}$}}
\vspace{.4cm} 

\centerline{\it
Departments of Biostatistics{\normalsize $^1$}, Epidemiology{\normalsize $^2$} and Immunology }

\centerline{\it and Infectious Diseases{\normalsize $^3$}, Harvard T.H. Chan School of Public Health. }

\centerline{\it Beijing International Center for Mathematical Research{\normalsize $^4$}, Peking University.} \vspace{.55cm} \fontsize{9}{11.5pt plus.8pt minus
.6pt}\selectfont


\begin{quotation}
\noindent {\it Abstract:}
Missing data occur frequently in empirical studies in health and social sciences, and can compromise our ability to obtain valid inference. An outcome is said to be missing not at random (MNAR) if, conditional on the observed variables, the missing data mechanism still depends on the unobserved outcome. In such settings, identification is generally not possible without imposing additional assumptions. Identification is sometimes possible, however, if an instrumental variable (IV) is observed for all subjects which satisfies the exclusion restriction that the IV affects the missingness process without directly influencing the outcome. In this paper, we provide necessary and sufficient conditions for nonparametric identification of the full data distribution under MNAR with the aid of an IV. In addition, we give sufficient identification conditions that are more straightforward to verify in practice. For inference, we focus on estimation of a population outcome mean, for which we develop a suite of semiparametric estimators that extend methods previously developed for data missing at random. Specifically, we propose a novel doubly robust estimator of the mean of an outcome subject to MNAR. For illustration, the methods are used to account for selection bias induced by HIV testing refusal in the evaluation of HIV seroprevalence in Mochudi, Botswana, using interviewer characteristics such as gender, age and years of experience as IVs.\par

\vspace{9pt}
\noindent {\it Key words and phrases:}
Instrumental variable, Missing not at random, Inverse probability weighting, Doubly robust.
\par
\end{quotation}\par

\def\thefigure{\arabic{figure}}
\def\thetable{\arabic{table}}

\renewcommand{\theequation}{\thesection.\arabic{equation}}

\fontsize{12}{14pt plus.8pt minus .6pt}\selectfont

\setcounter{section}{1} 
\setcounter{equation}{0} 
\noindent {\bf 1. Introduction}

\lhead[\footnotesize\thepage\fancyplain{}\leftmark]{}\rhead[]{\fancyplain{}\rightmark\footnotesize\thepage}

Selection bias is a major problem in health and social sciences, and is said to be present  in an empirical study if features of the underlying population of primary interest are entangled with features of the selection process not of scientific interest. Selection bias can occur in practice due to incomplete data, if the observed sample is not representative of the underlying population. While various ad hoc methods exist to adjust for missing data, such methods may be subject to bias unless under fairly strong assumptions. For example, complete-case analysis is easy to implement and is routinely used in practice. However, complete-case analysis can be biased when the outcome is not missing completely at random (MCAR) \citep{rubin1}. Progress can still be made if data are missing at random (MAR), such that the missing data mechanism is independent of unobserved variables conditional on observed variables. Principled methods for handling MAR data abound, including likelihood-based procedures \citep{rubin1, horton2}, multiple imputation \citep{rubin2, kenward2, horton,schafer}, inverse probability weighting \citep{jamie2,tsiatis, linglingli} and doubly robust estimation \citep{rotni3,lip,robins,robi,neu,tsiatis,eric}.

The MAR assumption is strictly not testable in a nonparametric model without an additional assumption \citep{dgill, potthoff} and is often untenable. An outcome is said to be missing not at random (MNAR) if it is neither MCAR nor MAR, such that conditional on the observed variables, the missingness process depends on the unobserved variables \citep{rubin1}. Identification is generally not available under MNAR without an additional assumption \citep{rit}. A possible approach is to make sufficient parametric assumptions \citep{rubin1, roy,wu} about the full data distribution for identification. However, this approach can fail even with commonly used fully parametric models \citep{miao2014normal,wang2014instrumental}. Alternative strategies for MNAR include positing instead sufficiently stringent modeling restrictions on a model for the missing data process \citep{rot} or conducting sensitivity analysis and constructing bounds  \citep{betancur,kenward,robins, vans}. {A framework for identification and semiparametric inference was recently proposed by }\citet{miao2015identification} and \citet{miao2016varieties}, {building on earlier work by }\citet{d2010new}, \citet{wang2014instrumental} and \citet{zhao2015semiparametric}, {under the assumption that a shadow variable is fully observed which is associated with the outcome prone to missingness but independent of the missingness process conditional on covariates and the possibly unobserved outcome.}
Another common identification approach involves leveraging an instrumental variable (IV) \citep{man, winship}. Heckman's framework \citep{heckman1,heckman2} is perhaps the most common IV approach used primarily in economics and other social sciences to account for outcome MNAR. A valid IV is known to satisfy the following conditions:
\begin{description}
\item[(i)] the IV is not directly related to the outcome in the underlying population, conditional on a set of fully observed covariates, and
\item[(ii)] the IV is associated with the missingness mechanism conditional on the fully observed covariates.
\end{description}
Therefore a valid IV must predict a person's propensity to have an observed outcome, without directly influencing the outcome. 

In principle, one can use a valid IV to obtain a nonparametric test of the MAR assumption. However access to an IV does not generally point identify the joint distribution of the full data nor its functionals. {Heckman's selection model consists of an outcome model that is associated with the selection process through correlated latent variables included in both models} \citep{heckman1}. It is generally not identifiable without an assumption of bivariate normal latent error in defining the model \citep{wool}. Estimation using Heckman-type selection models may be sensitive to these parametric assumptions \citep{winship, puhani}, although there has been significant work towards relaxing some of the assumptions \citep{man, newey90, das, newey09}. An alternative sufficient identification condition was considered by \citet{erickath} which involves restricting the functional form of the selection bias function due to non-response on the additive, multiplicative or odds ratio scale. However, their approach for estimation is fully parametric and may be sensitive to bias due to model misspecification. Therefore a more robust approach is warranted. 

In this paper, we develop a general framework for nonparametric identification of selection models based on an IV. We describe necessary and sufficient conditions for identifiability of the full data distribution with a valid IV. For inference we focus on estimation of an outcome mean, although the proposed methods are easy to adapt to other functionals. {We develop semiparametric approaches including inverse probability weighting (IPW) and outcome regression (OR) that extend analogous methods previously developed for missing at random (MAR) settings, and introduce a novel doubly robust (DR) estimation approach.} The consistency of each estimator relies on correctly specified models for different parts of the data generating model. {We note that IPW in MNAR via calibration weighting} \citep{deville2000generalized,kott2006using,chang2008using} {has previously been proposed to account for sample nonresponse in survey design settings, and typically requires matching of weighted estimates to population totals for benchmark variables. Besides assuming a correctly specified model for nonresponse, identification in such settings is made possible by availability of known or estimated population totals, an assumption we do not require.} Extensive simulation studies are used to investigate the finite sample properties of proposed estimators. For illustration, the methods are used to account for selection bias induced by HIV testing refusal in the evaluation of HIV seroprevalence in Mochudi, Botswana, using interviewer characteristics including gender, age and years of experience as IVs. All proofs are relegated to a Supplemental Appendix.

\setcounter{section}{2} 
\setcounter{equation}{0} 
\noindent {\bf 2. Notation and Assumptions}

Suppose that one has observed $n$ independent and identically distributed observations {$(X,Y,R,Z)$ with fully observed covariates $X$ and $R$ is the indicator of whether the person's outcome is observed. $Y$ is observed if $R=1$ and $Y=Y^{\ast}$ otherwise, where $Y^{\ast}$ denotes missing outcome value.} The variable $Z$ is a fully observed IV that satisfies assumptions (i) and (ii) formalized below. In the evaluation of HIV prevalence in Mochudi, $X$  includes all demographic and behavioral variables collected for all persons in the sample, while HIV status $Y$ may be missing for individuals who failed to be tested, i.e. with $R=0$. Let  $\tilde{\pi} (X,Z) =\Pr(R=1|X,Z)$ denote the propensity score for the missingness mechanism given $(X,Z)$.  As a valid IV, we will assume that $Z$ satisfies the following assumptions.

\begin{description}
  \item[(IV.1)] Exclusion restriction: $$P_{Y|X,Z}(y|x,z)=P_{Y|X}(y|x)\quad \text{$\forall$ $x,z$.}$$  
  \item[(IV.2)] IV relevance: $$\tilde{\pi} (x,z)  \neq \tilde{\pi} (x,z^{\prime}) \quad \text{$\forall$ $x$.}$$ 
\end{description}  

Exclusion restriction (IV.1) states that the IV and the outcome are conditionally independent given $X$ in the underlying population, that is the IV does not have a direct effect on the outcome, which places restrictions on the full data law. IV relevance requires that the IV remains associated with the missingness mechanism even after conditioning on $X$. In spite of (IV.2), (IV.1) implies that $Z$ cannot reduce the dependence between $R$ and $Y$, therefore under MNAR $\pi (x,y,z)=P(R=1|x,y,z)$ remains a function of $y$ even after conditioning on $(x,z)$. In addition, (IV.1) and (IV.2) imply that under MNAR the IV remains relevant in $\pi (x,y,z)$ conditional on $(x,y)$. Both of these facts will be used repeatedly throughout. $\tilde{\pi} (x,z)$ is typically referred to as the propensity score for the missingness process, and we shall likewise refer to $\pi (x,y,z)$ as the extended propensity score.

\setcounter{section}{3} 
\setcounter{equation}{0} 
\noindent {\bf 3. Identification}

Although (IV.1) reduces the number of unknown parameters in the full data law, identification is still only available for a subset of all possible full data laws. As an illustration, consider the case of binary outcome and IV. For simplicity and without loss of generality, we omit covariates $X$. Assumption (IV.1) implies $P(z,y)=P(y)P(z)$. We are only able to identify the quantities  $P(z,y|R=1)$, $P(z|R=0)$, $P(R=1)$ from the observed data. These quantities are functions of the unknown parameters: $P(Z=1)$, $P(Y=1)$, and $P(R=1|z,y)$. So we have six unknown parameters, but only five available independent equations, one for each identified parameter given above. As a result, the full data law is not identifiable, and $P(Y=1)$ is not identifiable.

The IV model becomes identifiable once one sufficiently restricts the class of models for the joint distribution of $(Z,Y,R)$. Let $\mathcal{P_\theta}(R,Z,Y)$, $\mathcal{P_\eta}(Z)$ and $\mathcal{P_\xi}(Y)$ denote  the collection of such candidates for $P(R=1|z,y)$, $P(z)$ and $P(y)$, respectively. Members of the sets are indexed by parameters $\theta$, $\eta$ and $\xi$, which may be infinite dimensional. The identifiability of the model is determined by the relationship between its members. 

\begin{res}\label{thm:para}
Suppose that Assumption (IV.1) holds, then the joint distribution $P(z,y,r)$ is identifiable if and only if $\mathcal{P_\theta}(R,Z,Y)$ and  $\mathcal{P_\xi}(Y)$ satisfy the following condition: 
for any pair of candidates $$\left\{P_{\theta_1}(R=1|z,y), P_{\xi_1}(y)\right\} \text{ and }\left\{P_{\theta_2}(R=1|z,y), P_{\xi_2}(y)\right\}$$ in the model the following inequality holds:
\begin{align}\label{eq:inequality}
\frac{P_{\theta_1}(R=1|z,y)}{P_{\theta_2}(R=1|z,y)}\neq \frac{P_{\xi_2}(y)}{P_{\xi_1}(y)}
\end{align}
for at least one value of $z$ and $y$.
\end{res}

Result \ref{thm:para} presents a necessary and sufficient condition for identifiability of the joint distribution of the full data, and thus a sufficient condition for identifiability of its functionals. We have the following corollary which provides a more convenient condition to verify.

\begin{corollary}\label{cor:para}
Suppose that Assumption (IV.1) holds, then the joint distribution $P(z,y,r)$ is identifiable if \text{ $\forall$ $\theta_1, \theta_2$ such that $\theta_1 \neq \theta_2$}, the ratio $ P_{\theta_1}(R=1|z,y) / P_{\theta_2}(R=1|z,y)$ is either a constant or varies with $z$.
\end{corollary}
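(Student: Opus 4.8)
The plan is to deduce the corollary directly from Result~\ref{thm:para} by an argument by contradiction. Result~\ref{thm:para} tells us that $P(z,y,r)$ fails to be identifiable exactly when there is a \emph{genuinely distinct} pair of candidates $(\theta_1,\xi_1)\neq(\theta_2,\xi_2)$ in the model for which equality holds in \eqref{eq:inequality} at \emph{every} $(z,y)$, i.e.
\begin{equation}\label{eq:starprop}
\frac{P_{\theta_1}(R=1|z,y)}{P_{\theta_2}(R=1|z,y)}=\frac{P_{\xi_2}(y)}{P_{\xi_1}(y)}\qquad\text{for all }z,y.
\end{equation}
So I would assume, for contradiction, that such a pair exists and show that the hypothesis of the corollary forces $(\theta_1,\xi_1)=(\theta_2,\xi_2)$. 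The one structural fact I would lean on throughout is that the right-hand side of \eqref{eq:starprop} is free of $z$, whereas the corollary controls exactly how the left-hand side can depend on $z$.

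First I would dispose of the case $\theta_1=\theta_2$: then the left-hand side of \eqref{eq:starprop} is identically $1$, forcing $P_{\xi_2}(y)=P_{\xi_1}(y)$ for all $y$ and hence $\xi_1=\xi_2$, contradicting distinctness. This reduces matters to $\theta_1\neq\theta_2$, which is precisely the regime covered by the corollary, so the ratio $P_{\theta_1}(R=1|z,y)/P_{\theta_2}(R=1|z,y)$ is either constant or varies with $z$. If it varies with $z$, then for some $y$ its value differs across two values of $z$; but by \eqref{eq:starprop} that value equals the $z$-free quantity $P_{\xi_2}(y)/P_{\xi_1}(y)$ at both of those $z$, a contradiction.

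The remaining, and to my mind the only genuinely delicate, sub-case is when the ratio is a constant $c$. Naively one might worry that a constant ratio could always be absorbed by a suitable pair $(\xi_1,\xi_2)$, so the key step is to pin $c$ down. From \eqref{eq:starprop} we get $P_{\xi_2}(y)=c\,P_{\xi_1}(y)$ for all $y$; aggregating over $y$ and using $\int P_{\xi_1}(y)\,dy=\int P_{\xi_2}(y)\,dy=1$ (a sum in the discrete case) forces $c=1$. Hence $P_{\theta_1}(R=1|z,y)=P_{\theta_2}(R=1|z,y)$ everywhere, giving $\theta_1=\theta_2$ and $\xi_1=\xi_2$, contradicting $\theta_1\neq\theta_2$. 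Since every case yields a contradiction, no non-identifying pair can exist, and identifiability follows from Result~\ref{thm:para}. I expect the normalization argument ruling out a nontrivial constant ratio to be the crux; everything else is a direct confrontation of the corollary's hypothesis with the $z$-independence of the outcome-side likelihood ratio in \eqref{eq:starprop}.
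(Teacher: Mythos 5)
Your argument is correct, and it supplies something the paper itself omits: the Supplemental Appendix proves Result 1, the Examples and the Propositions, but gives no separate proof of Corollary 1, treating it as immediate. Your derivation is exactly the natural route --- invoke Result 1, observe that the right-hand side $P_{\xi_2}(y)/P_{\xi_1}(y)$ of the violating identity is free of $z$ so a $z$-varying ratio of extended propensity scores can never match it, and then kill the constant case by integrating $P_{\xi_2}(y)=c\,P_{\xi_1}(y)$ over $y$ to force $c=1$. That normalization step is indeed the crux, and it is the same device the authors deploy inside their proof of Example 2 (where they conclude that a constant offset in the density ratio must vanish because $g(y)$ is a ratio of two probability densities), so your proof is faithful to the paper's implicit reasoning. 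The only point worth flagging is cosmetic: concluding $\theta_1=\theta_2$ from $P_{\theta_1}(R=1|z,y)=P_{\theta_2}(R=1|z,y)$ everywhere (and likewise for $\xi$) presupposes an identifiable parametrization; if one does not wish to assume that, the argument still closes because the two candidate joint laws then coincide as distributions, which is all that identifiability of $P(z,y,r)$ requires.
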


Although Corollary \ref{cor:para} provides a sufficient condition for identification of the joint distribution of the full data, it may be used to establish identifiability in parametric or semi-parametric models which we illustrate in a number of examples. Let $\mathcal{M}_{\text{\tiny{IV}}}$ denote the collection of models with valid IV. 

\begin{example}\label{ex:ign}
Suppose both $Y$ and $Z$ are binary and consider the model $\mathcal{M}_{\text{\tiny{1}}} \cap \mathcal{M}_{\text{\tiny{IV}}}$, where
\begin{align*}
\mathcal{M}_{\text{\tiny{1}}}=\bigl\{ &P(R=1|Z,Y)=\text{expit} \left[\theta_0 + \theta_1 Z + \theta_2 Y + \theta_3 ZY\right]: \\
&(\theta_0,\theta_1,\theta_2,\theta_3) \in \mathbb{R}^4 \bigr\},
\end{align*}
which includes the saturated model, i.e. the nonparametric model. It is shown in the Supplemental Appendix that this model does not satisfy inequality (\ref{eq:inequality}) and therefore the joint distribution of $(Z,Y,R)$ cannot be identified without reducing the dimension of $\theta$. In contrast, Corollary \ref{cor:para} confirms that the smaller model $\mathcal{M}_{\text{\tiny{2}}} \cap \mathcal{M}_{\text{\tiny{IV}}}$ is identified, where 
$$
\mathcal{M}_{\text{\tiny{2}}} =\left\{ P(R=1|Z,Y)=\text{expit}\left[\theta_0 + \theta_1 Z + \theta_2 Y\right] : (\theta_0,\theta_1,\theta_2) \in \mathbb{R}^3 \right\},
$$that is, the IV model becomes identified upon imposing a no-interaction assumption between $Y$ and $Z$ in the logistic model for the extended propensity score. An analogous result holds for possibly continuous $Y$ and $Z$, assuming the following logistic generalized additive model for the extended propensity score.
\end{example}

\begin{example}\label{ex:seplog}
The model $\mathcal{M}_{\text{\tiny{SL}}} \cap \mathcal{M}_{\text{\tiny{IV}}}$ is identified for the separable logistic missing data mechanism
\begin{eqnarray}\label{mis:lsepr}
\mathcal{M}_{\text{\tiny{SL}}}=\{P(R=1|Z,Y) = \text{expit}[q(Z) + h(Y)]\},
\end{eqnarray}
where  $q(\cdot)$ and $h(\cdot)$ are unknown functions differentiable with respect to $Z$ and $Y$ respectively. 
\end{example}

\setcounter{section}{4} 
\setcounter{equation}{0} 
\noindent {\bf 4. Estimation and Inference}

In this section, we consider estimation and inference under a variety of semiparametric IV models shown to satisfy Result (\ref{thm:para}). We denote the collection of such identifiable models as $\mathcal{M}^{\ast}_{\text{\tiny{IV}}}$. Although in principle the identification results given in the previous section allow for nonparametric inference, in practice estimation often involves specifying parametric models, at least for parts of the full data law. This will generally be the case when a large number of covariates $X$ or $Z$ are present and therefore the curse of dimensionality precludes the use of nonparametric regression to model conditional densities or their mean functions required for IV inferences \citep{rit}. As a measure of departure from MAR, we introduce the selection bias function
\begin{align}\label{bias}
\eta(x,y,z)= \log \left\{ \frac{P(R=1| x,y,z)}{P(R=0| x,y,z)}  /    \frac{P(R=1| x,Y=0,z)}{P(R=0| x,Y=0,z)} \right\}.
\end{align}
$\eta$ quantifies the degree of association between $Y$ and $R$ given $(X,Z)$ on the log odds ratio scale. Under MAR, $P(R=1| x,y,z)=P(R=1| x,z)$ and $\eta=0$. The conditional density $P(r,y|x,z)$ can be represented in terms of the selection bias function $\eta$ and baseline densities as follows:
\begin{align} \label{joint1}
P(r,y|x,z)= C(x,z)^{-1}\exp[(r-1)\eta(x,y,z)] \times\\ \notag 
f(y|R=1,x,z)P(r|Y=0,x,z),
\end{align}
where $C(x,z)<+\infty$ for all $(x,z)$ is a normalizing constant \citep{RePEc:bla:biomet:v:63:y:2007:i:2:p:413-421,10.2307/27798905}. Therefore,
\begin{align} \label{joint}
P(r,y,z|x)=& C(x,z)^{-1}\exp[(r-1)\eta(x,y,z)] \times \\ \notag
&f(y|R=1,x,z)P(r|Y=0,x,z)q(z|x),
\end{align}
where $q(z|x)$ models the density of the IV conditional on the covariates. As we show below, the selection bias function $\eta$ in (\ref{joint}) will need to be correctly specified for any of the three proposed estimators to be consistent. This is significant in that for a given observed data law and  selection bias function $\eta$, one can identify a unique full data law that marginalizes to the observed data law \citep{jamie5}. Absent of restrictions such as Assumption (IV.1), the selection bias function is not identifiable from the observed data law since different values of $\eta$ can lead to the same observed data likelihood. In order to address this identification problem, sensitivity analysis has been previously proposed whereby one conducts inferences assuming $\eta$ is completely known and repeats the analysis upon varying the assumed value of $\eta$ \citep{robins,rotni1,rotni2,rotni3,vans}. A different approach is possible with an IV since $\eta$ is in principle identified under Result \ref{thm:para} and therefore needs not be assumed known. As previously mentioned, it is impossible to disentangle the full data law from the selection process without evaluating $\eta$. Therefore, we will proceed by assuming that although a priori unknown, one can correctly specify a model $\eta(\zeta)$ for the selection bias function which can be estimated from the observed data. To fix ideas, throughout we suppose that one aims to make inferences about the population mean $\phi=E(Y)$, although the proposed methods are easy to extend to other full data functionals.

IPW estimation requires a correctly specified model for the extended propensity score $\pi(x,y,z)$, which under logit link function is
\begin{align} \label{ipwparam}
\pi(x,y,z)= 1/\{1+\exp[-\eta(x,y,z)-\lambda(x,z)]\},
\end{align}
where $\eta(x,y,z)$ is the selection bias function given in (\ref{bias}) and $\lambda(x,z)=\log\{ P(R=1|Y=0,x,z) / P(R=0|Y=0,x,z)\}$ is a person's baseline conditional odds of observing complete data. Although in principle, one could use any well-defined link function for the propensity score, we simplify the presentation by only considering the logit case. We consider IPW estimation in the model $\mathcal{M}_{\text{\tiny{IPW}}} \subset \mathcal{M}^{\ast}_{\text{\tiny{IV}}}$, where
\begin{align*} 
\mathcal{M}_{\text{\tiny{IPW}}}= \biggl\{&P(r,y,z|x) : \eta(x,y,z; \zeta), P(r|Y=0,x,z; \omega) ,q(z|x; \xi); \\
&\text{unrestricted }P(y|R=1,x,z) \biggr\},
\end{align*} 
and the parametric models indexed by parameters $\zeta$, $\omega$ and $\xi$ respectively are assumed to be correctly specified, while the baseline outcome model $f(y|R=1,x,z)$ in (\ref{joint}) is  unrestricted.

Outcome regression-based estimation under MAR requires a model for $f(y|R=1,x,z)=f(y|x,z)$, which can be estimated based on complete-cases. However, under MNAR $f(y|R=1,X,Z)\neq f(y|R=0,X,Z)$ and estimation of $f(y|R=0,x,z)$ is not readily available since outcome is not observed for this subpopulation. However, note that by (\ref{joint1})
\begin{align} \label{orparam}
f(y|r,x,z)&=\frac{P(y,r|x,z)}{\int P(y,r|x,z) \mathrm{d}\mu(y)} =\frac{\exp[-(1-r)\eta(x,y,z)]f(y|R=1,x,z)}{E\{\exp[-(1-r)\eta(x,Y,z)]|R=1,x,z\}},
\end{align}
and therefore the density $f(y|R=0,x,z)$ can be expressed in terms of the selection bias function $\eta$ and baseline outcome model $f(y|R=1,x,z)$ for complete-cases. We consider OR estimation in the model $\mathcal{M}_{\text{\tiny{OR}}} \subset \mathcal{M}^{\ast}_{\text{\tiny{IV}}}$ where
\begin{align*} 
\mathcal{M}_{\text{\tiny{OR}}}= \biggl\{&P(r,y,z|x) : \eta(x,y,z; \zeta), P(y|R=1,x,z; \theta),q(z|x; \xi); \\
&\text{unrestricted }\lambda(x,z) \biggr\},
\end{align*} 
which allows the baseline missing data model $P(r|Y=0,x,z)$ to remain unrestricted while the models indexed by parameters $\zeta$, $\theta$ and $\xi$ are assumed to be correctly specified.

We also propose a doubly robust estimator which is consistent in the union model $\mathcal{M}_{\text{\tiny{IPW}}} \cup \mathcal{M}_{\text{\tiny{OR}}}$, that is provided the models $\eta(x,y,z; \zeta)$ and $q(z|x;\xi)$ are correctly specified, and either $P(r|Y=0,x,z; \omega)$ or $P(y|R=1,x,z; \theta)$, but not necessarily both, are correctly specified, thus giving the analyst two chances, instead of one, to obtain valid inferences.

Throughout the next section, we let $\hat{\theta}_{\text{\tiny{MLE}}}$ denote the complete-case maximum likelihood estimator which maximizes the conditional log-likelihood $\sum_{i: R_i=1}\log P(y_i|x_i,z_i;\theta)$, and let $\hat{\xi}_{\text{\tiny{MLE}}}$ denote the maximum likelihood estimator which maximizes the log-likelihood $\sum_{i=1}^{n} \log q(z_i|x_i; \xi)$. Let $\mathbb{P}_n$ denote the empirical measure $\mathbb{P}_n f(O) = n^{-1} \sum_{i=1}^n f(O_i)$. 

\noindent {\bf 4.1 Inverse probability weighted estimation under $\mathcal{M}_{\text{\tiny{IPW}}}$}

IPW is a well-known approach to acount for missing data under MAR. In this section we describe an analogous approach under MNAR. Standard approaches for estimating the propensity score under MAR such as  maximum likelihood of a logistic regression model of the propensity score cannot be used here since the extended propensity score $\pi(x,y,z)$ depends on $Y$ which is only observed when $R=1$. Therefore, we adopt an alternative method of moments approach which resolves this difficulty. Under the model $\mathcal{M}_{\text{\tiny{IPW}}}$, $(\hat{\zeta},\hat{\omega})$ solves
 \begin{align} \label{IPW}
 \mathbb{P}_n \left\{ \boldsymbol{U}^{\text{\tiny{IPW}}}\left(\hat{\xi}_{\text{\tiny{MLE}}},\hat{\zeta},\hat{\omega} \right) \right\}= \boldsymbol{0}
 \end{align}
 where $\boldsymbol{U}^{\text{\tiny{IPW}}}(\cdot)$ consists of the estimating functions
\begin{align} 
&\left[ \frac{R}{\pi\left(\hat{\zeta},\hat{\omega}\right)}-1 \right] \boldsymbol{h}_1(X,Z) \label{ipw2} \\
&\frac{R}{\pi\left(\hat{\zeta},\hat{\omega}\right)}\boldsymbol{g}(X,Y)\left\{\boldsymbol{h}_2(Z,X)-E\left[\boldsymbol{h}_2(Z,X)\middle\vert X;\hat{\xi}_{\text{\tiny{MLE}}} \right]\right\}. \label{ipw3} 
\end{align}
Functions (\ref{ipw2}) and (\ref{ipw3}) estimate unknown parameters in $P(r|Y=0,x,z; \omega)$ and $\eta(x,y,z;\zeta)$ respectively, { where $\boldsymbol{h}_1$ is an user-specified function of $(x,z)$ with same dimension as $\omega$, while $\boldsymbol{g}$ and $\boldsymbol{h}_2$ are user-specified functions of $(x,y)$ and $(x,z)$ respectively with same dimension as $\zeta$.} Specific choices of $(\boldsymbol{h}_1,\boldsymbol{h}_2,\boldsymbol{g})$ can generally affect efficiency but not consistency. To illustrate IPW estimation, suppose that $Z$ is binary and consider the following logistic model for the extended propensity score
$$
\text{logit } \pi(X,Y,Z)=\omega_0+\omega_1 X+\omega_2 XZ + \zeta Y, \quad \eta=(\omega_0,\omega_1,\omega_2,\zeta).
$$
Thus, $\eta(x,y,z;\zeta)=\zeta y$ and $\text{logit } P(R=1|Y=0,x,z; \omega)=\omega_0+\omega_1 x+\omega_2 xz$. Suppose further that $q(Z=1|x;\xi)=B(x;\xi)=\left\{ 1+ \exp \left[ -(1,x)^T\xi \right]\right\}^{-1}$. We obtain $\hat{\eta}=(\hat{\zeta},\hat{\omega})$ by solving
\begin{align*}
\mathbb{P}_n &\left\{ \left[ \frac{R}{\pi\left(\hat{\zeta},\hat{\omega}\right)}-1 \right] (1,X,XZ)^T\right\}=0 \\  
\mathbb{P}_n  &\left\{ \frac{R}{\pi\left({\hat{\zeta},\hat{\omega}}\right)}Y\left\{Z-B(X;{\hat{\xi}_{\text{\tiny{MLE}}}})\right\}\right\}=0.\\
\end{align*}

\begin{proposition}
Consider a model $\mathcal{M}_{\text{\tiny{IPW}}}\subset \mathcal{M}^{\ast}_{\text{\tiny{IV}}}$ which satisfies Result (\ref{thm:para}). Then the IPW estimator
\begin{align}
\hat{\phi}^{\text{\tiny{IPW}}} = \mathbb{P}_n  \left\{\frac{R Y}{\pi(\hat{\eta})} \right\} 
\end{align}
is consistent and asymptotically normal as $n\rightarrow\infty$, that is
$$
\sqrt{n}\left( \hat{\phi}^{\text{\tiny{IPW}}} - \phi_0 \right)\xrightarrow[]{d} N \left( 0 , V_{IPW} \right) 
$$
in model $\mathcal{M}_{\text{\tiny{IPW}}}$ under suitable regularity conditions, where $V_{IPW}$ is given in the Supplemental Appendix.
\end{proposition}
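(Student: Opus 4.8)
The plan is to cast the problem as a joint $Z$-estimation problem for the stacked parameter $(\xi,\zeta,\omega,\phi)$ and then appeal to the standard large-sample theory for $Z$-estimators. First I would assemble the full estimating function by stacking the score for the IV density, $S_{\xi}(Z,X;\xi)=\partial_{\xi}\log q(Z\mid X;\xi)$, on top of the two blocks comprising $\boldsymbol{U}^{\text{\tiny{IPW}}}$ in (\ref{ipw2})--(\ref{ipw3}), and finally the mean-defining function $RY/\pi(\zeta,\omega)-\phi$. The estimator $(\hat{\xi}_{\text{\tiny{MLE}}},\hat{\zeta},\hat{\omega},\hat{\phi}^{\text{\tiny{IPW}}})$ is then exactly the root of $\mathbb{P}_n$ applied to this stacked vector. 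Treating $\hat{\xi}_{\text{\tiny{MLE}}}$ as part of the jointly solved system matters because $\xi$ enters the second block through $E[\boldsymbol{h}_2(Z,X)\mid X;\xi]$, so the uncertainty in its estimation must be propagated into the limiting variance of $\hat{\phi}^{\text{\tiny{IPW}}}$.

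The crux of consistency is Fisher consistency: I would verify that each stacked moment has mean zero at the truth $(\xi_0,\zeta_0,\omega_0,\phi_0)$. The score block vanishes by the usual likelihood identity when $q(z\mid x;\xi_0)$ is correctly specified. For (\ref{ipw2}), (\ref{ipw3}) and the mean equation, the common device is that at the true extended propensity score $E[R/\pi_0\mid X,Y,Z]=1$, whence $E\{[R/\pi_0-1]\boldsymbol{h}_1(X,Z)\}=0$ and $E[RY/\pi_0]=E(Y)=\phi_0$ follow by iterated expectations. The one genuinely non-routine verification is the second block: weighting by $R/\pi_0$ reduces it to $E\{\boldsymbol{g}(X,Y)[\boldsymbol{h}_2(Z,X)-E(\boldsymbol{h}_2\mid X)]\}$, and here the exclusion restriction (IV.1), namely $Y\perp Z\mid X$, factors the inner conditional expectation as $E[\boldsymbol{g}(X,Y)\mid X]\,E\{\boldsymbol{h}_2(Z,X)-E(\boldsymbol{h}_2\mid X)\mid X\}=0$. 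This is precisely where the IV supplies leverage that is otherwise unavailable under MNAR.

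With unbiasedness established, global uniqueness of the root at the truth is furnished by identification of the model, i.e. the standing hypothesis that $\mathcal{M}_{\text{\tiny{IPW}}}\subset\mathcal{M}^{\ast}_{\text{\tiny{IV}}}$ satisfies Result~\ref{thm:para}. I would then invoke the standard consistency and asymptotic-normality theorem for $Z$-estimators, checking smoothness of $\pi$ in its parameters, dominated moments, and a nonsingular population Jacobian. Writing $\Psi$ for the stacked estimating function, $A=E[\partial\Psi/\partial(\xi,\zeta,\omega,\phi)]$ and $B=E[\Psi\Psi^{\top}]$, the sandwich matrix $A^{-1}B(A^{-1})^{\top}$ restricted to the $\phi$-coordinate yields the stated limit $N(0,V_{IPW})$, with the explicit expression relegated to the Supplemental Appendix.

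I expect the main obstacle to be showing that $A$ is nonsingular, the local-identification requirement underpinning asymptotic normality. The delicate sub-block is the derivative of the second moment with respect to $\zeta$, whose non-degeneracy hinges on the fact that, even after conditioning on $(X,Y)$, the IV $Z$ remains relevant in $\pi(x,y,z)$ --- a consequence of combining (IV.1) with IV relevance (IV.2) noted in Section~2. Were (IV.2) to fail, this block would vanish, the selection-bias parameter $\zeta$ would not be locally identified, and $V_{IPW}$ would be undefined; hence the argument must track how relevance propagates through the Jacobian. The remaining regularity and dominance conditions, together with the mechanical evaluation of the sandwich blocks, are routine and are where the closed form of $V_{IPW}$ is obtained.
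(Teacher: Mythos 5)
Your proposal is correct and follows essentially the same route as the paper's proof: stack the score for $\xi$ with $\boldsymbol{U}^{\text{\tiny{IPW}}}$ and the mean-defining equation, verify each block is mean zero at the truth by iterated expectations with the exclusion restriction (IV.1) handling the $\boldsymbol{g}(X,Y)\{\boldsymbol{h}_2 - E[\boldsymbol{h}_2\mid X]\}$ block, and then invoke standard GMM/M-estimation theory with the sandwich variance. Your additional remarks on the nonsingularity of the Jacobian and the role of (IV.2) elaborate on what the paper subsumes under "suitable regularity conditions," but do not change the argument.
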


\noindent {\bf 4.2 Outcome regression estimation under $\mathcal{M}_{\text{\tiny{OR}}}$}

Next, consider inferences under a parametric model for the outcome, i.e. under model $\mathcal{M}_{\text{\tiny{OR}}}$. 
Using the parametrization given in (\ref{orparam}), consider the parametric model
$$
P(y|R=0,x,z; \zeta, \hat{\theta}_{\text{\tiny{MLE}}})=\frac{\exp\left[-\eta(x,y,z;\zeta)\right]f\left(y|R=1,x,z; \hat{\theta}_{\text{\tiny{MLE}}}\right)}{E\left\{\exp[-\eta(x,Y,z; \zeta)]|R=1,x,z;\hat{\theta}_{\text{\tiny{MLE}}}\right\}},
$$
and the estimator $\tilde{\zeta}$ solving
\begin{align} \label{oreqn}
&\mathbb{P}_n \left\{\boldsymbol{U}^{\text{\tiny{OR}}}\left(\tilde{\zeta},\hat{\xi}_{\text{\tiny{MLE}}},\hat{\theta}_{\text{\tiny{MLE}}},\boldsymbol{q}_1, \boldsymbol{q}_2\right)\right\} \nonumber \\
&=\mathbb{P}_n \left\{\boldsymbol{q}_1(X,Z)-E\left[\boldsymbol{q}_1(X,Z)\middle\vert X;\hat{\xi}_{\text{\tiny{MLE}}}\right] \right\} \nonumber \times \\ \nonumber
&\phantom{=}\left\{ (1-R)  E\left(\boldsymbol{q}_2(X,Y)\middle\vert R=0,X,Z;\tilde{\zeta},\hat{\theta}_{\text{\tiny{MLE}}}\right)           +R\boldsymbol{q}_2(X,Y) \right\} \nonumber \\
&=\boldsymbol{0},
\end{align}
where $\boldsymbol{q}_1, \boldsymbol{q}_2$ are vectors of the same dimensions as $\zeta$.

\begin{proposition}
Consider a model $\mathcal{M}_{\text{\tiny{OR}}}\subset \mathcal{M}^{\ast}_{\text{\tiny{IV}}}$ which satisfies Result (\ref{thm:para}). Then the outcome regression estimator
\begin{align}
\hat{\phi}^{\text{OR}} = \mathbb{P}_n  \left\{R Y + (1-R)E\left(Y\middle\vert R=0,X,Z;\tilde{\zeta},\hat{\theta}_{\text{\tiny{MLE}}}\right)\right\} 
\end{align}
is consistent and asymptotically normal as $n\rightarrow\infty$, that is
$$
\sqrt{n}\left( \hat{\phi}^{\text{OR}} - \phi_0 \right)\xrightarrow[]{d} N \left( 0 , V_{OR} \right)
$$
in model $\mathcal{M}_{\text{\tiny{OR}}}$ under suitable regularity conditions.
\end{proposition}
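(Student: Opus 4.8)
The plan is to view $\bigl(\tilde{\zeta},\hat{\xi}_{\text{\tiny{MLE}}},\hat{\theta}_{\text{\tiny{MLE}}},\hat{\phi}^{\text{OR}}\bigr)$ as the joint solution of a single stacked system of estimating equations and to invoke standard $Z$-estimation theory. I would collect into one estimating function $\boldsymbol{\Psi}(O;\zeta,\xi,\theta,\phi)$ four blocks: the full-data score $\partial_\xi\log q(Z\mid X;\xi)$ for the IV density, the complete-case score $R\,\partial_\theta\log P(Y\mid X,Z;\theta)$ for the outcome model, the OR integrand in (\ref{oreqn}), and the functional-defining residual $RY+(1-R)E(Y\mid R=0,X,Z;\zeta,\theta)-\phi$. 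Because the system is block-triangular, the $\xi$- and $\theta$-scores not involving $(\zeta,\phi)$, the nuisance estimators are ordinary MLEs that are $\sqrt{n}$-consistent under correct specification in $\mathcal{M}_{\text{\tiny{OR}}}$, and I would propagate their estimation error through the remaining blocks rather than treat them as known.

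First I would establish that $\boldsymbol{\Psi}$ is unbiased at the truth, where the exclusion restriction (IV.1) is the crucial ingredient. Taking the conditional expectation of the inner factor of (\ref{oreqn}) given $(X,Z)$ at $(\zeta_0,\theta_0)$,
\begin{align*}
&E\bigl[(1-R)E(\boldsymbol{q}_2(X,Y)\mid R=0,X,Z;\zeta_0,\theta_0)+R\,\boldsymbol{q}_2(X,Y)\,\big|\,X,Z\bigr]\\
&\quad=E\{\boldsymbol{q}_2(X,Y)\mid X,Z\}=E\{\boldsymbol{q}_2(X,Y)\mid X\},
\end{align*}
where the representation (\ref{orparam}) guarantees that $E(\boldsymbol{q}_2\mid R=0,X,Z;\zeta_0,\theta_0)$ equals the true conditional mean and the final equality is (IV.1). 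Since the outer factor $\boldsymbol{q}_1(X,Z)-E[\boldsymbol{q}_1(X,Z)\mid X;\xi_0]$ has conditional mean zero given $X$, iterated expectation yields $E[\boldsymbol{U}^{\text{\tiny{OR}}}(\zeta_0,\xi_0,\theta_0)]=\boldsymbol{0}$. The same projection argument, with $\boldsymbol{q}_2$ replaced by the identity, shows $E[(1-R)E(Y\mid R=0,X,Z;\zeta_0,\theta_0)]=E[(1-R)Y]$, so the functional block has mean $\phi_0=E(Y)$ at the truth.

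Next I would prove consistency by verifying the hypotheses of a standard $Z$-estimator consistency theorem: a unique population zero (whose verification I defer to the identification argument below) together with uniform convergence of the empirical estimating function. The MLE blocks deliver $\hat{\xi}_{\text{\tiny{MLE}}}\to\xi_0$ and $\hat{\theta}_{\text{\tiny{MLE}}}\to\theta_0$, and under the assumed regularity (dominated, smooth-in-parameter estimating functions forming a Glivenko--Cantelli class) the empirical OR moment converges uniformly to its population limit, whose zero at $\zeta_0$ is furnished by the unbiasedness above; consistency of $\hat{\phi}^{\text{OR}}$ then follows by plug-in and the continuous mapping theorem. For asymptotic normality I would Taylor-expand $\mathbb{P}_n\boldsymbol{\Psi}(\tilde{\zeta},\hat{\xi}_{\text{\tiny{MLE}}},\hat{\theta}_{\text{\tiny{MLE}}},\hat{\phi}^{\text{OR}})=\boldsymbol{0}$ about the truth, obtaining $\sqrt{n}(\hat{\phi}^{\text{OR}}-\phi_0)=-\boldsymbol{e}^{\top}\{E[\partial_\beta\boldsymbol{\Psi}]\}^{-1}\sqrt{n}\,\mathbb{P}_n\boldsymbol{\Psi}(\beta_0)+o_P(1)$, where $\beta=(\zeta,\xi,\theta,\phi)$ and $\boldsymbol{e}$ selects the $\phi$-coordinate; the block-triangular Jacobian lets me read off the influence function of $\hat{\phi}^{\text{OR}}$ with explicit corrections for estimating $\zeta,\xi,\theta$, and $V_{OR}$ is the variance of that influence function.

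The main obstacle is local identification: showing that $\zeta_0$ is the \emph{unique} root of the population OR moment and, equivalently, that the Jacobian block $\partial_\zeta E[\boldsymbol{U}^{\text{\tiny{OR}}}]$ is nonsingular at $\zeta_0$. Unbiasedness at $\zeta_0$ is a direct computation, but invertibility is precisely where Result \ref{thm:para} and IV relevance (IV.2) must enter: the inequality (\ref{eq:inequality}) guarantees that no admissible alternative selection-bias function reproduces the observed-data law, and I would translate this separation into nonsingularity of the derivative matrix, choosing $\boldsymbol{q}_1,\boldsymbol{q}_2$ so that the moment varies nontrivially in $z$, which (IV.2) permits. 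Making this translation from abstract identifiability to a nondegenerate Jacobian rigorous, rather than routine, is the delicate analytic step.
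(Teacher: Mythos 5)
Your proposal is correct and follows essentially the same route as the paper: unbiasedness of the OR estimating function and of the functional block at $(\zeta_0,\theta_0,\xi_0)$ via iterated expectations and the exclusion restriction (IV.1), followed by stacking with the MLE score blocks and invoking standard M-estimation/GMM theory for consistency, asymptotic normality and the sandwich variance. The only difference is one of emphasis — you flag the uniqueness of the population root and nonsingularity of the $\zeta$-Jacobian as the delicate step, whereas the paper absorbs this into the hypothesis that $\mathcal{M}_{\text{\tiny{OR}}}\subset\mathcal{M}^{\ast}_{\text{\tiny{IV}}}$ satisfies Result \ref{thm:para} together with ``standard regularity conditions'' and the positivity condition (S1).
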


\noindent {\bf 4.3 Doubly robust estimation under $\mathcal{M}_{\text{\tiny{DR}}}$}

Estimation approaches described thus far depend on correct specification of extended propensity score for IPW and outcome model for OR. Here we describe a doubly robust estimator that remains consistent if the conditional density $q(z|x; \xi)$ is correctly specified, and either $P(y|R,X,Z; \theta)$ or $P(r|Y,X,Z; \omega)$ is correctly specified, but not necessarily both. We denote such union model $\mathcal{M}_{\text{\tiny{DR}}}=\mathcal{M}_{\text{\tiny{IPW}}}\cup\mathcal{M}_{\text{\tiny{OR}}}$. Our construction requires first obtaining the DR estimator $\hat{\zeta}_{\text{\tiny{DR}}}$ of the parameter indexing selection bias function $\eta(\zeta)$ that remains consistent in $\mathcal{M}_{\text{\tiny{DR}}}$. In this vein, let
\begin{align}
 &\boldsymbol{G}^{\text{\tiny{DR}}}\left(R,X,Y,Z; {\zeta},\omega,\hat{\theta}_{\text{\tiny{MLE}}},\boldsymbol{u}\right) \nonumber\\
 &= \frac{R}{\pi({\zeta},{\omega})} \boldsymbol{u}(X,Y)- \frac{R-\pi({\zeta},{\omega})}{\pi({\zeta},{\omega})}  E\left(\boldsymbol{u}(X,Y)\middle\vert R=0,X,Z;{\zeta},\hat{\theta}_{\text{\tiny{MLE}}}\right)  \nonumber\\
 &=\frac{R}{\pi({\zeta},{\omega})} \left\{\boldsymbol{u}(X,Y)-E\left(\boldsymbol{u}(X,Y)\middle\vert R=0,X,Z;{\zeta},\hat{\theta}_{\text{\tiny{MLE}}}\right) \right\} \nonumber\\
 &+ E\left(\boldsymbol{u}(X,Y)\middle\vert R=0,X,Z;{\zeta},\hat{\theta}_{\text{\tiny{MLE}}}\right),
\end{align}
where $\boldsymbol{u}(X,Y)$ is of the same dimensions as $\zeta$. We obtain $(\hat{\zeta}_{\text{\tiny{DR}}},\hat{\omega})$ as the solution to the estimating equation (\ref{ipw2}) combined with
\begin{align} \label{tref}
&\mathbb{P}_n  \left\{\boldsymbol{U}^{\text{\tiny{DR}}}\left( \hat{\zeta}_{\text{\tiny{DR}}},\hat{\omega},\hat{\theta}_{\text{\tiny{MLE}}},\hat{\xi}_{\text{\tiny{MLE}}},\boldsymbol{u},\boldsymbol{v} \right)\right\}\nonumber \\
&=\mathbb{P}_n \bigg\{  \left[\boldsymbol{v}(X,Z)-E\left(\boldsymbol{v}(X,Z)\middle\vert X;\hat{\xi}_{\text{\tiny{MLE}}}\right)\right] \times \nonumber \\ 
&\phantom{=}\left[ {\boldsymbol{G}}^{\text{\tiny{DR}}}\left(R,X,Y,Z;\hat{\zeta}_{\text{\tiny{DR}}},\hat{\omega},\hat{\theta}_{\text{\tiny{MLE}}},\boldsymbol{u}\right)  \right] \bigg\} \nonumber\\
&=\boldsymbol{0}.
\end{align}

\begin{proposition}
The laws in $\mathcal{M}_{\text{\tiny{DR}}}\subset \mathcal{M}^{\ast}_{\text{\tiny{IV}}}$ satisfies Result (\ref{thm:para}). Then the doubly robust estimator
\begin{align} \label{tr}
\hat{\phi}^{\text{\tiny{DR}}}=\mathbb{P}_n \left\{\boldsymbol{G}^{\text{\tiny{DR}}}\left(R,X,Y,Z,\hat{\zeta}_{\text{\tiny{DR}}},\hat{\omega},\hat{\theta}_{\text{\tiny{MLE}}},\boldsymbol{u}^{\dagger}\right)\right\}
\end{align}
where $\boldsymbol{u}^{\dagger}(X,Y)=Y$ is consistent and asymptotically normal as $n\rightarrow \infty$, that is
$$
\sqrt{n}\left( \hat{\phi}^{\text{\tiny{DR}}} - \phi_0 \right)\xrightarrow[]{d} N \left( 0 , V_{DR} \right)
$$
in the model $\mathcal{M}_{\text{\tiny{DR}}}$ under suitable regularity conditions.
\end{proposition}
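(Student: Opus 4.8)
The plan is to prove the three assertions --- consistency of $\hat{\zeta}_{\text{\tiny{DR}}}$, consistency of $\hat{\phi}^{\text{\tiny{DR}}}$, and joint asymptotic normality --- in that logical order, viewing the procedure as the solution of a stacked system of estimating equations in $(\hat{\xi}_{\text{\tiny{MLE}}},\hat{\theta}_{\text{\tiny{MLE}}},\hat{\zeta}_{\text{\tiny{DR}}},\hat{\omega},\hat{\phi}^{\text{\tiny{DR}}})$ and appealing to standard $Z$-estimation theory for the final distributional claim. Since $q(z|x;\xi)$ is correctly specified throughout $\mathcal{M}_{\text{\tiny{DR}}}$, one has $\hat{\xi}_{\text{\tiny{MLE}}}\to\xi_0$, while the complete-case estimator $\hat{\theta}_{\text{\tiny{MLE}}}$ converges to $\theta_0$ when the outcome model is correct and to some pseudo-true $\theta^{*}$ otherwise; I take these limits as inputs. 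The entire argument reduces to a single algebraic identity for the augmented map $\boldsymbol{G}^{\text{\tiny{DR}}}$, which I isolate first.

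The central step is to establish that, once $\eta(\cdot;\zeta)$ is evaluated at its true value $\zeta_0$, the conditional double robustness identity
\begin{equation} \label{eq:plan-cdr}
E\left[\boldsymbol{G}^{\text{\tiny{DR}}}(R,X,Y,Z;\zeta_0,\omega,\theta,\boldsymbol{u}) \mid X,Z\right] = E\left[\boldsymbol{u}(X,Y)\mid X,Z\right]
\end{equation}
holds whenever either $P(r|Y=0,x,z;\omega)$ or $P(y|R=1,x,z;\theta)$ is correctly specified. Writing $\boldsymbol{G}^{\text{\tiny{DR}}}$ in its augmented form $\tfrac{R}{\pi}(\boldsymbol{u}-m_0)+m_0$, where $m_0=E(\boldsymbol{u}\mid R=0,X,Z;\zeta_0,\theta)$ depends only on $(X,Z)$, the left side of (\ref{eq:plan-cdr}) equals $m_0+\int(\pi_0/\pi-1)(\boldsymbol{u}-m_0)\,f(y\mid x,z)\,\mathrm{d}\mu(y)$, with $\pi_0$ the true extended propensity score. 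If $\omega$ is correct then $\pi=\pi_0$ and the integral vanishes outright. If instead $\theta$ is correct, I exploit that $\eta$ is common to $\pi$ and $\pi_0$ so that only $\lambda$ differs; consequently $(\pi_0/\pi-1)\,f(y\mid x,z)$ factorizes as a function of $(x,z)$ alone times $\exp[-\eta_0(x,y,z)]\,f(y\mid R=1,x,z)$. By the representation (\ref{orparam}) this $\exp[-\eta_0]$-reweighting of the complete-case density is proportional to $f(y\mid R=0,x,z)$, so the integral collapses to a multiple of $E(\boldsymbol{u}\mid R=0,X,Z)-m_0$, which is zero precisely because $m_0$ is then correct. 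This coupling between the selection-bias weight inside $\pi$ and the reweighting (\ref{orparam}) that generates the missing-data regression is the heart of the argument.

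Given (\ref{eq:plan-cdr}), consistency of $\hat{\zeta}_{\text{\tiny{DR}}}$ follows from the IV structure. Iterating expectations, the population version of (\ref{tref}) at $\zeta_0$ equals $E\{[\boldsymbol{v}(X,Z)-E(\boldsymbol{v}\mid X)]\,E(\boldsymbol{u}\mid X,Z)\}$; the exclusion restriction (IV.1) forces $E(\boldsymbol{u}(X,Y)\mid X,Z)=E(\boldsymbol{u}(X,Y)\mid X)$, and correct specification of $q(z|x)$ gives $E[\boldsymbol{v}-E(\boldsymbol{v}\mid X)\mid X]=0$, so the whole moment vanishes and $\zeta_0$ solves the limiting equation in both submodels, with $\hat{\omega}$ then pinned down by (\ref{ipw2}). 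Uniqueness of this root --- hence $\hat{\zeta}_{\text{\tiny{DR}}}\to\zeta_0$ --- is inherited from the identifiability guaranteed by membership in $\mathcal{M}^{\ast}_{\text{\tiny{IV}}}$ via Result (\ref{thm:para}). Consistency of $\hat{\phi}^{\text{\tiny{DR}}}$ in (\ref{tr}) is then immediate: taking $\boldsymbol{u}^{\dagger}(X,Y)=Y$ and applying (\ref{eq:plan-cdr}) at the probability limits $(\zeta_0,\omega^{*},\theta^{*})$, one of $\omega^{*},\theta^{*}$ being correct, yields $E[\boldsymbol{G}^{\text{\tiny{DR}}}]=E\{E(Y\mid X,Z)\}=E(Y)=\phi_0$ throughout $\mathcal{M}_{\text{\tiny{IPW}}}\cup\mathcal{M}_{\text{\tiny{OR}}}$.

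For asymptotic normality I would stack the score equations for $(\xi,\theta)$, the equations (\ref{ipw2}) and (\ref{tref}) for $(\omega,\zeta)$, and the defining equation $\hat{\phi}^{\text{\tiny{DR}}}=\mathbb{P}_n\boldsymbol{G}^{\text{\tiny{DR}}}(\boldsymbol{u}^{\dagger})$, then expand in a Taylor series about the limits and invert the stacked Jacobian to obtain $\sqrt{n}(\hat{\phi}^{\text{\tiny{DR}}}-\phi_0)\xrightarrow{d} N(0,V_{DR})$ with the usual sandwich form for $V_{DR}$. The identity (\ref{eq:plan-cdr}) does double duty here, annihilating the partial derivatives of the limiting moment with respect to whichever nuisance block is misspecified, so that the influence function, and hence $V_{DR}$, is insensitive to estimation of that block. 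I expect the main obstacle to lie entirely in establishing (\ref{eq:plan-cdr}) cleanly in the $\theta$-correct case: one must verify that no residual dependence on the misspecified $\lambda$ survives the integration, which is exactly where the $\exp[-\eta_0]$ factorization and the representation (\ref{orparam}) must be invoked with care. Everything downstream is routine $Z$-estimation bookkeeping under the stated regularity conditions.
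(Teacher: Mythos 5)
Your proposal is correct and follows essentially the same route as the paper's proof: the same conditional double-robustness identity for $\boldsymbol{G}^{\text{\tiny{DR}}}$ (with the $\omega$-correct case trivial and the $\theta$-correct case resting on the $\exp[-\eta_0]$ reweighting of $f(y|R=1,x,z)$ via (\ref{orparam})), followed by the exclusion restriction (IV.1) and correct specification of $q(z|x;\xi)$ to kill the limiting moment in (\ref{tref}), and standard stacked $Z$-estimation for the sandwich variance. The only cosmetic difference is that you state a single identity conditional on $(X,Z)$ covering both submodels, whereas the paper conditions on $(X,Y,Z)$ in the $\mathcal{M}_{\text{\tiny{IPW}}}$ case and on $(X,Z)$ in the $\mathcal{M}_{\text{\tiny{OR}}}$ case; the substance is identical.
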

The notion of doubly robust estimation was first introduced in the context of semi-parametric non-response models under MAR \citep{rotni3}, and the approach was further studied by others \citep{lip,robins,luc,neu} with theoretical underpinnings given by \citet{robi} and \citet{vander}. A doubly robust version of estimating equation (\ref{tr}) of mean outcome under MNAR was previously described by \citet{vans} who, as described earlier, assume that the selection bias function $\eta$ is known a priori within the context of a sensitivity analysis. An important contribution of the current paper is to derive a large class of DR estimators of the selection bias using an IV. To the best of our knowledge, this is the first time that a DR estimator for the mean outcome has been constructed in the context of an IV for data subject to MNAR.

\setcounter{section}{5} 
\setcounter{equation}{0} 
\noindent {\bf 5. Simulation study}

In order to investigate the finite-sample performance of  proposed estimators, we carried out a simulation study involving i.i.d. data $\left(Y,Z,X\right)$, where $X=(X_1,X_2)$. For each sample size $n=2000, 5000$, we simulated 1000 data sets as followed,
\begin{align*}
&X_1 \sim \text{Bernoulli} (p=0.4), \quad X_2 \sim \text{Bernoulli} (p=0.6) \\
& Z \sim \text{Bernoulli} \left\{p=\left[ 1+ \exp \left( -0.4-0.9X_1+0.7X_2+0.8X_1X_2\right)\right]^{-1}\right\}\\
& Y \sim \text{Bernoulli} \left\{p=\left[ 1+ \exp \left( -1.0+1.2X_1-1.5X_2\right)\right]^{-1}\right\}\\
& R \sim \text{Bernoulli} \left\{p=\left[ 1+ \exp \left(1.5 -2.5Z-0.8X_1+1.2X_2-1.8Y\right)\right]^{-1}\right\},
\end{align*}
such that $Y$ is only observed if $R=1$. Under the above data generating mechanism, $Z$ satisfies {\bf(IV.1)} and {\bf(IV.2)}, with the true value of $\phi_0=E(Y)=0.769$. The selection bias model is $\alpha(x,y,z)=\zeta y$ with true value $\zeta_0=1.8$. The model is identified since the missing data mechanism follows the separable logistic regression model described in Example \ref{ex:seplog} of Section 3. For IPW estimation, we specified the correct extended propensity score and model for $P(Z=1|X_1,X_2; \xi)$, with $h_1=(Z,X_1,ZX_1)^T$, $g=Y$ and $h_2=Z$. For OR estimation, we let $(q_1,q_2) = (Z,Y)$ in (\ref{oreqn}) and specified a saturated logistic regression for $Y$ with all 2-way and 3-way interactions included. DR estimation was carried out as described in the previous section. {While} \citet{chang2008using} {only considered a survey design setting, we note that here the IPW approach is analogous to a form of calibration weighted estimation which matches the weighted sample estimates of benchmark variables $L_{CW}=\left\{1,Z, X_1,X_2, Y\left[Z-P(Z=1|X_1,X_2)\right] \right\}$ to their estimated population totals, where the last variable in $L_{CW}$ has known population total value of zero by {\bf(IV.1)}.}

To study the performance of the proposed estimators in situations where some models may be mis-specified, we also evaluated the estimators where either the extended propensity score model or the complete-case outcome model was mis-specified by replacing them with models
\begin{align*}
P(R=1|X,Y,Z) = \text{expit} (\omega_0+\omega_1 X_1 + \omega_2 Z + \omega_3 X_1Z +\zeta Y)
\end{align*}
and
\begin{align*}
P(Y=1|R=1, X,Z)=\text{expit} (\theta_0 + \theta X_1)
\end{align*}
respectively. 

In each simulated sample, we evaluated the standard error of the estimator using the sandwich estimator. Wald 95\% confidence interval coverage rates were evaluated across 1000 simulations. Estimating equations were solved using the R package BB \citep{bb}. Figures \ref{fig:zeta} and \ref{fig:phi} present results for estimation of the selection bias parameter $\zeta_0$ and the outcome mean $\phi_0$ respectively, while Table \ref{tab:simulation} shows the empirical coverage rates.

\begin{figure}[H]
   \centering
    \caption{Boxplots of inverse probability weighted (IPW), outcome regression (OR) and doubly-robust (DR) estimators of the selection bias parameter, for which the true value $\zeta_0=1.8$ is marked by the horizontal lines.}
    
       \includegraphics[page=1,width=\textwidth]{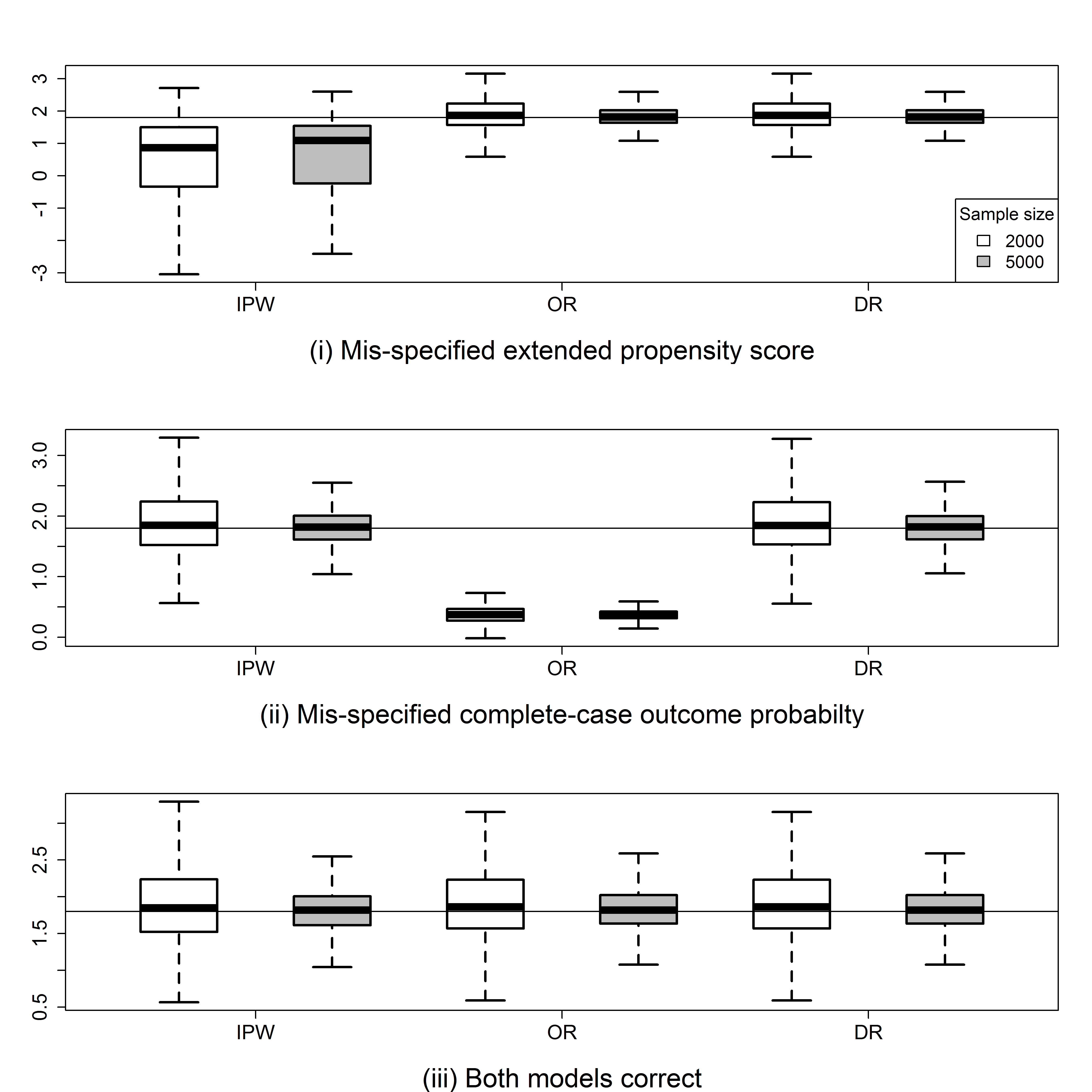} 
        \label{fig:zeta}
\end{figure}

\begin{figure}[H]

   \centering
    \caption{Boxplots of inverse probability weighted (IPW), outcome regression (OR) and doubly-robust (DR) estimators of the outcome mean, for which the true value $\phi_0=0.769$ is marked by the horizontal lines.}
   \begin{tabular}{@{}c@{\hspace{.1cm}}c@{}}
       \includegraphics[page=1,width=\textwidth]{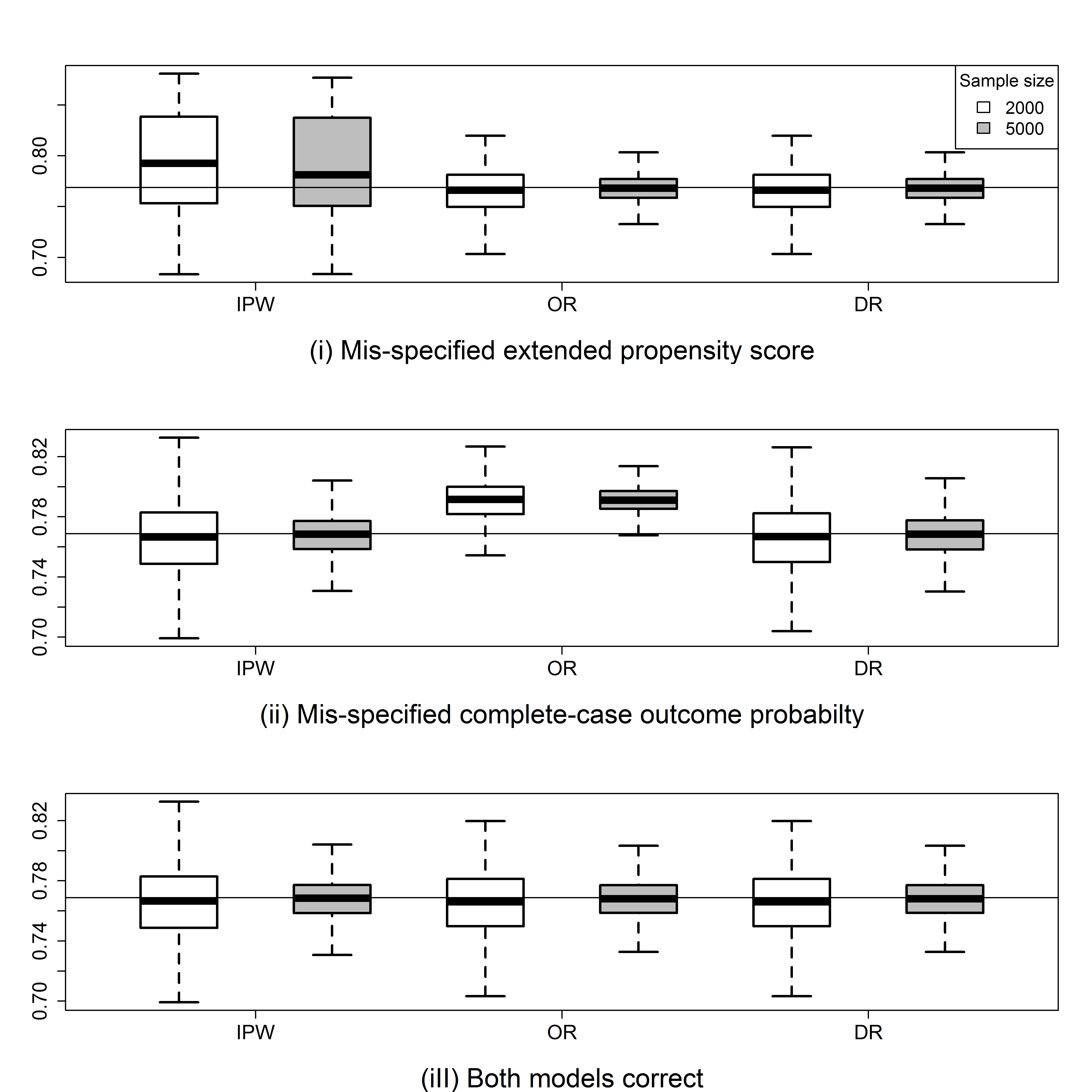} 
   \end{tabular}
 \label{fig:phi}
\end{figure}

\begin{table}[t!]
\caption{\small{Empirical coverage rates based on 95\% Wald confidence intervals under three scenarios: (i) mis-specified extended propensity score, (ii) mis-specified complete-case outcome probability and (iii) both models are correct. In each scenario, the first row presents results for $n=2000$ and the second row for $n=5000$.}}
\label{tab:simulation}\par
\vskip .2cm
\centering
\renewcommand{\arraystretch}{0.7}
\begin{tabular}{c c c c c c c c}
\hline\noalign{\smallskip}
& \multicolumn{3}{c}{$\zeta$} & \multicolumn{3}{c}{$\phi$} \\
\cmidrule(lr){2-4}\cmidrule(lr){5-7}
&IPW & OR & DR & IPW & OR & DR \\
\hline\noalign{\smallskip}
\multirow{2}{*}{(i)} & 86.4 & 95.4 & 95.4 & 81.3 & 95.2 & 95.2 \\ 
& 57.8 & 95.1 & 95.1 & 50.1 & 94.9 & 94.9 \\
\multirow{2}{*}{(ii)} & 95.0 & 0.0 & 94.4 & 95.1 & 65.6 & 95.2 \\ 
& 94.7 & 0.0 & 94.5 & 95.0 & 29.9 & 94.5 \\
\multirow{2}{*}{(iii)} & 95.0 & 95.4 & 95.4 & 95.1 & 95.2 & 95.2 \\ 
& 94.7 & 95.1 & 95.1 & 95.0 & 94.9 & 94.9 \\
\hline
\end{tabular}

\end{table}

Under correct model specification, all estimators have negligible bias for $\phi_0$ and $\zeta_0$ that diminishes with increasing sample size, with empirical coverage near the nominal 95\% level. In agreement with our theoretical results, the IPW and OR estimators are biased with poor empirical coverages when the extended propensity score or the complete-case outcome model is misspecified, respectively. The DR estimator performs well in terms of bias and coverage when either model is misspecified but the other is correct. 

\setcounter{section}{6} 
\setcounter{equation}{0} 
\noindent {\bf 6. Applications}

To illustrate the proposed IV approach, we obtained data from a household survey in Mochudi, Botswana to estimate HIV seroprevalence among adults adjusting for selective missingness of HIV test results. The data consist of 4997 adults between the ages of 16 and 64 who were contacted for the survey, out of whom 4045 (81\%) had complete information on HIV testing. Of those who did not have HIV test results $(R=0)$, 111 (2\%) agreed to participate in the HIV test but their final HIV outcomes are unknown, and 841 (17\%) refused to participate in the HIV testing component. It is likely that refusal to participate in the survey when contact is established presents a possible source of selection bias. 

Fully available individual characteristics from the survey include participant gender $(X)$. Candidate IVs include interviewer gender $(Z_1)$, age $(Z_2)$ and years of experience $(Z_3)$. These interviewer characteristics are likely to influence the response rates of individuals who were contacted for the survey, but are unlikely to directly influence an individual's HIV status, given that interviewer deployment was determined at random prior to the survey. We implemented the proposed IPW, OR and DR estimators by making use of interviewer gender, age and years of experience as IVs. For IPW estimation, the missingness propensity score is specified as a main effects only logistic regression, with the selection bias function specified as $\alpha(x,y,z)=\zeta y$ where $Y$ is HIV status. The posited missing data mechanism belongs to the separable logistic class, therefore the average HIV prevalence can be identified by Example \ref{ex:seplog}. For OR estimation, we specified the regression model
\begin{align}\label{app:out}
\text{logit } P(Y=1|R=1,X,\boldsymbol{Z}) = \theta_0 + \theta_1 X + \theta_2 Z_1 + \theta_3 Z_2 + \theta_4 Z_3. 
\end{align}
Finally, the doubly robust estimator is implemented by incorporating both models. Because more than one IV was available, estimating equations $\boldsymbol{U}^{\text{\tiny{IPW}}}$, $\boldsymbol{U}^{\text{\tiny{OR}}}$ and $\boldsymbol{U}^{\text{\tiny{DR}}}$ were solved using the generalized method of moments (GMM) package in R \citep{gmm}.  Standard errors were obtained using the proposed sandwich estimator. For comparison, we also carried out standard complete-case analysis and standard IPW estimation assuming MAR conditional on $(x,z)$ using a main effects only logistic regression to model the propensity score. Results are presented in table \ref{tab:application}.

\begin{table}[h!]
\caption{\small{Estimation for HIV seroprevalence ($\phi$) and magnitude of selection bias ($\zeta$) in Mochudi, Botswana with 95\% Wald confidence intervals.}}
\label{tab:application}\par
\vskip .2cm
\centering

\begin{tabular}{c c  c c}
\hline
{Estimator} & {$\hat{\phi}$}   &  {$\hat{\zeta}$} & {$\hat{\zeta}$ p-val}\\
\hline
CC & 0.214  (0.202, 0.227)&   - &  -  \\ 
MAR IPW& 0.213 (0.201, 0.226) & - & -  \\
IV IPW& 0.260 (0.175, 0.341)&  -1.601 (-2.992, -0.210) &  0.02 \\ 
IV OR& 0.241 (0.175, 0.307)&  -0.757 (-1.889, \phantom{-}0.376)  &  0.19\\ 
IV DR& 0.258 (0.174, 0.342)& -1.121 (-2.433, \phantom{-}0.191)  &  0.09\\
 \hline
\end{tabular}
\end{table}
IV estimates of HIV seroprevalence are $12.6-21.5\%$ higher than the crude estimate of 0.214 (95\% CI: 0.202-0.227) based on complete-cases only. Standard IPW (i.e. assuming MAR) produced similar estimates as complete-case analysis. Negative point estimates of the selection bias parameter $\zeta$ suggest that HIV-infected persons are less likely to participate in the HIV testing component of the survey, although this difference is statistically significant at $0.05$ $\alpha$-level only for IPW. The larger confidence intervals of the three IV estimators of $\phi_0$ compared to those of the CC and MAR estimators are a more accurate reflection of the amount of uncertainty involving inferences about $\phi_0$, since the CC and MAR estimators do not take into account the uncertainty about the underlying MNAR mechanism by assuming MCAR and MAR respectively, i.e. setting selection bias parameter $\zeta=0$. $\hat{\phi}^{\text{\tiny{IV IPW}}}$ and $\hat{\phi}^{\text{\tiny{IV DR}}}$ are close to each other. This comparison is useful as an informal goodness of fit test in that their similarity suggests that the missingness propensity score may be specified nearly correctly \citep{robi}. In addition, by incorporating all possible pairwise interaction terms in the outcome logistic regression model and therefore allowing it to be more flexible, the OR point estimate $\hat{\phi}^{\text{\tiny{IV OR}}}$ increases to 0.246 (95\% CI: 0.179-0.314), thus even closer to $\hat{\phi}^{\text{\tiny{IV IPW}}}$ and $\hat{\phi}^{\text{\tiny{IV DR}}}$.

\noindent {\bf 7. Conclusion}

In this paper, we have considered a pernicious form of selection bias which
can arise from outcome missing not at random. We have argued that under
fairly reasonable assumptions this problem can be made more tractable
with the aid of an IV, and proposed a general framework for establishing
identifiability of parametric, semiparametric and nonparametric models. We
have proposed IPW and OR estimators which are consistent and asymptotically
normal if the selection bias and the IV models are correctly specified,
when either the extended propensity score or the outcome regression model
is correctly specified respectively. We also constructed a DR estimator that
remains consistent if either of the two models is correct, which gives the
analyst two chances, instead of only one, to get correct inferences about
the magnitude of selection bias and the mean outcome in the underlying
population of interest.

The large sample variance of doubly robust estimators $\widehat{\zeta }_{DR}$
and $\widehat{\phi }_{DR}$ at the intersection submodel $\mathcal{M}%
_{IPW}\cap \mathcal{M}_{OR}$ where all models are correct, is completely
determined by the choice of $\boldsymbol{u}$ and $\boldsymbol{v}$ in
equation (\ref{tref}). We have characterized the set of all influence functions of regular and asymptotically linear estimators as well as the semiparametric efficient score of $\left( \zeta ,\phi \right) $ in model\ $\mathcal{M}_{np}$ that assumes that $Z$ is a valid IV, the selection bias function $\eta \left( X,Y,Z;\zeta \right)$ is correctly specified, and
the joint likelihood of $\left( Y,X,Z,R\right)$ is otherwise unrestricted. The efficient score is not generally available in closed-form, except in special cases, such as when $Z$ and $Y$ are both polytomous. The results on local efficiency results can be found in the Supplemental Appendix.

\markboth{\hfill{\footnotesize\rm FIRSTNAME1 LASTNAME1 AND FIRSTNAME2 LASTNAME2} \hfill}
{\hfill {\footnotesize\rm Identification and DR Estimation for Outcome MNAR} \hfill}

\bibhang=1.7pc
\bibsep=2pt
\fontsize{9}{14pt plus.8pt minus .6pt}\selectfont
\renewcommand\bibname{\large \bf References}
\renewcommand{\thepage}{}

\bibliographystyle{apa}
\bibliography{mnarbib}

\begin{thebibliography}{}

\bibitem[\protect\astroncite{Bickel et~al.}{1998}]{bickel1998efficient}
Bickel, P., Klaassen, C., Ritov, Y., and Wellner, J. (1998).
\newblock {\em Efficient and Adaptive Estimation for Semiparametric Models}.
\newblock Johns Hopkins series in the mathematical sciences. Springer New York.

\bibitem[\protect\astroncite{Chang and Kott}{2008}]{chang2008using}
Chang, T. and Kott, P.~S. (2008).
\newblock Using calibration weighting to adjust for nonresponse under a
  plausible model.
\newblock {\em Biometrika}, 95(3):555--571.

\bibitem[\protect\astroncite{Chauss{\'e}}{2010}]{gmm}
Chauss{\'e}, P. (2010).
\newblock Computing generalized method of moments and generalized empirical
  likelihood with {R}.
\newblock {\em Journal of Statistical Software}, 34(11):1--35.

\bibitem[\protect\astroncite{Chen}{2007}]{RePEc:bla:biomet:v:63:y:2007:i:2:p:413-421}
Chen, H.~Y. (2007).
\newblock A semiparametric odds ratio model for measuring association.
\newblock {\em Biometrics}, 63(2):413--421.

\bibitem[\protect\astroncite{Das et~al.}{2003}]{das}
Das, M., Newey, W.~K., and Vella, F. (2003).
\newblock Nonparametric estimation of sample selection models.
\newblock {\em Review of Economic Studies}, 70(1):33--58.

\bibitem[\protect\astroncite{Deville}{2000}]{deville2000generalized}
Deville, J.-C. (2000).
\newblock Generalized calibration and application to weighting for
  non-response.
\newblock In {\em COMPSTAT}, pages 65--76. Springer.

\bibitem[\protect\astroncite{D'Haultfoeuille}{2010}]{d2010new}
D'Haultfoeuille, X. (2010).
\newblock A new instrumental method for dealing with endogenous selection.
\newblock {\em Journal of Econometrics}, 154(1):1--15.

\bibitem[\protect\astroncite{Gill et~al.}{1997}]{dgill}
Gill, R.~D., van~der Laan, M.~J., and Robins, J.~M. (1997).
\newblock Coarsening at random: Characterizations, conjectures,
  counter-examples.
\newblock In Lin, D. and Fleming, T., editors, {\em Lecture Notes in
  Statistics}. Springer-Verlag.

\bibitem[\protect\astroncite{Heckman}{1979}]{heckman1}
Heckman, J.~J. (1979).
\newblock Sample selection bias as a specification error.
\newblock {\em Econometrica}, 47(1):153--161.

\bibitem[\protect\astroncite{Heckman}{1997}]{heckman2}
Heckman, J.~J. (1997).
\newblock Instrumental variables: A study of implicit behavioral assumptions
  used in making program evaluations.
\newblock {\em Journal of Human Resources}, 32(3):441--462.

\bibitem[\protect\astroncite{Horton and Laird}{1998}]{horton2}
Horton, N.~J. and Laird, N.~M. (1998).
\newblock Maximum likelihood analysis of generalized linear models with missing
  covariates.
\newblock {\em Statistical Methods in Medical Research}, 8:37--50.

\bibitem[\protect\astroncite{Horton and Lipsitz}{2001}]{horton}
Horton, N.~J. and Lipsitz, S.~R. (2001).
\newblock Multiple imputation in practice: Comparison of software packages for
  regression models with missing variables.
\newblock {\em The American Statistician}, 55(3):244--254.

\bibitem[\protect\astroncite{Kenward and Carpenter}{2007a}]{kenward2}
Kenward, M. and Carpenter, J. (2007a).
\newblock Multiple imputation: Current perspectives.
\newblock {\em Statistical Methods in Medical Research}, 16:199--218.

\bibitem[\protect\astroncite{Kenward and Carpenter}{2007b}]{kenward}
Kenward, M. and Carpenter, J. (2007b).
\newblock Sensitivity analysis after multiple imputation under missing at
  random: A weighting approach.
\newblock {\em Statistical Methods in Medical Research}, 16:259--275.

\bibitem[\protect\astroncite{Kott}{2006}]{kott2006using}
Kott, P.~S. (2006).
\newblock Using calibration weighting to adjust for nonresponse and coverage
  errors.
\newblock {\em Survey Methodology}, 32(2):133.

\bibitem[\protect\astroncite{Li et~al.}{2013}]{linglingli}
Li, L., Shen, C., Li, X., and Robins, J.~M. (2013).
\newblock On weighting approaches for missing data.
\newblock {\em Statistical Methods in Medical Research}, 22(1):14--30.

\bibitem[\protect\astroncite{Lipsitz et~al.}{1999}]{lip}
Lipsitz, S., Ibrahim, J., and Zhao, L. (1999).
\newblock A weighted estimating equation for missing covariate data with
  properties similar to maximum likelihood.
\newblock {\em Journal of the American Statistical Association}, 94:1147--1160.

\bibitem[\protect\astroncite{Little and Rubin}{2002}]{rubin1}
Little, R.~J. and Rubin, D.~B. (2002).
\newblock {\em Statistical Analysis with Missing Data}.
\newblock Wiley.

\bibitem[\protect\astroncite{Lunceford and Davidian}{2004}]{luc}
Lunceford, J. and Davidian, M. (2004).
\newblock Stratification and weighting via the propensity score in estimation
  of causal treatment effects: A comparative study.
\newblock {\em Statistics in Medicine}, 23:2937--2960.

\bibitem[\protect\astroncite{Manski}{1985}]{man}
Manski, C.~F. (1985).
\newblock Semiparametric analysis of discrete response: Asymptotic properties
  of the maximum score estimator.
\newblock {\em The Econometrics Journal}, 27(3):313--333.

\bibitem[\protect\astroncite{Miao et~al.}{2014}]{miao2014normal}
Miao, W., Ding, P., and Geng, Z. (2014).
\newblock Identifiability of normal and normal mixture models with nonignorable
  missing data.
\newblock {\em Journal of the American Statistical Association}, Submitted.

\bibitem[\protect\astroncite{Miao et~al.}{2015}]{miao2015identification}
Miao, W., Tchetgen~Tchetgen, E., and Geng, Z. (2015).
\newblock Identification and doubly robust estimation of data missing not at
  random with an ancillary variable.
\newblock {\em arXiv preprint arXiv:1509.02556}.

\bibitem[\protect\astroncite{Miao and
  Tchetgen~Tchetgen}{2016}]{miao2016varieties}
Miao, W. and Tchetgen~Tchetgen, E.~J. (2016).
\newblock On varieties of doubly robust estimators under missingness not at
  random with a shadow variable.
\newblock {\em Biometrika}, page asw016.

\bibitem[\protect\astroncite{Moreno-Betancur and Chavance}{2013}]{betancur}
Moreno-Betancur, M. and Chavance, M. (2013).
\newblock Sensitivity analysis of incomplete longitudinal data departing from
  the missing at random assumption: Methodology and application in a clinical
  trial with drop-outs.
\newblock {\em Statistical Methods in Medical Research},
  doi:10.1177/0962280213490014.

\bibitem[\protect\astroncite{Neugebauer and van~der Laan}{2005}]{neu}
Neugebauer, R. and van~der Laan, M. (2005).
\newblock Why prefer double robust estimators in causal inference?
\newblock {\em Journal of Statistical Planning and Inference}, 129:405--426.

\bibitem[\protect\astroncite{Newey and McFadden}{1994}]{newey}
Newey, W. and McFadden, D. (1994).
\newblock Large sample estimation and hypothesis testing.
\newblock In McFadden, D. and Engler, R., editors, {\em Handbook of
  Econometrics}, volume~4. Elsevier Science.

\bibitem[\protect\astroncite{Newey}{1993}]{Newey1993419}
Newey, W.~K. (1993).
\newblock 16 efficient estimation of models with conditional moment
  restrictions.
\newblock In {\em Econometrics}, volume~11 of {\em Handbook of Statistics},
  pages 419 -- 454. Elsevier.

\bibitem[\protect\astroncite{Newey}{2009}]{newey09}
Newey, W.~K. (2009).
\newblock Two-step series estimation of sample selection models.
\newblock {\em The Econometrics Journal}, 12(S1):S217--S229.

\bibitem[\protect\astroncite{Newey et~al.}{1990}]{newey90}
Newey, W.~K., Powell, J., and Walker, J. (1990).
\newblock Semiparametric estimation of selection models: some empirical
  results.
\newblock {\em The American Economic Review}, 80(2):324--328.

\bibitem[\protect\astroncite{Potthoff et~al.}{2006}]{potthoff}
Potthoff, R.~F., Tudor, G.~E., Pieper, K.~S., and Hasselblad, V. (2006).
\newblock Can one assess whether missing data are missing at random in medical
  studies?
\newblock {\em Statistical Methods in Medical Research}, 15:213--234.

\bibitem[\protect\astroncite{Puhani}{2000}]{puhani}
Puhani, P. (2000).
\newblock The heckman correction for sample selection and its critique.
\newblock {\em Journal of Economic Surveys}, 14(1):53--68.

\bibitem[\protect\astroncite{Robins et~al.}{2000}]{robins}
Robins, J., Rotnitzky, A., and Scharfstein, D. (2000).
\newblock Sensitivity analysis for selection bias and unmeasured confounding in
  missing data and causal inference models.
\newblock In Halloran, E. and Berry, D., editors, {\em Statistical Models in
  Epidemiology, the Environment, and Clinical Trials}. Springer-Verlag.

\bibitem[\protect\astroncite{Robins and Ritov}{1997}]{rit}
Robins, J.~M. and Ritov, Y. (1997).
\newblock Toward a curse of dimensionality appropriate (coda) asymptotic theory
  for semi-parametric models.
\newblock {\em Statistics in Medicine}, 16:285--319.

\bibitem[\protect\astroncite{Robins and Rotnitzky}{2001}]{robi}
Robins, J.~M. and Rotnitzky, A. (2001).
\newblock Comment on ``inference for semiparametric models: Some questions and
  an answer''.
\newblock {\em Statistica Sinica}, 11:920--936.

\bibitem[\protect\astroncite{Robins et~al.}{1994}]{jamie2}
Robins, J.~M., Rotnitzky, A., and Zhao, L.~P. (1994).
\newblock Estimation of regression coefficients when some regressors are not
  always observed.
\newblock {\em Journal of the American Statistical Association},
  89(427):846--866.

\bibitem[\protect\astroncite{Rotnitzky and Robins}{1997}]{rot}
Rotnitzky, A. and Robins, J. (1997).
\newblock Analysis of semiparametric regression models with non-ignorable
  non-response.
\newblock {\em Statistics in Medicine}, 16:81--102.

\bibitem[\protect\astroncite{Rotnitzky et~al.}{1998}]{rotni1}
Rotnitzky, A., Robins, J.~M., and Scharfstein, D.~O. (1998).
\newblock Semiparametric regression for repeated outcomes with nonignorable
  nonresponse.
\newblock {\em Journal of the American Statistical Association}, 93:1321--1339.

\bibitem[\protect\astroncite{Rotnitzky et~al.}{2001}]{rotni2}
Rotnitzky, A., Scharfstein, D.~O., Su, T., and Robins, J.~M. (2001).
\newblock Methods for conducting sensitivity analysis of trials with
  potentially non-ignorable competing causes of censoring.
\newblock {\em Biometrics}, 57:103--113.

\bibitem[\protect\astroncite{Roy}{2003}]{roy}
Roy, J. (2003).
\newblock Modeling longitudinal data with nonignorable dropouts using a latent
  dropout class model.
\newblock {\em Biometrics}, 59:829--836.

\bibitem[\protect\astroncite{Rubin}{1987}]{rubin2}
Rubin, D.~B. (1987).
\newblock {\em Multiple Imputation for Nonresponse in Surveys}.
\newblock John Wiley \& Sons.

\bibitem[\protect\astroncite{Schafer}{1999}]{schafer}
Schafer, J. (1999).
\newblock Multiple imputation: a primer.
\newblock {\em Statistical Methods in Medical Research}, 8(1):3--15.

\bibitem[\protect\astroncite{Scharfstein et~al.}{2003}]{jamie5}
Scharfstein, D.~O., Daniels, M.~J., and Robins, J.~M. (2003).
\newblock Incorporating prior beliefs about selection bias into the analysis of
  randomized trials with missing outcomes.
\newblock {\em Biostatistics}, 4(4):495--512.

\bibitem[\protect\astroncite{Scharfstein et~al.}{1999}]{rotni3}
Scharfstein, D.~O., Rotnitzky, A., and Robins, J.~M. (1999).
\newblock Adjusting for nonignorable drop-out using semiparametric nonresponse
  models (with discussion).
\newblock {\em Journal of the American Statistical Association}, 94:1096--1146.

\bibitem[\protect\astroncite{Tchetgen~Tchetgen}{2009}]{eric}
Tchetgen~Tchetgen, E. (2009).
\newblock A simple implementation of doubly robust estimation in logistic
  regression with covariates missing at random.
\newblock {\em Epidemiology}, 20(3):391--394.

\bibitem[\protect\astroncite{{Tchetgen Tchetgen}
  et~al.}{2010}]{10.2307/27798905}
{Tchetgen Tchetgen}, E.~J., Robins, J.~M., and Rotnitzky, A. (2010).
\newblock On doubly robust estimation in a semiparametric odds ratio model.
\newblock {\em Biometrika}, 97(1):171--180.

\bibitem[\protect\astroncite{{Tchetgen Tchetgen} and Wirth}{2013}]{erickath}
{Tchetgen Tchetgen}, E.~J. and Wirth, K. (2013).
\newblock A general instrumental variable framework for regression analysis
  with outcome missing not at random.
\newblock {\em Harvard University Biostatistics Working Paper Series}, Working
  Paper 165.

\bibitem[\protect\astroncite{Tsiatis}{2007}]{tsiatis}
Tsiatis, A.~A. (2007).
\newblock {\em Semiparametric Theory and Missing Data}.
\newblock Springer.

\bibitem[\protect\astroncite{van~der Laan and Robins}{2003}]{vander}
van~der Laan, M. and Robins, J.~M. (2003).
\newblock {\em Unified Methods for Censored Longitudinal Data and Causality}.
\newblock Springer-Verlag.

\bibitem[\protect\astroncite{Vansteelandt et~al.}{2007}]{vans}
Vansteelandt, S., Rotnitzky, A., and Robins, J.~M. (2007).
\newblock Estimation of regression models for the mean of repeated outcomes
  under non-ignorable non-monotone non-response.
\newblock {\em Biometrika}, 94:841--860.

\bibitem[\protect\astroncite{Varadhan and Gilbert}{2009}]{bb}
Varadhan, R. and Gilbert, P. (2009).
\newblock {BB}: An {R} package for solving a large system of nonlinear
  equations and for optimizing a high-dimensional nonlinear objective function.
\newblock {\em Journal of Statistical Software}, 32(4):1--26.

\bibitem[\protect\astroncite{Wang et~al.}{2014}]{wang2014instrumental}
Wang, S., Shao, J., and Kim, J.~K. (2014).
\newblock An instrumental variable approach for identification and estimation
  with nonignorable nonresponse.
\newblock {\em Statistica Sinica}, 24:1097--1116.

\bibitem[\protect\astroncite{Winship and Mare}{1992}]{winship}
Winship, C. and Mare, R. (1992).
\newblock Models for sample selection bias.
\newblock {\em Annual Review of Sociology}, 18:327--350.

\bibitem[\protect\astroncite{Wooldridge}{2010}]{wool}
Wooldridge, J. (2010).
\newblock {\em Economic Analysis of Cross Section and Panel Data}.
\newblock MIT press.

\bibitem[\protect\astroncite{Wu and Carroll}{1988}]{wu}
Wu, M. and Carroll, R. (1988).
\newblock Estimation and comparison of changes in the presence of informative
  right censoring by modeling the censoring process.
\newblock {\em Biometrics}, 44:175--188.

\bibitem[\protect\astroncite{Zhao and Shao}{2015}]{zhao2015semiparametric}
Zhao, J. and Shao, J. (2015).
\newblock Semiparametric pseudo-likelihoods in generalized linear models with
  nonignorable missing data.
\newblock {\em Journal of the American Statistical Association},
  110(512):1577--1590.

\end{thebibliography}
\vskip .65cm
\noindent
\vskip 2pt
\noindent
\vskip 2pt

\noindent
\vskip 2pt
\noindent
\newpage

\setcounter{page}{1}
\fontsize{12}{14pt plus.8pt minus .6pt}\selectfont
\pagestyle{plain}
\pagenumbering{arabic} 
\renewcommand{\thepage}{S\arabic{page}}
\noindent{\bf\Large Supplemental Appendix}

\noindent The appendix includes proofs for Result, Examples and Propositions (pp. S1-12), results on local efficiency (pp. 13-23) as well as R code for simulation study (pp. 24-46).
\section*{Proof of Result 1}

The proof is based on contradiction. By the exclusion restriction assumption {\bf(IV.1)} the decomposition of the joint distribution for $(Z,Y,R)$ is 
$$
P_{\theta_i,\eta_i,\xi_i}(z,y,r)=P_{\theta_i}(r|z,y)P_{\eta_i}(z)P_{\xi_i}(y), \quad i=1,2,...,n
$$
Suppose we have two sets of candidates satisfying the same observed quantities:
\begin{align*}
P_{\theta_1}(z,y,R=1) &= P_{\theta_2}(z,y,R=1) \\
P_{\eta_1}(z) &= P_{\eta_2}(z) 
\end{align*}

Substituting the above observed quantities into the joint distribution gives
$$
\frac{P_{\theta_1}(R=1|z,y)}{P_{\theta_2}(R=1|z,y)}=\frac{P_{\xi_2}(y) }{P_{\xi_1}(y) }
$$
This contradicts with the requirement that the ratios are unequal.

\section*{Proofs of Examples 1 and 2}
\setcounter{equation}{0}

\noindent \textbf{Proof of Example 1}

For binary outcome $Y$ and binary instrument $Z$, let $P(R=1|Z,Y; \theta)=\text{expit} \left[\theta_0 + \theta_1 Z + \theta_2 Y + \theta_3 ZY\right]$ and $P(Y=1;\xi)=\exp(\xi)$. We show that for every $(\theta,\xi)$, there exists $(\tilde{\theta},\tilde{\xi}) \neq (\theta,\xi)$ such that 
\begin{align*}
\frac{P(R=1|Z,Y;\theta)}{P(R=1|Z,Y;\tilde{\theta})}=\frac{P(Y;\tilde{\xi}) }{P(Y;\xi) } \tag{A}
\end{align*}
Let $\frac{P(Y=0;\tilde{\xi}) }{P(Y=0;\xi) }=\exp(\rho_0)$ for some $\rho_0 \neq 0$, then $\frac{P(Y;\tilde{\xi}) }{P(Y;\xi) }=\exp(\rho_0+\rho_1 Y)$ where
$$
\rho_1 = \log \left\{ \exp(-\rho_0-\xi)+[\exp(\xi)-1]/\exp(\xi)  \right\}.
$$
Equality (A) then holds by choosing $(\tilde{\theta},\tilde{\xi})$ such that 
\begin{align*}
\tilde{\theta}_0 &={\theta}_0-\rho_0-\log(\alpha_0) \\
\tilde{\theta}_1 &={\theta}_1+\log(\alpha_0)-\log(\alpha_1) \\
\tilde{\theta}_2 &={\theta}_2-\rho_1+\log(\alpha_0)-\log(\alpha_2) \\
\tilde{\theta}_3 &={\theta}_3+\log(\alpha_1)+\log(\alpha_2)-\log(\alpha_0)-\log(\alpha_3) \\
\tilde{\xi} &={\xi}+\rho_0+\rho_1, 
\end{align*}
where $\alpha_0 = 1+\exp(\theta_0)-\exp(\theta_0-\rho_0)$, $\alpha_1 = 1+\exp(\theta_0+\theta_1)-\exp(\theta_0+\theta_1-\rho_0)$, $\alpha_2 = 1+\exp(\theta_0+\theta_2)-\exp(\theta_0+\theta_2-\rho_0-\rho_1)$ and  $\alpha_3 = 1+\exp(\theta_0+\theta_1+\theta_2+\theta_3)-\exp(\theta_0+\theta_1+\theta_2+\theta_3-\rho_0-\rho_1)$. For example, choose $(\rho_0, \rho_1) =(0.3, -0.38)$ and equality (A) holds for $({\theta}_0,{\theta}_1,{\theta}_2,{\theta}_3,{\xi})=(0.3,0.6,0.1,0.7,-0.2)$ and $(\tilde{\theta}_0,\tilde{\theta}_1,\tilde{\theta}_2,\tilde{\theta}_3,\tilde{\xi})=(-0.3,0.41,0.91,1.37,-0.28)$.

Next, we consider the missingness mechanism $P(R=1|Z,Y; \theta)=\text{expit} \left[\theta_0 + \theta_1 Z + \theta_2 Y \right]$, where the interaction effect between $(Z,Y)$ is absent. Under this mechanism, we have $\theta_3=\tilde{\theta}_3 =0$ and therefore $\alpha_1\alpha_2=\alpha_0\alpha_3$ which implies the equality
\begin{align*}
\exp(\rho_0+\rho_1) = \frac{\exp(\theta_2+\rho_0)}{\exp(\theta_2+\rho_0)+[1-\exp(\rho_0)]}. \tag{B}
\end{align*}
Since $\exp(\rho_0+\rho_1 Y)$ is the ratio of the two probability mass distributions for $Y$, $\rho_0$ and $\rho_0+\rho_1$ should be of opposite signs. Based on (B), if $\exp(\rho_0)>1$ then $\exp(\rho_0+\rho_1)>1$ and similarly if $\exp(\rho_0)<1$ then $\exp(\rho_0+\rho_1)<1$, which implies that the only possibility is $\rho_0=\rho_1=0$ and hence $(\tilde{\theta},\tilde{\xi}) = (\theta,\xi)$.

\lhead[\footnotesize\thepage\fancyplain{}\leftmark]{}\rhead[]{\fancyplain{}\rightmark\footnotesize\thepage}

\noindent \textbf{Proof of Example 2}

Consider the case where $Z$ and $Y$ are both continuous random variables. Suppose two sets of candidates in the separable logistic missing data mechanism has the following relationship
$$
\frac{\text{expit}(q_1(z)+h_1(y))}{\text{expit}(q_2(z)+h_2(y))}=g(y)
$$
for some function $g(\cdot)$, i.e. the ratio is a function of $y$ only. Taking derivative with respect to $Z$ on both sides (assuming IV relevance {\bf(IV.2)} holds) gives
$$
\frac{\frac{\partial}{\partial z}\text{expit}(q_1(z)+h_1(y))}{\text{expit}(q_1(z)+h_1(y))}=\frac{\frac{\partial }{\partial z}\text{expit}(q_2(z)+h_2(y))}{\text{expit}(q_2(z)+h_2(y))}
$$
or equivalently
\begin{align*}
\frac{\partial q_1(z) / \partial z}{\partial q_2(z) / \partial z}= \frac{1+\exp(q_1(z)+h_1(y))}{1+\exp(q_2(z)+h_2(y))} \tag{A}
\end{align*}
Taking derivatives with respect to $Y$ on both sides leads to
$$
\frac{\partial q_1(z) / \partial z}{\partial q_2(z) / \partial z} \exp(q_2(z)-q_1(z)) = \frac{\partial h_1(y) / \partial y}{\partial h_2(y) / \partial y} \exp(h_1(y)-h_2(y))
$$
The left hand side of the above equation depends only on $Z$ but the right hand side depends only on $Y$, so it must be that
$$
\frac{\partial q_1(z) / \partial z}{\partial q_2(z) / \partial z} \exp(q_2(z)-q_1(z)) =c_1
$$
for some constant $c_1$. Substituting the above expression into equality (A) leads to
$$
c_1 \left\{ \exp(-q_2(z))+\exp(h_2(y))  \right\} = \exp(-q_1(z))+\exp(h_1(y))
$$
and therefore
$$
c_1 \exp(-q_2(z))+c_2 = \exp((-q_1(z)), \quad c_1 \exp(h_2(y))-c_2 = \exp((h_1(y))
$$
for some constant $c_2$. Substituting the above equalities into the ratio of propensity scores
$$
\frac{\text{expit}(q_1(z)+h_1(y))}{\text{expit}(q_2(z)+h_2(y))}=1+c_2 \exp(-h_1(y)) = g(y)
$$
Note that $g(y)$ is the ratio of two candidate densities of $Y$, and so it must be that $c_2=0$ and the two sets of candidates are equivalent, leading to a contradiction. Therefore the ratio 
$$
\frac{\text{expit}(q_1(z)+h_1(y))}{\text{expit}(q_2(z)+h_2(y))}
$$
is either a constant or depends on $z$, which by Corollary 1 leads to identifiability of this class of missing data models.

Consider the case where $Z$ is a binary random variable, and assume two sets of candidates in the separable logistic missing data mechanism has the following relationship
$$
\frac{\text{expit}(\eta_1 z+h_1(y))}{\text{expit}(\eta_2 z+h_2(y))}=g(y).
$$
The above relationship holds for $z=0,1$, therefore
$$
\frac{\text{expit}(h_1(y))}{\text{expit}(h_2(y))}=\frac{\text{expit}(\eta_1 + h_1(y))}{\text{expit}(\eta_2 + h_2(y))}
$$
and 
$$
g(y) = 1 + \frac{\exp(\eta_2)-\exp(\eta_1)}{\exp(\eta_2)-\exp(\eta_1+\eta_2)}\exp[-h_2(y)].
$$
Since $g(y)$ is the ratio of two densities, we must have $\eta_1=\eta_2$ and $g(y)=1$, leading to a contradiction.
The proof for Y or Z as discrete variables is similar to the above proof for binary $Z$.

\section*{Proofs of Propositions}
\setcounter{equation}{0}

\noindent \textbf{Proof of Proposition 1}

Let $(\eta_0, \omega_0,\xi_0)$ denote the true values of the parameters for parametric models $\eta(x,y,z; \zeta), P(r|Y=0,x,z; \omega)$ and $q(z|x; \xi)$ which are assumed to be correctly specified. Assume the model $q(z|x; \xi)$ is identifiable, its parameter space $\Xi$ is compact and the remaining conditions in Theorem 2.5 of \citet{newey} hold, which are sufficient to establish consistency of maximum likelihood estimators. Then $\hat{\xi}_{\text{\tiny{MLE}}}$ has a probability limit equal to $\xi_0$. Consider estimating function for (4.7) which under the law of iterated expectations equals to 
\begin{align*}
&E\left\{E\left\{ \left[ \frac{R}{\pi(\zeta_0,\omega_0)}-1 \right] \boldsymbol{h}_1(X,Z)\right\}\middle\vert X,Y,Z\right\}\\
=&E\left\{E\left\{ \left[ \frac{\pi(\zeta_0,\omega_0)}{\pi(\zeta_0,\omega_0)}-1 \right] \boldsymbol{h}_1(X,Z)\right\}\right\}=0.
\end{align*}
Under the law of iterated expectations, the estimating function for (4.8) equals
\begin{align*}
& E \left\{ \frac{R}{\pi(\zeta_0,\omega_0)}g(Y,X)\{ h_2(Z,X)-E[h_2(Z,X) \vert X;\xi_0] \} \right\} \\ 
=& E \left\{ g(Y,X)\{ h_2(Z,X)-E[h_2(Z,X) \middle\vert X;\xi_0] \} \right\} \\ 
=& E\left\{ E [ g(Y,X)\middle\vert X ] \{ h_2(Z,X)-E[h_2(Z,X) \middle\vert X;\xi_0] \}  \right\}  \quad \text{by {\bf (IV.1)}}  \\ 
= &E \left\{  E [ g(Y,X)\middle\vert X ] \{ E[h_2(Z,X) \middle\vert X;\xi_0]-E[h_2(Z,X) \middle\vert X;\xi_0] \} \right\}. 
=&0.
\end{align*}
Therefore  $(\eta_0, \omega_0)$ are the probability limits of the solutions to estimating equations (4.7) and (4.8). The IPW estimator is also unbiased,
$$
E  \left\{\frac{R Y}{\pi(\zeta_0,\omega_0)} \right\} =E\{ Y\}=\phi_0,
$$
by taking iterated expectations with respect to $(X,Y,Z)$.
The consistency and asymptotic normality of $\hat{\phi}^{\text{IPW}}$ can be established under standard regularity conditions for GMM estimators \citep{newey} , typically by placing moment restrictions on the vector of estimating functions. In particular, we require that the probability of observing the outcome is bounded away from zero, a necessary assumption for identification of a full data functional \citep{jamie2}.
\begin{align}\label{pos}
\pi (x,y,z) >\sigma > 0 \quad \text{with probability 1} \tag{S1}
\end{align}
for a non-zero positive constant $\sigma>0$. 

Let $\boldsymbol{M}(\delta)$ represent the stacked vector of the following estimating functions: score functions for estimating $\xi$, $\boldsymbol{U}^{\text{\tiny{IPW}}}(\xi,\zeta,\omega)$ and $G(\phi,\zeta,\omega)$\\$=\left\{\frac{R Y}{\pi({\zeta,\omega})} -{\phi}\right\}$, where $\delta=(\zeta,\omega,\xi,\phi)$. Then under standard regularity conditions for M-estimation \citep{newey},  the asymptotic variance $V$ is given by the diagonal entry corresponding to $\phi$ of the following variance-covariance matrix
\begin{align} \label{var}
\left[ E \left\{ \frac{\partial \boldsymbol{M}(\delta)}{\partial \delta^T} \bigg|_{\delta_0} \right\} \right]^{-1}E\left\{\boldsymbol{M}(\delta_0) \boldsymbol{M}(\delta_0) ^T\right\}\left[ E \left\{ \frac{\partial \boldsymbol{M}(\delta)}{\partial \delta^T} \right\}\bigg|_{\delta_0}  \right]^{-1^T}, \tag{S2}
\end{align}
where $\delta_0 = (\zeta_0,\omega_0,\xi_0,\phi_0)$ is the probability limit of $\hat{\delta}=(\hat{\zeta},\hat{\omega},\hat{\xi},\hat{\phi})$. A consistent sandwich estimator for the above asymptotic variance can be constructed by evaluating unknown expectations as sample means at the estimated parameter value $\hat{\delta}$.

\noindent \textbf{Proof of Proposition 2}

Let $(\eta_0, \theta_0,\xi_0)$ denote the true values of the parameters for parametric models $\eta(x,y,z; \zeta), f(y|R=1,x,z; \theta)$ and $q(z|x; \xi)$ which are assumed to be correctly specified. Assume the conditions in Theorem 2.5 of \citet{newey} hold for models $f(y|R=1,x,z; \theta)$ and $q(z|x; \xi)$. Then the probability limits of the MLEs $(\hat{\theta}_{\text{\tiny{MLE}}},\hat{\xi}_{\text{\tiny{MLE}}})$ are $(\theta_0,\xi_0)$. Under true parameter values, the expectation of the estimating function for (4.10) is
\begin{align*}
&E\Bigl\{\left\{{q}_1(X,Z)-E\left[{q}_1(X,Z)|X;\xi_0\right] \right\} \times \\
&\left\{ (1-R)  E\left({q}_2(X,Y)|R=0,X,Z;{\zeta_0},\theta_0\right)+R{q}_2(X,Y) \right\} \Bigr\} \\
=& E \{ E(\cdot|R=0,X,Z)\times \Pr(R=0|X,Z)\}\\
&+E \{ E(\cdot|R=1,X,Z)\times \Pr(R=1|X,Z)\}\\
=&E \left( \left\{  q_1(X,Z)-E[q_1(X,Z)|X;\xi_0] \right\} E [q_2(X,Y)|X,Z]  \right)\\
=&E \left( \left\{  q_1(X,Z)-E[q_1(X,Z)|X;\xi_0] \right\} E [q_2(X,Y)|X]  \right) \hphantom {------}\text{by {\bf (IV.1)}} \\
=&E \left( \left\{E[q_1(X,Z)|X;\xi_0]-E[[q_1(X,Z)|X;\xi_0] \right\}E [q_2(X,Y)|X] \right) \\
=&0,
\end{align*}
so that $\zeta_0$ is the probability limit of the solution $\hat{\zeta}$ of (4.10). The OR estimator is unbiased since
\begin{align*}
 & E \left\{ RY + (1-R) E(Y|R=0,X,Z;{\zeta_0},\theta_0)\right\} \\
=& E \left\{ E \{ RY + (1-R) E(Y|R=0,X,Z) | R=0, X,Z\} \times \Pr(R=0 | X,Z)\right\} \\
+&E \left\{ E \{ RY + (1-R)E(Y|R=0,X,Z) | R=1,X,Z\} \times \Pr(R=1 | X,Z)\right\} \\
=&E\left\{ E\{Y|R=0,X,Z\} \times \Pr(R=0 |X,Z) \right\} \\
&+E \left\{ E\{Y|R=1,X,Z\} \times \Pr(R=1 |X,Z) \right\} \\
=& E \left\{ E \{Y|X,Z\} \right\} \\
=& E\{Y\}=\phi_0.
\end{align*}
The consistency and asymptotic normality of $\hat{\phi}^{\text{OR}}$ can be established under standard regularity conditions for GMM estimators \citep{newey} . A necessary condition is that the probability of observing the outcome is bounded away from zero (S1).

\noindent \textbf{Proof of Proposition 3}

Under model $\mathcal{M}_{\text{\tiny{IPW}}}$, let $\xi_0$ denote the true value for parametric model $q(z|x; \xi)$ and it is clear that  $\hat{\xi}_{\text{\tiny{MLE}}}$ has a probability limit equal to $\xi_0$. Let superscript asterisks denote possibly misspecified models. Let $\theta^*$ denote the probability limit of estimation under model $f^*(y|R=1,x,z; \theta)$ and let $\rho(X,Z)=\int \boldsymbol{u}(x,y)\frac{\exp\left[-\eta(x,y,z;\zeta)\right]f\left(y|R=1,x,z; \theta \right)}{\int \exp[-\eta(x,y,z)]f\left(y|R=1,x,z;\theta\right) \mathrm{d}\mu(y)} \mathrm{d}\mu(y)$. Then at true parameter values $(\zeta_0,\omega_0)$, 
\begin{align*}
&E\left\{ \boldsymbol{G}^{\text{\tiny{DR}}}\left(R,X,Y,Z; {\zeta}_0,\omega_0,\theta^*,\boldsymbol{u}\right)\middle \vert X,Y,Z \right\} \\
=& \boldsymbol{u}(X,Y)-\rho^*(X,Z;\zeta_0,\theta^*)+\rho^*(X,Z;\zeta_0,\theta^*)=\boldsymbol{u}(X,Y),
\end{align*}
and therefore the estimating function for (4.13), under iterated expectations with respect to $(X,Y,Z)$ at $(\xi_0,\zeta_0,\omega_0)$, is
\begin{align*}
&E \bigg\{ \left[\boldsymbol{v}(X,Z)-E\left(\boldsymbol{v}(X,Z)\middle\vert X\right)\right]\left\{ \boldsymbol{u}(X,Y)\right\} \bigg\}\\
=&E \bigg\{ \left[\boldsymbol{v}(X,Z)-E\left(\boldsymbol{v}(X,Z)\middle\vert X\right)\right]\left\{   E \left(\boldsymbol{u}(X,Y) \middle \vert X,Z\right)\right\} \bigg\} \\
=&E \bigg\{ \left[\boldsymbol{v}(X,Z)-E\left(\boldsymbol{v}(X,Z)\middle\vert X\right)\right]\left\{   E \left(\boldsymbol{u}(X,Y) \middle \vert X\right)\right\} \bigg\} \hphantom {------}\text{by {\bf (IV.1)}} \\
=&E \bigg\{ \left[E\left(\boldsymbol{v}(X,Z)\middle\vert X\right)-E\left(\boldsymbol{v}(X,Z)\middle\vert X\right)\right]\left\{   E \left(\boldsymbol{u}(X,Y) \middle \vert X\right)\right\} \bigg\}\\
=&\boldsymbol{0}.
\end{align*}
In addtion, under iterated expectations with respect to $(X,Y,Z)$, 
\begin{align*}
E \left\{\boldsymbol{G}^{\text{\tiny{DR}}}\left(R,X,Y,Z,\zeta_0,\omega_0,\theta^*,\boldsymbol{u}=Y\right)\right\}=E\{Y\}.
\end{align*}
Under model $\mathcal{M}_{\text{\tiny{OR}}}$, let $\omega^*$ denote the probability limit of estimation under model $P^*(r|Y=0,x,z;\omega)$. Then at true parameter values $(\zeta_0,\theta_0)$,
 \begin{align*}
&E\left\{ \boldsymbol{G}^{\text{\tiny{DR}}}\left(R,X,Y,Z; {\zeta}_0,\omega^*,\theta_0,\boldsymbol{u}\right)\middle \vert X,Z \right\} \\
=&E\left\{  \frac{R}{\pi(\zeta_0,\omega^*)}\{\boldsymbol{u}(X,Y)-\rho(X,Z)\}+\rho(X,Z)\middle \vert X,Z \right\}\\
=&E   \left\{ \frac{R\{1-\pi(\zeta_0,\omega^*)\}}{\pi(\zeta_0,\omega^*)}\{\boldsymbol{u}(X,Y)-\rho(X,Z)\} \middle \vert X,Z \right\}\\
& + E \left\{\rho(X,Z)+R\{\boldsymbol{u}(X,Y)- \rho(X,Z)\} \middle \vert X,Z  \right\}\\
=&E \left\{ R\left\{ \mathrm{e}^{-\{\lambda(X,Z;\omega^*)+\eta(X,Y,Z;\zeta_0)\}} \right\} \left \{\boldsymbol{u}(X,Y)-\rho(X,Z)\right \}  \middle\vert  X,Z \right\}\\
&+ E \left\{\boldsymbol{u}(X,Y)\middle \vert X,Z  \right\}\\
=& E \left\{\boldsymbol{u}(X,Y)\middle \vert X,Z  \right\}. \tag{S3} 
\end{align*}
The estimating function for (4.13), under iterated expectations with respect to $(X,Z)$ at $(\xi_0,\zeta_0,\theta_0)$, is
\begin{align*}
=&E \bigg\{ \left[\boldsymbol{v}(X,Z)-E\left(\boldsymbol{v}(X,Z)\middle\vert X\right)\right]\left\{   E \left(\boldsymbol{u}(X,Y) \middle \vert Z,X\right)\right\} \bigg\} \\
=&E \bigg\{ \left[\boldsymbol{v}(X,Z)-E\left(\boldsymbol{v}(X,Z)\middle\vert X\right)\right]\left\{   E \left(\boldsymbol{u}(X,Y) \middle \vert X\right)\right\} \bigg\} \hphantom {------}\text{by {\bf (IV.1)}} \\
=&E \bigg\{ \left[E\left(\boldsymbol{v}(X,Z)\middle\vert X\right)-E\left(\boldsymbol{v}(X,Z)\middle\vert X\right)\right]\left\{   E \left(\boldsymbol{u}(X,Y) \middle \vert X\right)\right\} \bigg\}\\
=&\boldsymbol{0}.
\end{align*}
In addition, under iterated expectations with respect to $(X,Z)$ and with similar reasoning given in (S3),
\begin{align*}
E \left\{\boldsymbol{G}^{\text{\tiny{DR}}}\left(R,X,Y,Z,\zeta_0,\omega^*,\theta_0,\boldsymbol{u}=Y\right)\right\}=E\{Y\}.
\end{align*}
The consistency and asymptotic normality of $\hat{\phi}^{\text{DR}}$ can be established under standard regularity conditions for GMM estimators \citep{newey} . A necessary condition is that the probability of observing the outcome is bounded away from zero (\ref{pos}).

\newpage
\section*{Results on Local Efficiency}
\setcounter{section}{0}
\setcounter{equation}{0}
\def\theequation{S\arabic{section}.\arabic{equation}}
\def\thesection{S\arabic{section}}

\fontsize{12}{14pt plus.8pt minus .6pt}\selectfont

\setcounter{equation}{0}
Let $\left( L,R\right) =(X,Z,Y,R)$ denote the complete data. Suppose we
observe $O=(R,X,Z,YR).$ Furthermore, assume that $Z$ is a valid missing data
IV, such that (i) $Y$ is independent of $Z$ given $X,$ and (ii) $R$ given $%
(X,Y,Z)$ follows a model logit $\Pr \{R=1|X,Z,Y\}=\alpha _{0}\left(
X,Z\right) +\alpha _{y}\left( Y,X,Z\right) $ with $\alpha _{0}\left(
X,Z\right) $ unrestricted and $\alpha _{y}\left( Y,X,Z\right) $ known, and $%
\alpha _{y}\left( 0,X,Z\right) =0.$ Throughout, we assume that $\Pr
\{R=1|X,Z,Y\}>\sigma >0$ w.p.1 for some constant $\sigma .$ Let $\mathcal{N}%
_{1}$ and $\mathcal{N}_{2}$ denote the tangent space of the full data and
the missing data model respectively, such that $\mathcal{N=N}_{1}$ $\oplus 
\mathcal{N}_{2}$ is the tangent space in the full data model. \citet{rot} established that the observed data tangent space is given by $%
\mathcal{N}^{O}=\overline{\mathcal{N}_{1}^{O}+\mathcal{N}_{2}^{O}},$ where $%
\mathcal{N}_{j}^{O}=\overline{R\left( g\circ \Pi _{j}\right) }$ where $%
R\left( \cdot \right) $ is the range of the operator $g:\Omega ^{\left(
L,R\right) }\rightarrow \Omega ^{\left( O\right) }$ is the conditional
expectation operator $g\left( \cdot \right) =E\left[ \cdot |O\right] ,$ $%
\Omega ^{\left( L,R\right) }$ and $\Omega ^{\left( O\right) }$ are the
spaces of all random functions of $(C,L)$ and $O$ respectively. $\Pi _{j}$
is the Hilbert space projection operator from $\Omega ^{\left( L,R\right) }$ onto $\mathcal{N}_{j}$ and $\overline{\mathcal{S}}$ is the close linear span
of the set $\mathcal{S}$. We wish to characterize the orthocomplement to the
tangent space in the observed data model $\mathcal{N}^{O,\bot }=\mathcal{N}%
_{1}^{O,\bot }\cap \mathcal{N}_{2}^{O,\bot }.$ \citet{rot}
showed that  
\begin{eqnarray*}
\mathcal{N}_{1}^{O,\bot } &=&\left\{ N_{1}^{O,\bot }=Rm(L)/\pi \left(
L\right) +N_{car}:m\left( L\right) \in \mathcal{N}_{1}^{\bot }\text{ and }%
N_{car}\in \mathcal{N}_{car}\text{ }\right\}  \\
&&\text{where } \\
\mathcal{N}_{car} &=&\left\{ N_{car}=(1-R)a\left( O\right) -RE\left[
(1-R)a\left( O\right) |L\right] /\pi \left( L\right) :\text{ for any }%
a\left( O\right) \text{ }\in \Omega ^{\left( O\right) }\right\}. 
\end{eqnarray*}%
Thus we need to characterize $\mathcal{N}_{1}^{\bot }.$ By the exclusion
restriction, all scores of $f\left( L\right) $ may be written as%
\[
\mathcal{N}_{1}=\left\{ s\left( L\right) =s_{1}\left( Y|X\right)
+s_{2}\left( Z|X\right) +s_{3}\left( X\right) :E\left( S_{1}|X\right)
=E\left( S_{2}|X\right) =E\left( S_{3}\right) =0\right\}. 
\]%
Therefore  
\[
\mathcal{N}_{1}^{\bot }=\left\{ C-C^{\dag }:C=c\left( Y,X,Z\right) \text{
arbitrary, }C^{\dag }=E\left[ C|Z,X\right] +E\left[ C|Y,X\right] -E\left[ C|X%
\right] \right\}, 
\]%
a result given by \citet{bickel1998efficient} and \citet{10.2307/27798905}. Therefore, we have that 
$\mathcal{N}_{1}^{O,\bot }$ consists of functions  
\[
R\left\{ C-C^{\dag }\right\} /\pi \left( L\right) +(1-R)a\left( O\right) -RE
\left[ (1-R)a\left( O\right) |L\right] /\pi \left( L\right) 
\]%
for arbitrary functions $C=c(L)$ and $A=a(O).$ Also, \citet{rot} establish that $\mathcal{N}_{2}^{O,\bot }=\left\{ b\left( O\right)
:b\left( O\right) \in \mathcal{N}_{2}^{\bot }\right\} $ and therefore,%
\[
\mathcal{N}^{O,\bot }=\left\{ N_{1}^{O,\bot }\in \mathcal{N}_{1}^{O,\bot }:E%
\left[ N_{2}N_{1}^{O,\bot }\right] =0,N_{2}\in \mathcal{N}_{2}\right\}. 
\]%
Note that $\mathcal{N}_{2}=\left\{ N_{2}=\left( R-\pi \left( L\right)
\right) g(X,Z)\text{ for all }g\right\} $, which leads to the following
result. 

\begin{lemma}
\[
\mathcal{N}^{O,\bot }=\left\{ 
\begin{array}{c}
N_{1}^{O,\bot }\left( a_{c}\right) =R\left\{ C-C^{\dag }\right\} /\pi \left(
L\right) +(1-R)a_{c}\left( O\right) -RE\left[ (1-R)a_{c}\left( O\right) |L%
\right] /\pi \left( L\right) : \\ 
a_{c}=E\left[ C-C^{\dag }|R=0,X,Z\right] 
\end{array}%
\right\} 
\]
\end{lemma}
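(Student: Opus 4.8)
The plan is to work directly from the reduction already in hand, namely that $\mathcal{N}^{O,\bot}$ consists of exactly those $N_{1}^{O,\bot}\in\mathcal{N}_{1}^{O,\bot}$ that are orthogonal to every element of $\mathcal{N}_{2}$. Writing a generic member of $\mathcal{N}_{1}^{O,\bot}$ as $N_{1}^{O,\bot}=RM/\pi(L)+N_{car}$ with $M=C-C^{\dag}\in\mathcal{N}_{1}^{\bot}$ arbitrary and $N_{car}=(1-R)a(O)-RE[(1-R)a(O)|L]/\pi(L)\in\mathcal{N}_{car}$, I note that the free parameters are the full-data function $C=c(L)$ and the observed-data function $a(O)$; since the factor $(1-R)$ forces $a(O)$ to be evaluated at $R=0$, only its restriction $\tilde a(X,Z):=a(0,X,Z)$ enters. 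The goal is to show that imposing orthogonality to $\mathcal{N}_{2}$ leaves $C$ free but pins down $\tilde a$ to be $E[C-C^{\dag}|R=0,X,Z]$.

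First I would put $N_{car}$ in closed form. Using $E[1-R|L]=1-\pi(L)$ and that $\tilde a$ is a function of $(X,Z)\subset L$, the conditional expectation collapses to $\tilde a(X,Z)(1-\pi(L))$, and a short algebraic simplification gives $N_{car}=-\tilde a(X,Z)\,\{R-\pi(L)\}/\pi(L)$. Next, for an arbitrary $N_{2}=(R-\pi(L))g(X,Z)\in\mathcal{N}_{2}$ I would compute $E[N_{2}N_{1}^{O,\bot}]$ term by term. For the $RM/\pi$ piece, the identities $R^{2}=R$ and $E[R|L]=\pi$ give $E[(R-\pi)gRM/\pi]=E[(1-\pi)gM]$; for the $N_{car}$ piece, $E[(R-\pi)^{2}|L]=\pi(1-\pi)$ gives $-E[(R-\pi)^{2}g\tilde a/\pi]=-E[(1-\pi)g\tilde a]$. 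Adding, the cross term reduces to the single expression $E\{(1-\pi)\,g(X,Z)\,[M-\tilde a(X,Z)]\}$, where throughout I am careful to record which quantities depend on the full $L$ and which only on $(X,Z)$.

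Finally, since $g(X,Z)$ is arbitrary, conditioning on $(X,Z)$ shows that $E[N_{2}N_{1}^{O,\bot}]=0$ for all $N_{2}\in\mathcal{N}_{2}$ is equivalent to the pointwise identity $E[(1-\pi)(M-\tilde a)|X,Z]=0$, i.e. $\tilde a(X,Z)=E[(1-\pi)M|X,Z]/E[(1-\pi)|X,Z]$. The last move is to recognize this ratio as a conditional mean given $R=0$: by the tower property and because $M$ is a function of $L$, one has $E[(1-R)M|X,Z]=E[(1-\pi)M|X,Z]$ and $E[1-R|X,Z]=E[1-\pi|X,Z]$, so $\tilde a(X,Z)=E[M|R=0,X,Z]=E[C-C^{\dag}|R=0,X,Z]=a_{c}$, which is precisely the constraint in the displayed set. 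Positivity ($\pi>\sigma>0$, together with $\Pr(R=0|X,Z)>0$ wherever the conditioning is nontrivial) guarantees the ratio is well defined, and since $C$ is unconstrained by the computation the result is an exact reparametrization of the candidate set by $C$ alone.

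I expect the only real obstacle to be bookkeeping in the two moment computations --- keeping straight that $\pi(L)$ and $M$ depend on $Y$ whereas $g$ and $\tilde a$ do not, so that the conditional-variance and $R^{2}=R$ reductions are applied at the correct conditioning level --- and then correctly identifying the resulting ratio as $E[\,\cdot\,|R=0,X,Z]$; no new structural input beyond the already-established forms of $\mathcal{N}_{1}^{\bot}$, $\mathcal{N}_{car}$ and $\mathcal{N}_{2}$ is required.
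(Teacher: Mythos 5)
Your proof is correct and takes essentially the same route as the paper's: both reduce orthogonality to $\mathcal{N}_{2}$ to the pointwise condition $E\left[ \left( 1-\pi \left( L\right) \right) \left\{ C-C^{\dag }-\tilde{a}\left( X,Z\right) \right\} |X,Z\right] =0$ and identify the unique solution as $\tilde{a}=E\left[ C-C^{\dag }|R=0,X,Z\right]$. Your preliminary closed-form simplification of $N_{car}$ to $-\tilde{a}\left( X,Z\right) \left\{ R-\pi \left( L\right) \right\} /\pi \left( L\right)$ merely streamlines the term-by-term bookkeeping that the paper carries out explicitly.
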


\begin{proof}
$N_{1}^{O,\bot }\left( a_{c}\right) $ is clearly in $\mathcal{N}_{1}^{O,\bot
},$ it suffices to show that the unique solution to the equation $E\left[
N_{1}^{O,\bot ^{\ast }}N_{2}\right] =0,$ for all $%
N_{2}\in \mathcal{N}_{2}$ is given by $N_{1}^{O,\bot ^{\ast }}=N_{1}^{O,\bot
}\left( a_{c}\right) .$ In this vein  
\begin{eqnarray*}
0 &=&E\left[ N_{1}^{O,\bot ^{\ast }}N_{2}\right]  \\
&=&E\left[ \left\{ 
\begin{array}{c}
R\left\{ C-C^{\dag }\right\} /\pi \left( L\right) +(1-R)a^{\ast }\left(
O\right)  \\ 
-RE\left[ (1-R)a^{\ast }\left( O\right) |L\right] /\pi \left( L\right) 
\end{array}%
\right\} \left( R-\pi \left( L\right) \right) g(X,Z)\right] =0\text{ for all 
}g \\
&\Leftrightarrow &0=E\left[ \left\{ 
\begin{array}{c}
R\left\{ C-C^{\dag }\right\} /\pi \left( L\right) +(1-R)a^{\ast }\left(
O\right)  \\ 
-RE\left[ (1-R)a^{\ast }\left( O\right) |L\right] /\pi \left( L\right) 
\end{array}%
\right\} \left( R-\pi \left( L\right) \right) |X,Z\right]  \\
&\Leftrightarrow &0=E\left[ \left( 1-\pi \left( L\right) \right) \left\{
C-C^{\dag }\right\} |X,Z\right] -E\left[ (1-\pi \left( L\right) )\pi \left(
L\right) a^{\ast }\left( O\right) |X,Z\right]  \\
&&-E\left[ \left( 1-\pi \left( L\right) \right) E\left[ (1-R)a^{\ast }\left(
O\right) |L\right] |X,Z\right]  \\
&\Leftrightarrow &0=E\left[ \left( 1-\pi \left( L\right) \right) \left\{
C-C^{\dag }\right\} |X,Z\right] -E\left[ \left( 1-\pi \left( L\right)
\right) a^{\ast }\left( O\right) |X,Z\right]  \\
&\Leftrightarrow &0=E\left[ \left[ E\left[ \left\{ C-C^{\dag }\right\}
|X,R=0,Z\right] -a^{\ast }\left( O\right) \right] (1-R)|X,Z\right] 
\end{eqnarray*}%
Upon writing $a^{\ast }\left( O\right) =a_{1}^{\ast }\left( L\right)
R+a_{2}^{\ast }\left( X,Z\right) (1-R),$ we have that $a_{2}^{\ast }\left(
X,Z\right) =E\left[ \left\{ C-C^{\dag }\right\} |X,R=0,Z\right] =a_{c},$
proving the result. 
\end{proof}
Therefore the ortho-complement to the tangent space in a model where (i) and
(ii) hold is given by $\mathcal{N}^{O,\bot }$. Next, we consider the goal of estimating a full data functional $\phi =\phi
\left( F_{L}\right)=E(Y) $ in the missing data model given by (i) and (ii). Let $%
IF_{\phi ,1}=Y-\phi$ denote the full data influence function in the nonparametric
model which does not assume (i)$.$ Then, in the model that assumes (i) and
(ii) hold we have that 
\[
\widetilde{\mathcal{N}}_{1}^{\bot }=\left\{ 
\begin{array}{c}
k\cdot IF_{\phi ,1}+C-C^{\dag }:\text{for all constants }k\text{  and } \\ 
C=c\left( Y,X,Z\right) \text{ arbitrary,}C^{\dag }=E\left[ C|Z,X\right] +E%
\left[ C|Y,X\right] -E\left[ C|X\right] 
\end{array}%
\right\} 
\]

Similar to Lemma 1, we get the following set of influence functions for $%
\phi $ in the model given by (i) and (ii)

\begin{lemma}
\[
\widetilde{\mathcal{N}}^{O,\bot }=\left\{ 
\begin{array}{c}
\widetilde{N}_{1}^{O,\bot }\left( a_{c,\phi }\right) =R\left\{ k\cdot
IF_{\phi ,1}+C-C^{\dag }\right\} /\pi \left( L\right)  \\ 
+(1-R)a_{c}\left( O\right) -RE\left[ (1-R)a_{c}\left( O\right) |L\right]
/\pi \left( L\right) : \\ 
a_{c,\phi }=E\left[ k\cdot IF_{\phi ,1}+C-C^{\dag }|R=0,X,Z\right] ,\text{%
for arbitrary }C=c\left( Y,X,Z\right) \text{ and constant }k%
\end{array}%
\text{ }\right\} 
\]
\end{lemma}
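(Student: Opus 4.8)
The plan is to reprise the proof of Lemma~1 almost verbatim, the single change being that the full-data component is now drawn from $\widetilde{\mathcal{N}}_1^{\bot}$ in place of $\mathcal{N}_1^{\bot}$. Recall that any candidate element of $\widetilde{\mathcal{N}}^{O,\bot}$ already lies in $\mathcal{N}_1^{O,\bot}$, with its first component ranging over $\widetilde{\mathcal{N}}_1^{\bot}$, and must in addition be orthogonal to the nuisance tangent space $\mathcal{N}_2=\{(R-\pi(L))g(X,Z):\text{all }g\}$. I would therefore fix $m(L)=k\cdot IF_{\phi,1}+C-C^{\dag}\in\widetilde{\mathcal{N}}_1^{\bot}$ and search for the unique $a^{\ast}(O)$ rendering
\[
N_1^{O,\bot\ast}=R\,m(L)/\pi(L)+(1-R)a^{\ast}(O)-R\,E[(1-R)a^{\ast}(O)\,|\,L]/\pi(L)
\]
orthogonal to every $N_2\in\mathcal{N}_2$.

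First I would observe that $\widetilde{N}_1^{O,\bot}(a_{c,\phi})$ manifestly has this form, so membership in $\mathcal{N}_1^{O,\bot}$ is immediate and only the orthogonality constraint must be resolved. Imposing $E[N_1^{O,\bot\ast}(R-\pi(L))g(X,Z)]=0$ for all $g$ is, after conditioning on $(X,Z)$, equivalent to the vanishing of an inner conditional expectation. Carrying out the same manipulations as in Lemma~1 --- substituting $E[R\,|\,L]=\pi(L)$, expanding the carry term $R\,E[(1-R)a^{\ast}(O)\,|\,L]/\pi(L)$, and cancelling the resulting cross terms --- reduces the requirement to
\[
0=E\bigl[\bigl(E[m(L)\,|\,X,R=0,Z]-a^{\ast}(O)\bigr)(1-R)\,\big|\,X,Z\bigr].
\]
Decomposing $a^{\ast}(O)=a_1^{\ast}(L)R+a_2^{\ast}(X,Z)(1-R)$, the factor $(1-R)$ kills the $a_1^{\ast}R$ piece, forcing $a_2^{\ast}(X,Z)=E[m(L)\,|\,X,R=0,Z]=E[k\cdot IF_{\phi,1}+C-C^{\dag}\,|\,R=0,X,Z]=a_{c,\phi}$, which is exactly the stated normalization. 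Since $a_1^{\ast}$ is unconstrained and irrelevant, this pins down the solution uniquely and yields the claimed description of $\widetilde{\mathcal{N}}^{O,\bot}$.

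The one genuinely new point relative to Lemma~1, and the step I would flag as the conceptual crux, is to check that the summand $k\cdot IF_{\phi,1}$ is not already captured by the $C-C^{\dag}$ construction. Under the exclusion restriction (i), $Y$ is independent of $Z$ given $X$, so $E[Y\,|\,Z,X]=E[Y\,|\,X]$, and the projection of $IF_{\phi,1}=Y-\phi$ collapses:
\[
IF_{\phi,1}^{\dag}=E[Y-\phi\,|\,Z,X]+E[Y-\phi\,|\,Y,X]-E[Y-\phi\,|\,X]=Y-\phi=IF_{\phi,1}.
\]
Hence $IF_{\phi,1}-IF_{\phi,1}^{\dag}=0$, so the target-parameter direction cannot be reproduced by any choice of $C$ and must be carried as the separate term $k\cdot IF_{\phi,1}$. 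This confirms that $\widetilde{\mathcal{N}}_1^{\bot}$ is the correct enlargement of $\mathcal{N}_1^{\bot}$ for the functional $\phi$; apart from this observation the computation is structurally identical to that of Lemma~1, and I would not expect any further obstacle.
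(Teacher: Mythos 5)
Your proposal is correct and follows exactly the route the paper intends: the paper's proof of Lemma 2 is simply the remark that it "is similar to that of Lemma 1," and your argument carries out that reprise faithfully, replacing $C-C^{\dag}$ by $m(L)=k\cdot IF_{\phi,1}+C-C^{\dag}$ and solving the same orthogonality condition against $\mathcal{N}_2$ to pin down $a_{2}^{\ast}(X,Z)=E[m(L)\,|\,R=0,X,Z]=a_{c,\phi}$. Your additional observation that $IF_{\phi,1}^{\dag}=IF_{\phi,1}$ under the exclusion restriction, so the $k\cdot IF_{\phi,1}$ direction is genuinely not reproducible by the $C-C^{\dag}$ construction, is a correct and worthwhile clarification, though not strictly needed for the characterization itself.
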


The proof is similar to that of Lemma 1. Next, lets suppose that (ii) does
not hold, and instead, we have (iii) a parametric model $\alpha _{y}\left(
Y,X,Z;\gamma \right) $ with unknown p-dimensional parameter $\gamma .$ \ Let 
$F_{t}\left( R,L\right) $ denote the complete data submodel indexed by $t$
such that  $F_{0}\left( R,L\right) =$ $F\left( R,L\right) .$\ Under the
submodel, let $\phi (\gamma _{t},t)$ denote the solution to 
\[
0=E_{t}\left\{ \widetilde{N}_{1}^{O,\bot }\left( a_{c,\phi };\phi (t),\gamma
_{t},t\right) \right\} \text{ for all }t\text{ in the model}
\]%
and therefore 
\begin{eqnarray*}
0 &=&\nabla _{t}E_{t}\left\{ \widetilde{N}_{1}^{O,\bot }\left( a_{c,\phi
};\phi (t),\gamma _{t},t\right) \right\}  \\
&=&E\left\{ \widetilde{N}_{1}^{O,\bot }\left( a_{c,\phi };\phi (\gamma
)\right) S\right\} +E\left\{ \nabla _{t}\widetilde{N}_{1}^{O,\bot }\left(
a_{c,\phi };\phi (t),\gamma _{t},t\right) \right\}  \\
&=&E\left\{ \widetilde{N}_{1}^{O,\bot }\left( a_{c,\phi };\phi (\gamma
)\right) S\right\} +E\left\{ \nabla _{\phi }\widetilde{N}_{1}^{O,\bot
}\left( a_{c,\phi };\phi \right) \right\} \nabla _{t}\phi (t) \\
&&+E\left\{ \nabla _{\gamma }\widetilde{N}_{1}^{O,\bot }\left( a_{c,\phi
};\phi ,\gamma \right) \right\} \nabla _{t}\gamma _{t} \\
&&+E\left\{ \nabla _{t}\widetilde{N}_{1}^{O,\bot }\left( a_{c,\phi };\phi
,\gamma ,t\right) \right\} 
\end{eqnarray*}%
Now since $\widetilde{N}_{1}^{O,\bot }\left( a_{c,\phi };\phi ,\gamma
,t\right) $ is orthogonal to all nuisance parameters in the model where $%
\left( \phi ,\gamma \right) $ is known $E\left\{ \nabla _{t}\widetilde{N}%
_{1}^{O,\bot }\left( a_{c,\phi };\phi ,\gamma ,t\right) \right\} =0,$
therefore, we get 
\[
\nabla _{t}\phi (t)=-E\left\{ \nabla _{\phi }\widetilde{N}_{1}^{O,\bot
}\left( a_{c,\phi };\phi \right) \right\} ^{-1}\times \left( E\left\{ 
\widetilde{N}_{1}^{O,\bot }\left( a_{c,\phi };\phi (\gamma )\right)
S\right\} +E\left\{ \nabla _{\gamma }\widetilde{N}_{1}^{O,\bot }\left(
a_{c,\phi };\phi ,\gamma \right) \right\} \nabla _{t}\gamma _{t}\right) 
\]

Note that by Lemma 1%
\[
\nabla _{t}\gamma _{t}=E\left( N_{1}^{O,\bot }\left( a_{d}\right) S\right) 
\]%
where $N_{1}^{O,\bot }\left( a_{d}\right) $ $\in \mathcal{N}^{O,\bot }$ with 
$a_{d}=E\left[ D-D^{\dag }|R=0,X,Z\right] $ with $D$ an arbitrary
p-dimensional function of $L$ $.$ Therefore, we conclude that 
\begin{eqnarray*}
\nabla _{t}\phi (t) &=&-E\left\{ \nabla _{\phi }\widetilde{N}_{1}^{O,\bot
}\left( a_{c,\phi };\phi \right) \right\} ^{-1} \\
&&\times E\left\{ \left[ \widetilde{N}_{1}^{O,\bot }\left( a_{c,\phi };\phi
(\gamma )\right) +E\left\{ \nabla _{\gamma }\widetilde{N}_{1}^{O,\bot
}\left( a_{c,\phi };\phi ,\gamma \right) \right\} N_{1}^{O,\bot }\left(
a_{c}\right) \right] S\right\} 
\end{eqnarray*}%
proving that the orthocomplement to the nuisance tangent space in the model
given by (i) and (iii) is given by 
\[
\widetilde{N}_{1}^{O,\bot }\left( a_{c,\phi };\phi (\gamma )\right)
+E\left\{ \nabla _{\gamma }\widetilde{N}_{1}^{O,\bot }\left( a_{c,\phi
};\phi ,\gamma \right) \right\} N_{1}^{O,\bot }\left( a_{d}\right) 
\]%
Now, we note that $\widetilde{N}_{1}^{O,\bot }\left( a_{c,\phi };\phi
(\gamma )\right) $ can be written $\widetilde{N}_{1}^{O,\bot }\left(
a_{c};\phi (\gamma )\right) +\widetilde{N}_{1}^{O,\bot }\left( a_{\phi
};\phi (\gamma )\right) $ where $a_{c}=E\left[ C-C^{\dag }|R=0,X,Z\right] $
and $a_{\phi }=E\left[ k\cdot IF_{\phi ,1}|R=0,X,Z\right] $

$a_{c,\phi }=E\left[ k\cdot IF_{\phi ,1}+C-C^{\dag }|R=0,X,Z\right] .$ 

Let 

$M=\widetilde{N}_{1}^{O,\bot }\left( a_{c^{\ast }};\phi \right) =\Pi \left( 
\widetilde{N}_{1}^{O,\bot }\left( a_{\phi };\phi (\gamma )\right) |\left\{ 
\widetilde{N}_{1}^{O,\bot }\left( a_{c};\phi (\gamma )\right) :c\right\}
\right) ,$ 

and let $U=N_{1}^{O,\bot }\left( a_{d^{\ast }}\right) $ denote the efficient
influence function of $\gamma .$ Then we have that the efficient influence
function of $\phi $ is given by 
\[
\widetilde{N}_{1}^{O,\bot }\left( a_{\phi };\phi \right) -M+E\left\{ \nabla
_{\gamma }\left[ \widetilde{N}_{1}^{O,\bot }\left( a_{\phi };\gamma \right)
-M\left( \gamma \right) \right] \right\} U
\]%
since $\widetilde{N}_{1}^{O,\bot }\left( a_{\phi };\phi \right) -M$ is in
the tangent space of the model, and so is $U.$

In the special case where $Z$ and $Y$ are binary, $C-C^{\dag }$ can be
written%
\[
b\left( X\right) \left\{ Y-E\left( Y|X\right) \right\} \left\{ Z-E\left(
Z|X\right) \right\} 
\]%
for some function $b,$ so that 
\begin{eqnarray*}
\widetilde{N}_{1}^{O,\bot }\left( a_{c};\phi (\gamma )\right)  &=&b\left(
X\right) \times \left\{ R\left\{ Y-E\left( Y|X\right) \right\} \left\{
Z-E\left( Z|X\right) \right\} /\pi \left( L\right) +\right.  \\
&&(1-R)E\left[ \left\{ Y-E\left( Y|X\right) \right\} \left\{ Z-E\left(
Z|X\right) \right\} |X,R=0,Z\right]  \\
&&\left. -RE\left[ (1-R)E\left[ \left\{ Y-E\left( Y|X\right) \right\}
\left\{ Z-E\left( Z|X\right) \right\} |X,R=0,Z\right] |L\right] /\pi \left(
L\right) \right\}  \\
&=&b\left( X\right) \times W
\end{eqnarray*}%
Therefore, letting $H=\widetilde{N}_{1}^{O,\bot }\left( a_{\phi };\phi
(\gamma )\right) $

\begin{eqnarray*}
M &=&\Pi \left( \widetilde{N}_{1}^{O,\bot }\left( a_{\phi };\phi (\gamma
)\right) |\left\{ \widetilde{N}_{1}^{O,\bot }\left( a_{c};\phi (\gamma
)\right) :c\right\} \right)  \\
&=&E\left\{ HW|X\right\} E\left\{ W^{2}|X\right\} ^{-1}W
\end{eqnarray*}%
and $U=N_{1}^{O,\bot }\left( a_{d^{\ast }}\right) $ solves%
\[
E\left\{ N_{1}^{O,\bot }\left( a_{d^{\ast }}\right) N_{1}^{O,\bot }\left(
a_{d}\right) \right\} =E\left\{ \nabla _{\gamma }N_{1}^{O,\bot }\left(
a_{d};\gamma \right) \right\} \text{ for all }D.
\]%
one can verify that $N_{1}^{O,\bot }\left( a_{d^{\ast }}\right) =D^{\ast
}\left( X\right) \times W\left( \gamma \right) $ where 
\[
D^{\ast }\left( X\right) =E\left\{ W\left( \gamma \right) ^{\otimes
2}|X\right\} ^{-1}E\left\{ \nabla _{\gamma }W\left( \gamma \right)
|X\right\} 
\]

Next, we illustrate the result by constructing
a locally efficient estimator of $\left( \zeta ,\phi \right) $ in the case
where $Z$ and $Y$ are both binary. In this vein, let $L=(X,Z,Y)$ and define 
\begin{align*}
W &=W\left( \zeta _{0}\right) =R\left\{ Y-E(Y|X)\right\} \left\{
Z-E(Z|X)\right\} /\pi \left( L\right)  \\
&+(1-R)E\left[ \left( Y-E(Y|X)\right) \left\{ Z-E(Z|X)\right\} |X,R=0,Z%
\right]  \\
&-RE\left[ (1-R)E\left[ \left\{ Y-E\left( Y|X\right) \right\} \left\{
Z-E(Z|X\right\} |X,R=0,Z\right] |L\right] /\pi \left( L\right).
\end{align*}
A one-step locally efficient estimator of $\zeta _{0}$ in $\mathcal{M}_{np}$
is given by 
\begin{align}\label{eq:zeta}
\widehat{\zeta }_{EFF}=\widehat{\zeta }_{DR}-\mathbb{P}_{n}\left[ \nabla
_{\zeta }\widehat{ES}_{\zeta }|_{\widehat{\zeta }_{DR}}\right] ^{-1}\widehat{%
ES}_{\zeta },
\end{align}
where $\widehat{ES}_{\zeta }$ is the efficient score $ES_{\zeta }$ of $\zeta 
$ evaluated at the estimated intersection submodel $\mathcal{M}_{IPW}\cap 
\mathcal{M}_{OR}$, where 
\[
ES_{\zeta }=E\left[ W\left( \zeta \right) ^{2}|X\right] ^{-1}E\left[ \nabla
_{\zeta }W\left( \zeta \right) |X\right] W\left( \zeta \right). 
\]
Furthermore, let  $\mathbf{u}^{\ast }\left( X,Y\right) \mathbf{=}Y$ and $%
\widehat{\mathbf{G}}^{DR}$ equal to $\mathbf{G}^{DR}\left( R,X,Y,Z;\zeta ,%
\mathbf{u}^{\ast }\right) \ $ evaluated at the estimated intersection
submodel $\mathcal{M}_{IPW}\cap \mathcal{M}_{OR}$, $\widehat{\zeta }%
_{EFF}$ substituted in for $\zeta$. Then, the efficient estimator of $\phi $
is given by 
\begin{align}\label{eq:phi}
\mathbb{P}_{n}\left[ \widehat{\mathbf{G}}^{DR}-\widehat{E}\left[ W\left( 
\widehat{\zeta }_{EFF}\right) ^{2}|X\right] ^{-1}\widehat{E}\left[ \widehat{%
\mathbf{G}}^{DR}W\left( \widehat{\zeta }_{EFF}\right) |X\right] W\left( 
\widehat{\zeta }_{EFF}\right) \right],
\end{align}
where $\widehat{E}$ is the expectation under the estimated intersection
submodel with $\zeta $ estimated efficiently using $\widehat{\zeta }_{EFF}$.

When both $Z$ and $Y$ contain continuous components, the efficient influence functions for $\phi$ and $\zeta$ are in general not available in closed forms, in the sense that they cannot be explicitly expressed as functions of the true distribution. We adopt the general strategy proposed in \citet{Newey1993419} (see also \citet{10.2307/27798905}) to obtain an approximately locally efficient estimator by taking a basis system $\rho_j(Y, X, Z)$ $(j = 1,...)$ of functions dense in $L_2$, such as tensor products of trigonometric, wavelets or polynomial bases. For approximate efficiency, in practice we let $C(Y,X,Z)=\sum_{j=1}^K h_j\rho_j(Y,X,Z)$ for some finite $K$, where $(h_1,...,h_K)^T \in \Re^K$ are constants.  

We first derive an approximately locally efficient estimator for $\zeta$, with influence function ${IF}_{\tilde{\rho}_K,\zeta}$ where $\tilde{\rho}_K=\left\{\rho_1,...\rho_K\right\}$ is the vector of first K basis functions. Let $\kappa_j=N^{O,\bot }\left( a_{\rho_j};\zeta\right)$ and $\tilde{\kappa}=\left\{\kappa_1,...,\kappa_K \right\}^T$, so that ${IF}_{\tilde{\rho}_K,\zeta}=h^{\ast T}\tilde{\kappa}$ for some $h^{\ast}=(h^{\ast}_1,...,h^{\ast}_K)^T\in \Re^K$. By Theorem 5.3 of \citet{newey}, $E\left\{ h^T \tilde{\kappa} h^{\ast T}\tilde{\kappa}\right\}= E\left\{ h^T \nabla _{\gamma }\tilde{\kappa} \right\} \phantom{-}\forall h\in\Re^K.$ It follows that $h^{\ast}=E\left\{\tilde{\kappa}\tilde{\kappa}^T\right\}^{-1}E\left\{ \nabla _{\gamma }\tilde{\kappa} \right\}$. A one-step approximately efficient estimator of $\zeta _{0}$ in $\mathcal{M}_{np}$
is given by 
\begin{align}\label{eq:zeta}
\widehat{\zeta }_{K}=\widehat{\zeta }_{DR}-\mathbb{P}_{n}\left[ \nabla
_{\zeta }\widehat{IF}_{\tilde{\rho}_K,\zeta},\middle|\,_{\widehat{\zeta }_{DR}}\right] ^{-1}\widehat{IF}_{\tilde{\rho}_K,\zeta}.
\end{align}
$\widehat{\zeta }_{DR}$ is the doubly robust estimate for $\zeta_0$ and 
$$
\widehat{IF}_{\tilde{\rho}_K,\zeta}=\left[\widehat{E}\left\{\tilde{\kappa}\tilde{\kappa}^T\right\}^{-1}\widehat{E}\left\{ \nabla _{\gamma }\tilde{\kappa} \right\}\right]^T\tilde{\kappa},
$$
where $\widehat{E}$ is the expectation under the estimated intersection submodel $\mathcal{M}_{IPW}\cap \mathcal{M}_{OR}$. Under standard regularity conditions, the influence function of the one-step updated estimator $\widehat{\zeta }_{K}$ is asymptotically equivalent to that of the estimator $\tilde{\zeta }_{K}$ which solves $\mathbb{P}_{n}\left\{\widehat{IF}_{\tilde{\rho}_K,\zeta}(\tilde{\zeta }_{K})\right\}=0$ \citep{bickel1998efficient}. In particular, the inverse of the asymptotic variance of $\tilde{\zeta }_{K}$ at the intersection submodel is 
\begin{align*}
\Omega_K &= E\left\{ \nabla _{\gamma }\tilde{\kappa} \vert_{\zeta_0}\right\}^T {E}\left\{\tilde{\kappa}\tilde{\kappa}^T\right\}^{-1}E\left\{ \nabla _{\gamma }\tilde{\kappa} \vert_{\zeta_0}\right\}\\
&=E\left\{ S_{\zeta}\tilde{\kappa}^T \right\} {E}\left\{\tilde{\kappa}\tilde{\kappa}^T\right\}^{-1}E\left\{S_{\zeta}\tilde{\kappa}^T\right\}^T,
\end{align*} 
evaluated at $\zeta=\zeta_0$, and $S_{\zeta}$ is the score vector with respect to $\zeta$. Thus, $\Omega_K$ is the variance of the population least squares regression of $S_{\zeta}$ on the linear span of $\tilde{\kappa}$. Since $\tilde{\rho}_K$ is dense in $L_2$, as the dimension $K \to \infty$ the linear span of $\tilde{\kappa}$ recovers the orthocomplement nuisance tangent space $\mathcal{N}^{O,\bot }$ so that $\Omega_K  \to || \Pi\left(S_{\zeta}|\mathcal{N}^{O,\bot }\right)||^2=\text{var}\left(S_{\zeta, \text{eff}}\right)$, the semiparametric information bound for estimating $\zeta_0$ in the union model $\mathcal{M}_{IPW}\cup \mathcal{M}_{OR}$.

Let $H=\widetilde{N}_{1}^{O,\bot }\left( a_{\phi };\phi(\gamma )\right) $ and $\varrho_j=\widetilde{N}_{1}^{O,\bot }\left(a_{\rho_j};\phi(\gamma )\right) $. Then the unique projection of $H$ onto the linear subspace spanned by $\varrho=\left\{\varrho_1,...,\varrho_K \right\}^T$, i.e. $P=\left\{h^T \varrho\phantom{-}\text{for}\phantom{-}h=(h_1,...,h_K)^T\in \Re^K\right\}$, is given by $$M=h_0^T\varrho,$$ where $h_0^T=E(H\varrho^T)\left\{E(\varrho\varrho^T)\right\}^{-1}$ \citep{tsiatis}. Accordingly, the approximate efficient influence function of $\phi$ is given by
\[
H-M+E\left\{ \nabla
_{\gamma }\left( H-M\right) \right\}{IF}_{\tilde{\rho}_K,\zeta},
\]%
and the approximate efficient estimator of $\phi_0 $
is given by 
\begin{align}\label{eq:phi}
\widehat{\phi}_K=\mathbb{P}_{n}\left[ \widehat{\mathbf{G}}^{DR}-\widehat{E}(H\varrho^T)\left\{\widehat{E}(\varrho\varrho^T)\right\}^{-1}\varrho \right], 
\end{align}
where $%
\widehat{\mathbf{G}}^{DR}$ equal to $\mathbf{G}^{DR}\left( R,X,Y,Z;\widehat{\zeta }_{K} ,%
\mathbf{u}^{\ast }\right) \ $ evaluated at the estimated intersection
submodel $\mathcal{M}_{IPW}\cap \mathcal{M}_{OR}$ with $\mathbf{u}^{\ast }\left( X,Y\right) \mathbf{=}Y$ and $\widehat{E}$ is the expectation also at the estimated intersection model evaluated at $\zeta=\widehat{\zeta }_{K}$. The estimator $\widehat{\phi}_K$ is consistent and asymptotically normal in the semiparametric union model $\mathcal{M}_{IPW}\cup \mathcal{M}_{OR}$; furthermore, analogous to the earlier argument on the semiparametric efficiency of $\widehat{\zeta }_{K}$ as $K \to \infty$, it can be shown that the asymptotic variance of $n^{1/2}(\widehat{\phi}_K -\phi_0)$ nearly attains the semiparametric efficiency bound for the union model at the intersection submodel $\mathcal{M}_{IPW}\cap \mathcal{M}_{OR}$ with $K$ chosen sufficiently large.

\newpage
\section*{R Code for Simulation Study}
\begin{Verbatim}[fontsize=\scriptsize]
rm(list=ls())

#sample size
n = 5000
#number of replications
iter = 1000


library("BB")
library("numDeriv")
set.seed(8)

expit <- function(x) {1/(1+exp(-x)) }

ipw.conv <- numeric(iter)
ipw.par  <-matrix(0,iter,5)
ipw.est  <- numeric(iter)
ipw.var  <- numeric(iter)
ipw.com  <- numeric(iter)
ipw.full <- numeric(iter)
ipw.sb   <- numeric(iter)
ipw.sbvar <- numeric(iter)

imp.sb   <- numeric(iter)
imp.est  <- numeric(iter)
imp.var  <- numeric(iter)
imp.sbvar  <- numeric(iter)

imp.sb.m   <- numeric(iter)
imp.est.m  <- numeric(iter)
imp.var.m  <- numeric(iter)
imp.sbvar.m  <- numeric(iter)

ipw.est.m <- numeric(iter)
ipw.sb.m  <- numeric(iter)
ipw.var.m  <- numeric(iter)
ipw.sbvar.m  <- numeric(iter)

dr.sb.pm  <- numeric(iter)
dr.est.pm <- numeric(iter)
dr.var.pm  <- numeric(iter)
dr.sbvar.pm <- numeric(iter)

dr.sb.bm  <- numeric(iter)
dr.est.bm <- numeric(iter)
dr.var.bm  <- numeric(iter)
dr.sbvar.bm <- numeric(iter)

dr.sb   <- numeric(iter)
dr.est  <- numeric(iter)
dr.var  <- numeric(iter)
dr.sbvar <- numeric(iter)

eff.sb  <- numeric(iter)
eff.est  <- numeric(iter)
eff.est2  <- numeric(iter)

ipw.convm <- numeric(iter)
ipw.parm  <-matrix(0,iter,2)
ipw.estm  <- numeric(iter)
ipw.varm  <- numeric(iter)
ipw.comm  <- numeric(iter)
ipw.sbm <- numeric(iter)
ipw.sbvarm <- numeric(iter)

#true value of E(Y)

true_phi <- 0.4*0.6*expit(1-1.2+1.5)+
	0.6*0.6*expit(1    +1.5)+
	0.4*0.4*expit(1-1.2    )+
	0.6*0.4*expit(1)

for (i in 1:iter) {

x1 <- rbinom(n,1,0.4)
x2 <- rbinom(n,1,0.6)
z  <- rbinom(n,1,expit(0.4+0.9*x1-0.7*x2-0.8*x1*x2))
y  <- rbinom(n,1,expit(1.0-1.2*x1+1.5*x2))

r  <- rbinom(n,1,expit(-1.5+2.5*z+0.8*x1-1.2*x2+1.8*y))
pz.x <- glm(z ~ x1 + x2+x1*x2, family="binomial")

#IPW estimation

ipw <- function(g) {
  h<- rep(0,5)
  h[1]<-sum(r/expit(g[1]+g[2]*z+g[3]*x1+g[4]*x2+g[5]*y)-1)
  h[2]<-sum((r/expit(g[1]+g[2]*z+g[3]*x1+g[4]*x2+g[5]*y)-1)*z)
  h[3]<-sum((r/expit(g[1]+g[2]*z+g[3]*x1+g[4]*x2+g[5]*y)-1)*x1)
  h[4]<-sum((r/expit(g[1]+g[2]*z+g[3]*x1+g[4]*x2+g[5]*y)-1)*x2)
  h[5]<-sum(r*y/expit(g[1]+g[2]*z+g[3]*x1+g[4]*x2+g[5]*y)*(z-pz.x$fit))
  h
}

t1 <- system.time(ans.ipw <-
  BBsolve(par = rep(0,5), fn = ipw,quiet=T))[1]


  ipw.conv[i]<-ans.ipw$conv
  ipw.par[i,]<-ans.ipw$par
  ipw.est[i] <-mean(r*y /expit(ans.ipw$par[1]+ans.ipw$par[2]*z+ans.ipw$par[3]*x1
                    +ans.ipw$par[4]*x2+ans.ipw$par[5]*y))
  ipw.com[i]<-mean(r*y)
  ipw.full[i] <- mean(y)
  ipw.sb[i] <- ans.ipw$par[5]

  #stimate asymptotic variance (stacking estimating functions)

M.ipw <- function(g) {
  h<- rep(0,10)

  #estimating functions for P(Z|X)
  h[1]<-sum(z-expit(g[1]+g[2]*x1+g[3]*x2+g[4]*x1*x2))
  h[2]<-sum(x1*(z-expit(g[1]+g[2]*x1+g[3]*x2+g[4]*x1*x2)))
  h[3]<-sum(x2*(z-expit(g[1]+g[2]*x1+g[3]*x2+g[4]*x1*x2)))
  h[4]<-sum(x1*x2*(z-expit(g[1]+g[2]*x1+g[3]*x2+g[4]*x1*x2)))

  #estimating functions for propensity score
  h[5]<-sum(r/expit(g[5]+g[6]*z+g[7]*x1+g[8]*x2+g[9]*y)-1)
  h[6]<-sum((r/expit(g[5]+g[6]*z+g[7]*x1+g[8]*x2+g[9]*y)-1)*z)
  h[7]<-sum((r/expit(g[5]+g[6]*z+g[7]*x1+g[8]*x2+g[9]*y)-1)*x1)
  h[8]<-sum((r/expit(g[5]+g[6]*z+g[7]*x1+g[8]*x2+g[9]*y)-1)*x2)
  h[9]<-sum(r*y/expit(g[5]+g[6]*z+g[7]*x1+g[8]*x2+g[9]*y)*(z-expit(g[1]+g[2]*x1+g[3]*x2+g[4]*x1*x2)))

  #estimating function for E(Y)
  h[10]<-sum(r*y /expit(g[5]+g[6]*z+g[7]*x1+g[8]*x2+g[9]*y)-g[10])	
  h		
}

  dM <- jacobian(func=M.ipw,x=c(pz.x$coef,ans.ipw$par,ipw.est[i]))/n


mm.ipw <- function(g) {
  rbind(			
  #estimating functions for P(Z|X)
  (z-expit(g[1]+g[2]*x1+g[3]*x2+g[4]*x1*x2)),
  (x1*(z-expit(g[1]+g[2]*x1+g[3]*x2+g[4]*x1*x2))),
  (x2*(z-expit(g[1]+g[2]*x1+g[3]*x2+g[4]*x1*x2))),
  (x1*x2*(z-expit(g[1]+g[2]*x1+g[3]*x2+g[4]*x1*x2))),
  #estimating functions for propensity score
  (r/expit(g[5]+g[6]*z+g[7]*x1+g[8]*x2+g[9]*y)-1),
  ((r/expit(g[5]+g[6]*z+g[7]*x1+g[8]*x2+g[9]*y)-1)*z),
  ((r/expit(g[5]+g[6]*z+g[7]*x1+g[8]*x2+g[9]*y)-1)*x1),
  ((r/expit(g[5]+g[6]*z+g[7]*x1+g[8]*x2+g[9]*y)-1)*x2),
  (r*y/expit(g[5]+g[6]*z+g[7]*x1+g[8]*x2+g[9]*y)*(z-expit(g[1]+g[2]*x1+g[3]*x2+g[4]*x1*x2))),

  #estimating function for E(Y)
  (r*y /expit(g[5]+g[6]*z+g[7]*x1+g[8]*x2+g[9]*y)-g[10])	
)
}

m <- mm.ipw(c(pz.x$coef,ans.ipw$par,ipw.est[i]))
ipw.var[i]<-diag(solve(dM)%*%(m%*%t(m)/n)%*%t(solve(dM))/n)[10]
ipw.sbvar[i] <-diag(solve(dM)%*%(m%*%t(m)/n)%*%t(solve(dM))/n)[9]

#OR estimation
	
#estimate complete-case outcome pdf (saturated model)
x1.cc <- x1[r==1]
x2.cc <- x2[r==1]
z.cc <- z[r==1]
y.cc <- y[r==1]
		
pcc <- glm(y.cc ~ x1.cc+x2.cc+z.cc+x1.cc*x2.cc+x1.cc*z.cc+x2.cc*z.cc+x1.cc*x2.cc*z.cc, family="binomial")

#estimate E[Y|X] by IPW
py.x <- function(g) {
  h <- rep(0,3)
  h[1]<-sum(r/expit(ans.ipw$par[1]+ans.ipw$par[2]*z+ans.ipw$par[3]*x1
                    +ans.ipw$par[4]*x2+ans.ipw$par[5]*y)*(y-expit(g[1]+g[2]*x1+g[3]*x2)))
  h[2]<-sum(r/expit(ans.ipw$par[1]+ans.ipw$par[2]*z+ans.ipw$par[3]*x1
                    +ans.ipw$par[4]*x2+ans.ipw$par[5]*y)*(y-expit(g[1]+g[2]*x1+g[3]*x2))*x1)
  h[3]<-sum(r/expit(ans.ipw$par[1]+ans.ipw$par[2]*z+ans.ipw$par[3]*x1
                    +ans.ipw$par[4]*x2+ans.ipw$par[5]*y)*(y-expit(g[1]+g[2]*x1+g[3]*x2))*x2)
  h
}

t1 <- system.time(ans.pyx <-
  BBsolve(par = c(1,-1.5,0.8), fn = py.x, method=3, control = list(M=50),quiet=T))[1]

py.x.fit <- expit(ans.pyx$par[1]+ans.pyx$par[2]*x1+ans.pyx$par[3]*x2)


#outcome pdf when R=0

p.unobs <- function(b,y,x1,x2,z) {
  prob <- expit(pcc$coef[1]+pcc$coef[2]*x1+pcc$coef[3]*x2+pcc$coef[4]*z
                +pcc$coef[5]*x1*x2+pcc$coef[6]*x1*z+pcc$coef[7]*x2*z+pcc$coef[8]*x1*x2*z)
  denom<- exp(-b)*prob+exp(0)*(1-prob)
  return ( y*(exp(-b)*prob/denom) + (1-y)*(exp(0)*(1-prob)/denom) )
}

p.unobs2 <- function(b,d, y,x1,x2,z) {
  prob <- expit(d[1]+d[2]*x1+d[3]*x2+d[4]*z+d[5]*x1*x2+d[6]*x1*z+d[7]*x2*z+d[8]*x1*x2*z)
  denom<- exp(-b)*prob+exp(0)*(1-prob)
  return ( y*(exp(-b)*prob/denom) + (1-y)*(exp(0)*(1-prob)/denom) )
}

#estimating function (15)
  imp <- function(h) {
  sum( (z-pz.x$fit)*((1-r)*((1)*p.unobs(h,1,x1,x2,z)+(0)*p.unobs(h,0,x1,x2,z) )+r*(y)) )
}
imp.sb[i] <- uniroot(imp,c(-5,8))$root
imp.est[i]<- mean(r*y+(1-r)*(p.unobs(imp.sb[i],1,x1,x2,z)))

M.imp <- function(g) {
  h<- rep(0,14)
  #estimating functions for P(Z|X)
  h[1]<-sum(z-expit(g[1]+g[2]*x1+g[3]*x2+g[4]*x1*x2))
  h[2]<-sum(x1*(z-expit(g[1]+g[2]*x1+g[3]*x2+g[4]*x1*x2)))
  h[3]<-sum(x2*(z-expit(g[1]+g[2]*x1+g[3]*x2+g[4]*x1*x2)))
  h[4]<-sum(x1*x2*(z-expit(g[1]+g[2]*x1+g[3]*x2+g[4]*x1*x2)))
  #estimate outcome density parameters
  h[5]<-sum(r*(y-expit(g[5]+g[6]*x1+g[7]*x2+g[8]*z+g[9]*x1*x2+g[10]*x1*z+g[11]*x2*z+g[12]*x1*x2*z)) )
  h[6]<-sum(r*(y-expit(g[5]+g[6]*x1+g[7]*x2+g[8]*z+g[9]*x1*x2+g[10]*x1*z+g[11]*x2*z+g[12]*x1*x2*z))*x1 )
  h[7]<-sum(r*(y-expit(g[5]+g[6]*x1+g[7]*x2+g[8]*z+g[9]*x1*x2+g[10]*x1*z+g[11]*x2*z+g[12]*x1*x2*z))*x2 )
  h[8]<-sum(r*(y-expit(g[5]+g[6]*x1+g[7]*x2+g[8]*z+g[9]*x1*x2+g[10]*x1*z+g[11]*x2*z+g[12]*x1*x2*z))*z )
  h[9]<-sum(r*(y-expit(g[5]+g[6]*x1+g[7]*x2+g[8]*z+g[9]*x1*x2+g[10]*x1*z+g[11]*x2*z+g[12]*x1*x2*z))*x1*x2 )
  h[10]<-sum(r*(y-expit(g[5]+g[6]*x1+g[7]*x2+g[8]*z+g[9]*x1*x2+g[10]*x1*z+g[11]*x2*z+g[12]*x1*x2*z))*x1*z )
  h[11]<-sum(r*(y-expit(g[5]+g[6]*x1+g[7]*x2+g[8]*z+g[9]*x1*x2+g[10]*x1*z+g[11]*x2*z+g[12]*x1*x2*z))*x2*z )
  h[12]<-sum(r*(y-expit(g[5]+g[6]*x1+g[7]*x2+g[8]*z+g[9]*x1*x2+g[10]*x1*z+g[11]*x2*z+g[12]*x1*x2*z))*x1*x2*z )
  #estimating functions for selection bias
  h[13] <- sum( (z-expit(g[1]+g[2]*x1+g[3]*x2+g[4]*x1*x2))*((1-r)
  *(p.unobs2(g[13],d=c(g[5],g[6],g[7],g[8],g[9],g[10],g[11],g[12]),1,x1,x2,z))+r*y) )
  #estimating function for E(Y)
  h[14]<-sum(r*y+(1-r)*(p.unobs2(g[13],d=c(g[5],g[6],g[7],g[8],g[9],g[10],g[11],g[12]),1,x1,x2,z))-g[14])	
  h		
}

dM <- jacobian(func=M.imp,x=c(pz.x$coef,pcc$coef,imp.sb[i],imp.est[i]))/n

mm.imp <- function(g) {
  rbind(		
  (z-expit(g[1]+g[2]*x1+g[3]*x2+g[4]*x1*x2)),
  (x1*(z-expit(g[1]+g[2]*x1+g[3]*x2+g[4]*x1*x2))),
  (x2*(z-expit(g[1]+g[2]*x1+g[3]*x2+g[4]*x1*x2))),
  (x1*x2*(z-expit(g[1]+g[2]*x1+g[3]*x2+g[4]*x1*x2))),	
  #estimate outcome density parameters
  (r*(y-expit(g[5]+g[6]*x1+g[7]*x2+g[8]*z+g[9]*x1*x2+g[10]*x1*z+g[11]*x2*z+g[12]*x1*x2*z)) ),
  (r*(y-expit(g[5]+g[6]*x1+g[7]*x2+g[8]*z+g[9]*x1*x2+g[10]*x1*z+g[11]*x2*z+g[12]*x1*x2*z))*x1 ),
  (r*(y-expit(g[5]+g[6]*x1+g[7]*x2+g[8]*z+g[9]*x1*x2+g[10]*x1*z+g[11]*x2*z+g[12]*x1*x2*z))*x2 ),
  (r*(y-expit(g[5]+g[6]*x1+g[7]*x2+g[8]*z+g[9]*x1*x2+g[10]*x1*z+g[11]*x2*z+g[12]*x1*x2*z))*z ),
  (r*(y-expit(g[5]+g[6]*x1+g[7]*x2+g[8]*z+g[9]*x1*x2+g[10]*x1*z+g[11]*x2*z+g[12]*x1*x2*z))*x1*x2 ),
  (r*(y-expit(g[5]+g[6]*x1+g[7]*x2+g[8]*z+g[9]*x1*x2+g[10]*x1*z+g[11]*x2*z+g[12]*x1*x2*z))*x1*z ),
  (r*(y-expit(g[5]+g[6]*x1+g[7]*x2+g[8]*z+g[9]*x1*x2+g[10]*x1*z+g[11]*x2*z+g[12]*x1*x2*z))*x2*z ),
  (r*(y-expit(g[5]+g[6]*x1+g[7]*x2+g[8]*z+g[9]*x1*x2+g[10]*x1*z+g[11]*x2*z+g[12]*x1*x2*z))*x1*x2*z ),
  #estimating functions for selection bias
  (z-expit(g[1]+g[2]*x1+g[3]*x2+g[4]*x1*x2))*((1-r)
  *(p.unobs2(g[13],d=c(g[5],g[6],g[7],g[8],g[9],g[10],g[11],g[12]),1,x1,x2,z))+r*y),
  #estimating function for E(Y)
  (r*y+(1-r)*(p.unobs2(g[13],d=c(g[5],g[6],g[7],g[8],g[9],g[10],g[11],g[12]),1,x1,x2,z))-g[14])	
  )
}

m <- mm.imp(g=c(pz.x$coef,pcc$coef,imp.sb[i],imp.est[i]))
imp.var[i]<-diag(solve(dM)%*%(m%*%t(m)/n)%*%t(solve(dM))/n)[14]
imp.sbvar[i] <-diag(solve(dM)%*%(m%*%t(m)/n)%*%t(solve(dM))/n)[13]

##Doubly robust estimation

dr <- function(g) {
  sum(
    (z-pz.x$fit)*(r/expit(ans.ipw$par[1]+ans.ipw$par[2]*z+ans.ipw$par[3]*x1
    +ans.ipw$par[4]*x2+g*y)*(y-p.unobs(g,1,x1,x2,z))
    +p.unobs(g,1,x1,x2,z))
  )
}

dr.sb[i]<-uniroot(dr,c(-5,8))$root

dr.est[i] <-mean(r/expit(ans.ipw$par[1]+ans.ipw$par[2]*z+ans.ipw$par[3]*x1+ans.ipw$par[4]*x2+dr.sb[i]*y)
                 *(y-p.unobs(dr.sb[i],1,x1,x2,z))
                 +p.unobs(dr.sb[i],1,x1,x2,z))

M.dr <- function(g) {
  h<- rep(0,18)

  #estimating functions for P(Z|X)
  h[1]<-sum(z-expit(g[1]+g[2]*x1+g[3]*x2+g[4]*x1*x2))
  h[2]<-sum(x1*(z-expit(g[1]+g[2]*x1+g[3]*x2+g[4]*x1*x2)))
  h[3]<-sum(x2*(z-expit(g[1]+g[2]*x1+g[3]*x2+g[4]*x1*x2)))
  h[4]<-sum(x1*x2*(z-expit(g[1]+g[2]*x1+g[3]*x2+g[4]*x1*x2)))

  #estimating functions for propensity score
  h[5]<-sum(r/expit(g[5]+g[6]*z+g[7]*x1+g[8]*x2+g[9]*y)-1)
  h[6]<-sum((r/expit(g[5]+g[6]*z+g[7]*x1+g[8]*x2+g[9]*y)-1)*z)
  h[7]<-sum((r/expit(g[5]+g[6]*z+g[7]*x1+g[8]*x2+g[9]*y)-1)*x1)
  h[8]<-sum((r/expit(g[5]+g[6]*z+g[7]*x1+g[8]*x2+g[9]*y)-1)*x2)
  h[9]<-sum((r/expit(g[5]+g[6]*z+g[7]*x1+g[8]*x2+g[9]*y)
             *(y-p.unobs2(g[9],d=c(g[10],g[11],g[12],g[13],g[14],g[15],g[16],g[17]),1,x1,x2,z))
             +p.unobs2(g[9],d=c(g[10],g[11],g[12],g[13],g[14],g[15],g[16],g[17]),1,x1,x2,z))
             *(z-expit(g[1]+g[2]*x1+g[3]*x2+g[4]*x1*x2)))

  #estimate outcome density parameters
  h[10]<-sum(r*(y-expit(g[10]+g[11]*x1+g[12]*x2+g[13]*z+g[14]*x1*x2+g[15]*x1*z+g[16]*x2*z+g[17]*x1*x2*z)) )
  h[11]<-sum(r*(y-expit(g[10]+g[11]*x1+g[12]*x2+g[13]*z+g[14]*x1*x2+g[15]*x1*z+g[16]*x2*z+g[17]*x1*x2*z))*x1 )
  h[12]<-sum(r*(y-expit(g[10]+g[11]*x1+g[12]*x2+g[13]*z+g[14]*x1*x2+g[15]*x1*z+g[16]*x2*z+g[17]*x1*x2*z))*x2 )
  h[13]<-sum(r*(y-expit(g[10]+g[11]*x1+g[12]*x2+g[13]*z+g[14]*x1*x2+g[15]*x1*z+g[16]*x2*z+g[17]*x1*x2*z))*z )
  h[14]<-sum(r*(y-expit(g[10]+g[11]*x1+g[12]*x2+g[13]*z+g[14]*x1*x2+g[15]*x1*z+g[16]*x2*z+g[17]*x1*x2*z))*x1*x2 )
  h[15]<-sum(r*(y-expit(g[10]+g[11]*x1+g[12]*x2+g[13]*z+g[14]*x1*x2+g[15]*x1*z+g[16]*x2*z+g[17]*x1*x2*z))*x1*z )
  h[16]<-sum(r*(y-expit(g[10]+g[11]*x1+g[12]*x2+g[13]*z+g[14]*x1*x2+g[15]*x1*z+g[16]*x2*z+g[17]*x1*x2*z))*x2*z )
  h[17]<-sum(r*(y-expit(g[10]+g[11]*x1+g[12]*x2+g[13]*z+g[14]*x1*x2+g[15]*x1*z+g[16]*x2*z+g[17]*x1*x2*z))*x1*x2*z )

  #estimating function for E(Y)
  h[18]<-sum((r/expit(g[5]+g[6]*z+g[7]*x1+g[8]*x2+g[9]*y)
             *(y-p.unobs2(g[9],d=c(g[10],g[11],g[12],g[13],g[14],g[15],g[16],g[17]),1,x1,x2,z))
             +p.unobs2(g[9],d=c(g[10],g[11],g[12],g[13],g[14],g[15],g[16],g[17]),1,x1,x2,z))-g[18])	
  h		
}

  dM <- jacobian(func=M.dr,x=c(pz.x$coef,ans.ipw$par[1:4],dr.sb[i],pcc$coef,dr.est[i]))/n


mm.dr<- function(g) {
rbind(			
  #estimating functions for P(Z|X)
  (z-expit(g[1]+g[2]*x1+g[3]*x2+g[4]*x1*x2)),
  (x1*(z-expit(g[1]+g[2]*x1+g[3]*x2+g[4]*x1*x2))),
  (x2*(z-expit(g[1]+g[2]*x1+g[3]*x2+g[4]*x1*x2))),
  (x1*x2*(z-expit(g[1]+g[2]*x1+g[3]*x2+g[4]*x1*x2))),

  #estimating functions for propensity score
  (r/expit(g[5]+g[6]*z+g[7]*x1+g[8]*x2+g[9]*y)-1),
  ((r/expit(g[5]+g[6]*z+g[7]*x1+g[8]*x2+g[9]*y)-1)*z),
  ((r/expit(g[5]+g[6]*z+g[7]*x1+g[8]*x2+g[9]*y)-1)*x1),
  ((r/expit(g[5]+g[6]*z+g[7]*x1+g[8]*x2+g[9]*y)-1)*x2),
  ((r/expit(g[5]+g[6]*z+g[7]*x1+g[8]*x2+g[9]*y)
    *(y-p.unobs2(g[9],d=c(g[10],g[11],g[12],g[13],g[14],g[15],g[16],g[17]),1,x1,x2,z))
    +p.unobs2(g[9],d=c(g[10],g[11],g[12],g[13],g[14],g[15],g[16],g[17]),1,x1,x2,z))
    *(z-expit(g[1]+g[2]*x1+g[3]*x2+g[4]*x1*x2))),

  #estimate outcome density parameters
  (r*(y-expit(g[10]+g[11]*x1+g[12]*x2+g[13]*z+g[14]*x1*x2+g[15]*x1*z+g[16]*x2*z+g[17]*x1*x2*z)) ),
  (r*(y-expit(g[10]+g[11]*x1+g[12]*x2+g[13]*z+g[14]*x1*x2+g[15]*x1*z+g[16]*x2*z+g[17]*x1*x2*z))*x1 ),
  (r*(y-expit(g[10]+g[11]*x1+g[12]*x2+g[13]*z+g[14]*x1*x2+g[15]*x1*z+g[16]*x2*z+g[17]*x1*x2*z))*x2 ),
  (r*(y-expit(g[10]+g[11]*x1+g[12]*x2+g[13]*z+g[14]*x1*x2+g[15]*x1*z+g[16]*x2*z+g[17]*x1*x2*z))*z ),
  (r*(y-expit(g[10]+g[11]*x1+g[12]*x2+g[13]*z+g[14]*x1*x2+g[15]*x1*z+g[16]*x2*z+g[17]*x1*x2*z))*x1*x2 ),
  (r*(y-expit(g[10]+g[11]*x1+g[12]*x2+g[13]*z+g[14]*x1*x2+g[15]*x1*z+g[16]*x2*z+g[17]*x1*x2*z))*x1*z ),
  (r*(y-expit(g[10]+g[11]*x1+g[12]*x2+g[13]*z+g[14]*x1*x2+g[15]*x1*z+g[16]*x2*z+g[17]*x1*x2*z))*x2*z ),
  (r*(y-expit(g[10]+g[11]*x1+g[12]*x2+g[13]*z+g[14]*x1*x2+g[15]*x1*z+g[16]*x2*z+g[17]*x1*x2*z))*x1*x2*z ),

  #estimating function for E(Y)
  ((r/expit(g[5]+g[6]*z+g[7]*x1+g[8]*x2+g[9]*y)
    *(y-p.unobs2(g[9],d=c(g[10],g[11],g[12],g[13],g[14],g[15],g[16],g[17]),1,x1,x2,z))
    +p.unobs2(g[9],d=c(g[10],g[11],g[12],g[13],g[14],g[15],g[16],g[17]),1,x1,x2,z))-g[18])	

  )
}
  m <- mm.dr(c(pz.x$coef,ans.ipw$par[1:4],dr.sb[i],pcc$coef,dr.est[i]))
  dr.var[i]<-diag(solve(dM)%*%(m%*%t(m)/n)%*%t(solve(dM))/n)[18]
  dr.sbvar[i] <-diag(solve(dM)%*%(m%*%t(m)/n)%*%t(solve(dM))/n)[9]
	
  ##misspecified propensity score model
		
  ipw.m <- function(g) {
  h<- rep(0,5)
  h[1]<-sum(r/expit(g[1]+g[2]*z+g[3]*x1+g[4]*x1*z+g[5]*y)-1)
  h[2]<-sum((r/expit(g[1]+g[2]*z+g[3]*x1+g[4]*x1*z+g[5]*y)-1)*z)
  h[3]<-sum((r/expit(g[1]+g[2]*z+g[3]*x1+g[4]*x1*z+g[5]*y)-1)*x1)
  h[4]<-sum((r/expit(g[1]+g[2]*z+g[3]*x1+g[4]*x1*z+g[5]*y)-1)*x1*z)
  h[5]<-sum(r*y/expit(g[1]+g[2]*z+g[3]*x1+g[4]*x1*z+g[5]*y)*(z-pz.x$fit))
  return(sum(h*h))
}
  ans.ipw.m <- optim( rep(0,5), ipw.m, gr = NULL)$par

  ipw.est.m[i] <-mean(r*y /(expit(ans.ipw.m[1]+ans.ipw.m[2]*z
                      +ans.ipw.m[3]*x1+ans.ipw.m[4]*x1*z+ans.ipw.m[5]*y))) 
  ipw.sb.m[i] <- ans.ipw.m[5]

  #stimate asymptotic variance (stacking estimating functions)

  M.ipw.m <- function(g) {
  h<- rep(0,10)

  #estimating functions for P(Z|X)
  h[1]<-sum(z-expit(g[1]+g[2]*x1+g[3]*x2+g[4]*x1*x2))
  h[2]<-sum(x1*(z-expit(g[1]+g[2]*x1+g[3]*x2+g[4]*x1*x2)))
  h[3]<-sum(x2*(z-expit(g[1]+g[2]*x1+g[3]*x2+g[4]*x1*x2)))
  h[4]<-sum(x1*x2*(z-expit(g[1]+g[2]*x1+g[3]*x2+g[4]*x1*x2)))

  #estimating functions for propensity score
  h[5]<-sum(r/expit(g[5]+g[6]*z+g[7]*x1+g[8]*x1*z+g[9]*y)-1)
  h[6]<-sum((r/expit(g[5]+g[6]*z+g[7]*x1+g[8]*x1*z+g[9]*y)-1)*z)
  h[7]<-sum((r/expit(g[5]+g[6]*z+g[7]*x1+g[8]*x1*z+g[9]*y)-1)*x1)
  h[8]<-sum((r/expit(g[5]+g[6]*z+g[7]*x1+g[8]*x1*z+g[9]*y)-1)*x1*z)
  h[9]<-sum(r*y/expit(g[5]+g[6]*z+g[7]*x1+g[8]*x1*z+g[9]*y)*(z-expit(g[1]+g[2]*x1+g[3]*x2+g[4]*x1*x2)))

  #estimating function for E(Y)
  h[10]<-sum(r*y /expit(g[5]+g[6]*z+g[7]*x1+g[8]*x1*z+g[9]*y)-g[10])	
  h		
}

dM <- jacobian(func=M.ipw.m,x=c(pz.x$coef,ans.ipw.m,ipw.est.m[i]))/n


mm.ipw.m <- function(g) {
  rbind(			
  #estimating functions for P(Z|X)
  (z-expit(g[1]+g[2]*x1+g[3]*x2+g[4]*x1*x2)),
  (x1*(z-expit(g[1]+g[2]*x1+g[3]*x2+g[4]*x1*x2))),
  (x2*(z-expit(g[1]+g[2]*x1+g[3]*x2+g[4]*x1*x2))),
  (x1*x2*(z-expit(g[1]+g[2]*x1+g[3]*x2+g[4]*x1*x2))),

  #estimating functions for propensity score
  (r/expit(g[5]+g[6]*z+g[7]*x1+g[8]*x1*z+g[9]*y)-1),
  ((r/expit(g[5]+g[6]*z+g[7]*x1+g[8]*x1*z+g[9]*y)-1)*z),
  ((r/expit(g[5]+g[6]*z+g[7]*x1+g[8]*x1*z+g[9]*y)-1)*x1),
  ((r/expit(g[5]+g[6]*z+g[7]*x1+g[8]*x1*z+g[9]*y)-1)*x1*z),
  (r*y/expit(g[5]+g[6]*z+g[7]*x1+g[8]*x1*z+g[9]*y)*(z-expit(g[1]+g[2]*x1+g[3]*x2+g[4]*x1*x2))),

  #estimating function for E(Y)
  (r*y /expit(g[5]+g[6]*z+g[7]*x1+g[8]*x1*z+g[9]*y)-g[10])
  )
}

  m <- mm.ipw.m(c(pz.x$coef,ans.ipw.m,ipw.est.m[i]))
  ipw.var.m[i]<-diag(solve(dM)%*%(m%*%t(m)/n)%*%t(solve(dM))/n)[10]
  ipw.sbvar.m[i] <-diag(solve(dM)%*%(m%*%t(m)/n)%*%t(solve(dM))/n)[9]

dr.pm <- function(g) {
  h<- rep(0,5)
  h[1]<-sum(r/expit(g[1]+g[2]*z+g[3]*x1+g[4]*x1*z+g[5]*y)-1)
  h[2]<-sum((r/expit(g[1]+g[2]*z+g[3]*x1+g[4]*x1*z+g[5]*y)-1)*z)
  h[3]<-sum((r/expit(g[1]+g[2]*z+g[3]*x1+g[4]*x1*z+g[5]*y)-1)*x1)
  h[4]<-sum((r/expit(g[1]+g[2]*z+g[3]*x1+g[4]*x1*z+g[5]*y)-1)*x1*z)

			
  h[5]<-sum(
           (z-pz.x$fit)*(r/(expit(g[1]+g[2]*z+g[3]*x1+g[4]*x1*z+g[5]*y))*(y-p.unobs(g[5],1,x1,x2,z))
           +p.unobs(g[5],1,x1,x2,z))
			)
}
  t1 <- system.time(ans.dr.pm <-
  BBsolve(par = rep(0,5), fn = dr.pm, quiet=T))[1]

  dr.sb.pm[i]<-ans.dr.pm$par[5]

  dr.est.pm[i] <- mean(r/(expit(ans.dr.pm$par[1]+ans.dr.pm$par[2]*z+ans.dr.pm$par[3]*x1
                       +ans.dr.pm$par[4]*x1*z+dr.sb.pm[i]*y))*(y-p.unobs(dr.sb.pm[i],1,x1,x2,z))
                       +p.unobs(dr.sb.pm[i],1,x1,x2,z))
	
M.dr.pm <- function(g) {
  h<- rep(0,18)

  #estimating functions for P(Z|X)
  h[1]<-sum(z-expit(g[1]+g[2]*x1+g[3]*x2+g[4]*x1*x2))
  h[2]<-sum(x1*(z-expit(g[1]+g[2]*x1+g[3]*x2+g[4]*x1*x2)))
  h[3]<-sum(x2*(z-expit(g[1]+g[2]*x1+g[3]*x2+g[4]*x1*x2)))
  h[4]<-sum(x1*x2*(z-expit(g[1]+g[2]*x1+g[3]*x2+g[4]*x1*x2)))

  #estimating functions for propensity score
  h[5]<-sum(r/expit(g[5]+g[6]*z+g[7]*x1+g[8]*x1*z+g[9]*y)-1)
  h[6]<-sum((r/expit(g[5]+g[6]*z+g[7]*x1+g[8]*x1*z+g[9]*y)-1)*z)
  h[7]<-sum((r/expit(g[5]+g[6]*z+g[7]*x1+g[8]*x1*z+g[9]*y)-1)*x1)
  h[8]<-sum((r/expit(g[5]+g[6]*z+g[7]*x1+g[8]*x1*z+g[9]*y)-1)*x2)
  h[9]<-sum((r/expit(g[5]+g[6]*z+g[7]*x1+g[8]*x1*z+g[9]*y)
             *(y-p.unobs2(g[9],d=c(g[10],g[11],g[12],g[13],g[14],g[15],g[16],g[17]),1,x1,x2,z))
             +p.unobs2(g[9],d=c(g[10],g[11],g[12],g[13],g[14],g[15],g[16],g[17]),1,x1,x2,z))
             *(z-expit(g[1]+g[2]*x1+g[3]*x2+g[4]*x1*x2)))

  #estimate outcome density parameters
  h[10]<-sum(r*(y-expit(g[10]+g[11]*x1+g[12]*x2+g[13]*z+g[14]*x1*x2+g[15]*x1*z+g[16]*x2*z+g[17]*x1*x2*z)) )
  h[11]<-sum(r*(y-expit(g[10]+g[11]*x1+g[12]*x2+g[13]*z+g[14]*x1*x2+g[15]*x1*z+g[16]*x2*z+g[17]*x1*x2*z))*x1 )
  h[12]<-sum(r*(y-expit(g[10]+g[11]*x1+g[12]*x2+g[13]*z+g[14]*x1*x2+g[15]*x1*z+g[16]*x2*z+g[17]*x1*x2*z))*x2 )
  h[13]<-sum(r*(y-expit(g[10]+g[11]*x1+g[12]*x2+g[13]*z+g[14]*x1*x2+g[15]*x1*z+g[16]*x2*z+g[17]*x1*x2*z))*z )
  h[14]<-sum(r*(y-expit(g[10]+g[11]*x1+g[12]*x2+g[13]*z+g[14]*x1*x2+g[15]*x1*z+g[16]*x2*z+g[17]*x1*x2*z))*x1*x2 )
  h[15]<-sum(r*(y-expit(g[10]+g[11]*x1+g[12]*x2+g[13]*z+g[14]*x1*x2+g[15]*x1*z+g[16]*x2*z+g[17]*x1*x2*z))*x1*z )
  h[16]<-sum(r*(y-expit(g[10]+g[11]*x1+g[12]*x2+g[13]*z+g[14]*x1*x2+g[15]*x1*z+g[16]*x2*z+g[17]*x1*x2*z))*x2*z )
  h[17]<-sum(r*(y-expit(g[10]+g[11]*x1+g[12]*x2+g[13]*z+g[14]*x1*x2+g[15]*x1*z+g[16]*x2*z+g[17]*x1*x2*z))*x1*x2*z )

  #estimating function for E(Y)
  h[18]<-sum((r/expit(g[5]+g[6]*z+g[7]*x1+g[8]*x1*z+g[9]*y)
            *(y-p.unobs2(g[9],d=c(g[10],g[11],g[12],g[13],g[14],g[15],g[16],g[17]),1,x1,x2,z))
            +p.unobs2(g[9],d=c(g[10],g[11],g[12],g[13],g[14],g[15],g[16],g[17]),1,x1,x2,z))-g[18])	
  h		
}

  dM <- jacobian(func=M.dr.pm,x=c(pz.x$coef,ans.dr.pm$par,pcc$coef,dr.est.pm[i]))/n


mm.dr.pm<- function(g) {
  rbind(			
  (z-expit(g[1]+g[2]*x1+g[3]*x2+g[4]*x1*x2)),
  (x1*(z-expit(g[1]+g[2]*x1+g[3]*x2+g[4]*x1*x2))),
  (x2*(z-expit(g[1]+g[2]*x1+g[3]*x2+g[4]*x1*x2))),
  (x1*x2*(z-expit(g[1]+g[2]*x1+g[3]*x2+g[4]*x1*x2))),

  #estimating functions for propensity score
  (r/expit(g[5]+g[6]*z+g[7]*x1+g[8]*x1*z+g[9]*y)-1),
  ((r/expit(g[5]+g[6]*z+g[7]*x1+g[8]*x1*z+g[9]*y)-1)*z),
  ((r/expit(g[5]+g[6]*z+g[7]*x1+g[8]*x1*z+g[9]*y)-1)*x1),
  ((r/expit(g[5]+g[6]*z+g[7]*x1+g[8]*x1*z+g[9]*y)-1)*x2),
  ((r/expit(g[5]+g[6]*z+g[7]*x1+g[8]*x1*z+g[9]*y)
    *(y-p.unobs2(g[9],d=c(g[10],g[11],g[12],g[13],g[14],g[15],g[16],g[17]),1,x1,x2,z))
    +p.unobs2(g[9],d=c(g[10],g[11],g[12],g[13],g[14],g[15],g[16],g[17]),1,x1,x2,z))
    *(z-expit(g[1]+g[2]*x1+g[3]*x2+g[4]*x1*x2))),

  #estimate outcome density parameters
  (r*(y-expit(g[10]+g[11]*x1+g[12]*x2+g[13]*z+g[14]*x1*x2+g[15]*x1*z+g[16]*x2*z+g[17]*x1*x2*z)) ),
  (r*(y-expit(g[10]+g[11]*x1+g[12]*x2+g[13]*z+g[14]*x1*x2+g[15]*x1*z+g[16]*x2*z+g[17]*x1*x2*z))*x1 ),
  (r*(y-expit(g[10]+g[11]*x1+g[12]*x2+g[13]*z+g[14]*x1*x2+g[15]*x1*z+g[16]*x2*z+g[17]*x1*x2*z))*x2 ),
  (r*(y-expit(g[10]+g[11]*x1+g[12]*x2+g[13]*z+g[14]*x1*x2+g[15]*x1*z+g[16]*x2*z+g[17]*x1*x2*z))*z ),
  (r*(y-expit(g[10]+g[11]*x1+g[12]*x2+g[13]*z+g[14]*x1*x2+g[15]*x1*z+g[16]*x2*z+g[17]*x1*x2*z))*x1*x2 ),
  (r*(y-expit(g[10]+g[11]*x1+g[12]*x2+g[13]*z+g[14]*x1*x2+g[15]*x1*z+g[16]*x2*z+g[17]*x1*x2*z))*x1*z ),
  (r*(y-expit(g[10]+g[11]*x1+g[12]*x2+g[13]*z+g[14]*x1*x2+g[15]*x1*z+g[16]*x2*z+g[17]*x1*x2*z))*x2*z ),
  (r*(y-expit(g[10]+g[11]*x1+g[12]*x2+g[13]*z+g[14]*x1*x2+g[15]*x1*z+g[16]*x2*z+g[17]*x1*x2*z))*x1*x2*z ),

  #estimating function for E(Y)
  ((r/expit(g[5]+g[6]*z+g[7]*x1+g[8]*x1*z+g[9]*y)
  *(y-p.unobs2(g[9],d=c(g[10],g[11],g[12],g[13],g[14],g[15],g[16],g[17]),1,x1,x2,z))
  +p.unobs2(g[9],d=c(g[10],g[11],g[12],g[13],g[14],g[15],g[16],g[17]),1,x1,x2,z))-g[18])			
  )
}

m <- mm.dr.pm(c(pz.x$coef,ans.dr.pm$par,pcc$coef,dr.est.pm[i]))
dr.var.pm[i]<-diag(solve(dM)%*%(m%*%t(m)/n)%*%t(solve(dM))/n)[18]
dr.sbvar.pm[i] <-diag(solve(dM)%*%(m%*%t(m)/n)%*%t(solve(dM))/n)[9]

###########misspecified outcome regression####################

pcc.m <- glm(y.cc ~ x1.cc, family="binomial")

p.unobs.m <- function(b,y,x1,x2,z) {
  prob <- expit(pcc.m$coef[1]+pcc.m$coef[2]*x1)
  denom<- exp(-b)*prob+exp(0)*(1-prob)
  return ( y*(exp(-b)*prob/denom) + (1-y)*(exp(0)*(1-prob)/denom) )
}

p.unobs.m.2 <- function(b,d,y,x1,x2,z) {
  prob <- expit(d[1]+d[2]*x1)
  denom<- exp(-b)*prob+exp(0)*(1-prob)
  return ( y*(exp(-b)*prob/denom) + (1-y)*(exp(0)*(1-prob)/denom) )
}


imp.m <- function(h) {
  sum( (z-pz.x$fit)*((1-r)*((1)*p.unobs.m(h,1,x1,x2,z)+(0)*p.unobs.m(h,0,x1,x2,z) )+r*(y)) )
}
imp.sb.m[i] <- uniroot(imp.m,c(-5,8))$root
imp.est.m[i]<- mean(r*y+(1-r)*(p.unobs.m(imp.sb.m[i],1,x1,x2,z)))

M.imp.m <- function(g) {
  h<- rep(0,8)
  #estimating functions for P(Z|X)
  h[1]<-sum(z-expit(g[1]+g[2]*x1+g[3]*x2+g[4]*x1*x2))
  h[2]<-sum(x1*(z-expit(g[1]+g[2]*x1+g[3]*x2+g[4]*x1*x2)))
  h[3]<-sum(x2*(z-expit(g[1]+g[2]*x1+g[3]*x2+g[4]*x1*x2)))
  h[4]<-sum(x1*x2*(z-expit(g[1]+g[2]*x1+g[3]*x2+g[4]*x1*x2)))
  #estimate outcome density parameters
  h[5]<-sum(r*(y-expit(g[5]+g[6]*x1)) )
  h[6]<-sum(r*(y-expit(g[5]+g[6]*x1))*x1 )

  #estimating functions for selection bias
  h[7] <- sum( (z-expit(g[1]+g[2]*x1+g[3]*x2+g[4]*x1*x2))*((1-r)
                *(p.unobs.m.2(g[7],d=c(g[5],g[6]),1,x1,x2,z))+r*y) )
  #estimating function for E(Y)
  h[8]<-sum(r*y+(1-r)*(p.unobs.m.2(g[7],d=c(g[5],g[6]),1,x1,x2,z))-g[8])	
  h		
}

dM <- jacobian(func=M.imp.m,x=c(pz.x$coef,pcc.m$coef,imp.sb.m[i],imp.est.m[i]))/n

mm.imp.m <- function(g) {
  rbind(		
  #estimating functions for P(Z|X)
  (z-expit(g[1]+g[2]*x1+g[3]*x2+g[4]*x1*x2)),
  (x1*(z-expit(g[1]+g[2]*x1+g[3]*x2+g[4]*x1*x2))),
  (x2*(z-expit(g[1]+g[2]*x1+g[3]*x2+g[4]*x1*x2))),
  (x1*x2*(z-expit(g[1]+g[2]*x1+g[3]*x2+g[4]*x1*x2))),
  #estimate outcome density parameters
  (r*(y-expit(g[5]+g[6]*x1)) ),
  (r*(y-expit(g[5]+g[6]*x1))*x1 ),

  #estimating functions for selection bias
  ( (z-expit(g[1]+g[2]*x1+g[3]*x2+g[4]*x1*x2))*((1-r)
  *(p.unobs.m.2(g[7],d=c(g[5],g[6]),1,x1,x2,z))+r*y) ),
  #estimating function for E(Y)
  (r*y+(1-r)*(p.unobs.m.2(g[7],d=c(g[5],g[6]),1,x1,x2,z))-g[8])	
  )
}

m <- mm.imp.m(c(pz.x$coef,pcc.m$coef,imp.sb.m[i],imp.est.m[i]))
imp.var.m[i]<-diag(solve(dM)%*%(m%*%t(m)/n)%*%t(solve(dM))/n)[8]
imp.sbvar.m[i] <-diag(solve(dM)%*%(m%*%t(m)/n)%*%t(solve(dM))/n)[7]

dr.bm <- function(g) {
  sum(
     (z-pz.x$fit)*(r/expit(ans.ipw$par[1]+ans.ipw$par[2]*z
     +ans.ipw$par[3]*x1+ans.ipw$par[4]*x2+g*y)*(y-p.unobs.m(g,1,x1,x2,z))
     +p.unobs.m(g,1,x1,x2,z))
  )
}

dr.sb.bm[i]<-uniroot(dr.bm,c(-5,8))$root

dr.est.bm[i] <-mean(r/expit(ans.ipw$par[1]+ans.ipw$par[2]*z+ans.ipw$par[3]*x1
                    +ans.ipw$par[4]*x2+dr.sb.bm[i]*y)*(y-p.unobs.m(dr.sb.bm[i],1,x1,x2,z))
                    +p.unobs.m(dr.sb.bm[i],1,x1,x2,z))

M.dr.bm <- function(g) {
  h<- rep(0,12)

  #estimating functions for P(Z|X)
  h[1]<-sum(z-expit(g[1]+g[2]*x1+g[3]*x2+g[4]*x1*x2))
  h[2]<-sum(x1*(z-expit(g[1]+g[2]*x1+g[3]*x2+g[4]*x1*x2)))
  h[3]<-sum(x2*(z-expit(g[1]+g[2]*x1+g[3]*x2+g[4]*x1*x2)))
  h[4]<-sum(x1*x2*(z-expit(g[1]+g[2]*x1+g[3]*x2+g[4]*x1*x2)))

  #estimating functions for propensity score
  h[5]<-sum(r/expit(g[5]+g[6]*z+g[7]*x1+g[8]*x2+g[9]*y)-1)
  h[6]<-sum((r/expit(g[5]+g[6]*z+g[7]*x1+g[8]*x2+g[9]*y)-1)*z)
  h[7]<-sum((r/expit(g[5]+g[6]*z+g[7]*x1+g[8]*x2+g[9]*y)-1)*x1)
  h[8]<-sum((r/expit(g[5]+g[6]*z+g[7]*x1+g[8]*x2+g[9]*y)-1)*x2)
  h[9]<-sum((r/expit(g[5]+g[6]*z+g[7]*x1+g[8]*x2+g[9]*y)*(y-p.unobs.m.2(g[9],d=c(g[10],g[11]),1,x1,x2,z))
  +p.unobs.m.2(g[9],d=c(g[10],g[11]),1,x1,x2,z))
  *(z-expit(g[1]+g[2]*x1+g[3]*x2+g[4]*x1*x2)))

  #estimate outcome density parameters
  h[10]<-sum(r*(y-expit(g[10]+g[11]*x1)) )
  h[11]<-sum(r*(y-expit(g[10]+g[11]*x1))*x1 )

  #estimating function for E(Y)
  h[12]<-sum((r/expit(g[5]+g[6]*z+g[7]*x1+g[8]*x2+g[9]*y)
             *(y-p.unobs.m.2(g[9],d=c(g[10],g[11]),1,x1,x2,z))
             +p.unobs.m.2(g[9],d=c(g[10],g[11]),1,x1,x2,z))-g[12])	
  h		
}

dM <- jacobian(func=M.dr.bm,x=c(pz.x$coef,ans.ipw$par[1:4],dr.sb.bm[i],pcc.m$coef,dr.est.bm[i]))/n


mm.dr.bm<- function(g) {
  rbind(			
  #estimating functions for P(Z|X)
  (z-expit(g[1]+g[2]*x1+g[3]*x2+g[4]*x1*x2)),
  (x1*(z-expit(g[1]+g[2]*x1+g[3]*x2+g[4]*x1*x2))),
  (x2*(z-expit(g[1]+g[2]*x1+g[3]*x2+g[4]*x1*x2))),
  (x1*x2*(z-expit(g[1]+g[2]*x1+g[3]*x2+g[4]*x1*x2))),

  #estimating functions for propensity score
  (r/expit(g[5]+g[6]*z+g[7]*x1+g[8]*x2+g[9]*y)-1),
  ((r/expit(g[5]+g[6]*z+g[7]*x1+g[8]*x2+g[9]*y)-1)*z),
  ((r/expit(g[5]+g[6]*z+g[7]*x1+g[8]*x2+g[9]*y)-1)*x1),
  ((r/expit(g[5]+g[6]*z+g[7]*x1+g[8]*x2+g[9]*y)-1)*x2),
  ((r/expit(g[5]+g[6]*z+g[7]*x1+g[8]*x2+g[9]*y)*(y-p.unobs.m.2(g[9],d=c(g[10],g[11]),1,x1,x2,z))
    +p.unobs.m.2(g[9],d=c(g[10],g[11]),1,x1,x2,z))*(z-expit(g[1]+g[2]*x1+g[3]*x2+g[4]*x1*x2))),

  #estimate outcome density parameters
  (r*(y-expit(g[10]+g[11]*x1)) ),
  (r*(y-expit(g[10]+g[11]*x1))*x1 ),

  #estimating function for E(Y)
  ((r/expit(g[5]+g[6]*z+g[7]*x1+g[8]*x2+g[9]*y)
  *(y-p.unobs.m.2(g[9],d=c(g[10],g[11]),1,x1,x2,z))
    +p.unobs.m.2(g[9],d=c(g[10],g[11]),1,x1,x2,z))-g[12])	

)
}

m <- mm.dr.bm(c(pz.x$coef,ans.ipw$par[1:4],dr.sb.bm[i],pcc.m$coef,dr.est.bm[i]))
dr.var.bm[i]<-diag(solve(dM)%*%(m%*%t(m)/n)%*%t(solve(dM))/n)[12]
dr.sbvar.bm[i] <-diag(solve(dM)%*%(m%*%t(m)/n)%*%t(solve(dM))/n)[9]

print(i)
}
\end{Verbatim}
\end{document}